\numberwithin{equation}{section}
\newcommand{\nocontentsline}[3]{}
\newcommand{\tocless}[2]{\bgroup\let\addcontentsline=\nocontentsline#1{#2}\egroup}
\begin{document}

\title{Twisted characters and holomorphic symmetries}

\author{Ingmar Saberi}
\address{Mathematisches Institut der Universit\"at Heidelberg \\ Im Neuenheimer Feld 205 \\ 69120 Heidelberg \\ Deutschland}
\email{saberi@mathi.uni-heidelberg.de}

\author{Brian R. Williams}
\address{Department of Mathematics, Northeastern University \\ 567 Lake Hall \\ Boston, MA 02115 \\ U.S.A.}
\email{br.williams@northeastern.edu}

\begin{abstract}
We consider holomorphic twists of arbitrary supersymmetric theories in four dimensions. Working in the BV formalism, we rederive classical results characterizing the holomorphic twist of chiral and vector supermultiplets, computing the twist explicitly as a family over the space of nilpotent supercharges in minimal supersymmetry. The BV formalism allows one to work with or without auxiliary fields, according to preference; for chiral superfields, we show that the result of the twist is an identical BV theory, the holomorphic $\beta\gamma$ system with superpotential, independent of whether or not auxiliary fields are included. We compute the character of local operators in this holomorphic theory, demonstrating agreement of the free local operators with the usual index of free fields. The local operators with superpotential are computed via a spectral sequence, and are shown to agree with functions on a formal mapping space into the derived critical locus of the superpotential. We consider the holomorphic theory on various geometries, including Hopf manifolds and products of arbitrary pairs of Riemann surfaces, and offer some general remarks on dimensional reductions of holomorphic theories along the $(n-1)$-sphere to topological quantum mechanics. We also study an infinite-dimensional enhancement of the flavor symmetry in this example, to a recently-studied central extension of the derived holomorphic functions with values in the original Lie algebra that generalizes the familiar Kac--Moody enhancement in two-dimensional chiral theories. 
\end{abstract}

\maketitle
\thispagestyle{empty}

\newpage
\tableofcontents

\section{Introduction}
Twists of supersymmetric theories have been the subject of intense study over the past thirty years. Such twists produce simpler quantum field theories, which are of mathematical interest as sources of, or organizing principles for, invariants of the spacetimes on which they live,   and are often related to interesting gauge-theoretic moduli spaces associated to the spacetime. 
Perhaps one of the most familiar cases is the topological twist of $\N=2$ gauge theory considered in~\cite{WittenTwist}, for which the  relevant moduli space is the space of anti-self-dual instantons. In such a topologically twisted theory, deformations of the metric act trivially up to homotopy, so that the theory depends only on the smooth structure of the spacetime.  

Starting on an affine spacetime, a necessary condition for the existence of a topological twist is a nilpotent supercharge whose image under the bracket contains all of the translation operators. 
More generally, when no such supercharge exists, one can still define a more general class of twists by passing to the cohomology of a chosen nilpotent supercharge, with the caveat that the resulting theory may only make sense on manifolds of restricted holonomy.\footnote{In fact, such a  caveat applies even for topological supercharges; see~\cite{NV} for a general discussion.} An example of such a construction is the \emph{holomorphic twist}, which (due to general properties of Clifford multiplication) exists in any even dimension with any amount of supersymmetry. In these cases, the space of nullhomotopic translations is half-dimensional, and corresponds to a choice of complex structure on the spacetime. 
The theories which result from such a procedure depend only on the structure of the spacetime as a complex manifold, and can be defined generally on K\"ahler (or Calabi--Yau) manifolds; we will refer to such objects---whether or not they arise from the twist of a full $\N=1$ theory---as \emph{holomorphic field theories}.

Holomorphic twists have been previously considered in~\cite{JohansenHolo,NekThesis,CostelloHolomorphic};
as objects in their own right, holomorphic field theories in various guises have been studied in the works of \cite{NekThesis, NekCFT, NekChiral, BaulieuCS}, just for example. 
More recently, a program for studying twisted versions of supergravity and string theory in terms of Kodaira-Spencer theory and holomorphic Chern-Simons is currently being developed by Costello and Li in~\cite{CostelloLiSUGRA, CostelloLiTypeI} and Costello in~\cite{CostelloMtheory1, CostelloMtheory2}.
For related work, including supporting evidence of some conjectures of Costello and Li, see~\cite{CY4, NV}. 
In addition, a foundational mathematical treatment of holomorphic field theory is given in~\cite{BWhol}. 
In this paper, we will most closely borrow notations and conventions from the cited works~\cite{NV, BWhol}.

One advantage of holomorphic twists is that renormalization is significantly better behaved for holomorphic theories than for their untwisted parents.
Indeed, regularization in supersymmetric theories, especially gauge theories, is notoriously difficult. 
A salient feature of holomorphic theories is the existence of a gauge in which analytic difficulties become much more tractable.
Consequently, facets of these theories, such as their anomalies, can be cast in a more algebraic framework.
While we don't delve deeply into a utilization of such features in this paper, we refer the reader to~\cite{BWhol} for work on the renormalization of holomorphic field theories.

Another appealing aspect of holomorphic theories is the theory of their observables.
In complex dimension one, the local observables of a holomorphic theory are mathematically described by a vertex (or chiral) algebra \cite{BD}.
Likewise, on a global Riemann surface there is a rich theory of conformal blocks which describes how such vertex algebras are glued together. 
The partition functions of chiral theories have interesting modular properties, and a chiral theory can be obtained from a supersymmetric theory via a holomorphic twist; the partition  function  of the  holomorphic twist is then  the elliptic genus of the original theory. Moreover,  global symmetries of two-dimensional theories enhance to infinite-dimensional Kac--Moody symmetries of chiral theories. 
One philosophical point that we wish to explore in this note is the idea that these phenomena are not as peculiar to the two-dimensional case as is usually assumed. Rather, most of the above can  be interpreted in the setting of holomorphic twists of minimally supersymmetric theories in any even dimension. There are several good reasons the two-dimensional case is easier to see: firstly, due to the factorization of the two-dimensional wave equation, one does not need the notion of a holomorphic twist to arrive at chiral two-dimensional theories. In particular, supersymmetry is not an essential ingredient in dimension two. Secondly, the Dolbeault cohomology of punctured $\C^d$ is concentrated in a single degree only for $d=1$. For larger values of~$d$, it is therefore essential to work in a derived setting. The relevant analogues of Kac--Moody algebras were studied recently in~\cite{GwilliamWilliams,FHK}.

The theory of factorization algebras, as developed in \cite{CG1, CG2}, is a general mathematical tool that described the algebra of observables in any perturbative field theory. 
While the data of a factorization algebra can be quite unwieldy, in special situations---for example, in chiral CFTs or topological theories---factorization algebras admit efficient algebraic descriptions. 
For higher-dimensional holomorphic theories, it is tempting to speculate that this holomorphic factorization algebra should be analogous to a vertex algebra, and should reflect the structure of  the operator product expansion  in  the twisted theory. 
While we don't develop such a theory in this paper, we gesture at the existence of such a structure through quantities reminiscent of ordinary vertex algebras, such as a higher dimensional $q$-character. 

Throughout this paper, we work with $\N=1$ supersymmetric theories in four dimensions, focusing on the chiral multiplet with superpotential (the Wess--Zumino model).  
While we compute the twist of a general $\N=1$ theory in our language, and make use of the twisted vector multiplet to connect to the holomorphic flavor current multiplet, we do not provide explicit results for characters of local operators in gauge theory here, although the techniques to do so are largely developed. We hope to return to a more complete treatment in future work, with the goal of studying dualities, such as Seiberg duality, at the level of the holomorphic twist. 
The explicit description we give of the local operators in twisted $\N=1$ theories, coupled with an understanding of renormalization group flow in the holomorphic setting, should make it tractable to understand Seiberg dualities completely at the level of twisted local operators.

Here is an outline of the remainder of the paper. 
In~\S\ref{sec:prelim}, we recall basic facts about supersymmetry algebras and holomorphic twists, as well as setting up conventions for the remainder of the paper and giving some general discussion.
The key perspective we'd like to emphasize here is that---even if some of the calculations we perform are for a fixed choice of nilpotent supercharge---the formalism is set up in such a way that all constructions exist as a family over the nilpotence variety. In addition to being of theoretical interest, this will have applications later in the paper, when dimensional reduction is studied~\S\ref{sec:dimred}.

In \S\ref{sec:SUSY}, we discuss theories of $\N=1$ chiral multiplets in the BV formalism, and demonstrate the equivalence of the BV complex obtained in the auxiliary-field formulation with the higher-order BV complex for on-shell supersymmetry constructed by Baulieu \emph{et al.}~\cite{Baulieu-susy}.
The main novelty of this alternative formulation is of an off-shell description of $\N=1$ supersymmetry {\em without} the need to introduce auxiliary fields; the twist can be obtained equivalently from either description. 

The first part of \S\ref{sec:holo} recalls general facts about holomorphic field theories in the BV formalism. 
Then, we go on to compute the holomorphic twist of a theory of chiral matter with $F$-term interactions, both in the usual auxiliary field formulation and in the higher order BV setup.
We show that both BV complexes lead---in somewhat subtle fashion---to the same theory, the $\beta\gamma$ system with $F$-term interactions. 
We perform the twist as a family over the nilpotence variety of the supersymmetry algebra; in fact, both the tangent and normal bundles to the nilpotence variety play a role, the first generating deformations of complex structure on~$\R^4$ and the second being responsible for a holomorphic analogue of topological descent. 

In \S\ref{sec:characters} we give a definition of the higher dimensional local character for holomorphic theories in arbitrary dimension, generalizing the usual $q$-character definition for vertex algebras. 
We proceed to compute the character of the free $\beta\gamma$ system, demonstrating agreement with the $\N=1$ index as studied in~\cite{Rom}.
Furthermore, citing well-known results of supersymmetric localization in three and four dimensions, we show that the local character of the twisted theory agrees with its partition function on Hopf manifolds, which are complex surfaces topologically equivalent to~$S^1 \times S^3$. 
This is strong evidence of a holomorphic state-operator correspondence, which we conjecture that this class of holomorphic field theories satisfies.
Note that, while Hopf manifolds are not K\"ahler, they nevertheless have $SO(3)$ holonomy.
For a very thorough discussion of placing supersymmetric $\cN=1$ field theories on Hopf manifolds, as well as other $4$-manifolds, we refer the reader to~\cite{Closset1,Closset2}.
Our results presented here are complementary to this work, but are different in the sense that we study theories directly at the level of the holomorphic twist, and place the $\beta\gamma$ system on the Hopf manifold by exhibiting it as a quotient of punctured~$\C^2$, which is K\"ahler. We further argue that compactification of a general holomorphic theory in $d$ complex dimensions along~$S^{2d-1}$ gives rise to a topological quantum  mechanics, which is just a single associative algebra: in fact, this always takes the form of a dg Weyl algebra, and a natural module is given by the local operators in the original theory. (The quantum mechanics contains additional local operators, arising from nonlocal operators in the higher-dimensional theory wrapping a holomorphic $(d-1)$-cycle.) Proving an analogue of the Stone--von~Neumann theorem for a broad enough class of such dg~algebras should result in a general argument establishing the holomorphic state-operator correspondence.

In~\S\ref{sec:Fterms}, we  consider the spectral sequence induced by deforming the BV differential of the free theory by superpotential terms, and correspondingly compute the character in this case. 
Further, we introduce a chiral version of the Jacobi ring associated to a holomorphic potential, and show that it agrees with the holomorphic local operators of the theory. 
The fields on the $E_1$ page of the relevant spectral sequence consists of two copies of functions on the formal disk, tensored with dual vector spaces and placed in adjacent homological degrees; the differential arising from interactions, acting on operators, simply witnesses functions on this space as the Koszul complex for the  partial derivatives of the superpotential. One can thus interpret our result as an identification of the holomorphically twisted Wess--Zumino model with the sigma model on the derived critical locus of the superpotential. 

The goal of \S\ref{sec:symmetry} is to introduce an infinite-dimensional symmetry present in the twisted theory on $\CC^2$---by (derived) holomorphic functions on the spacetime, with values in the Lie  algebra of the flavor symmetry.
This type of symmetry was first discussed in~\cite{GwilliamWilliams} as a higher dimensional version of the Kac--Moody symmetry present in chiral CFT; the relevant algebras were also discussed in~\cite{FHK}. 
This symmetry is quite different from other versions of infinite chiral symmetries in four dimensions, for example in~\cite{BeemEtAl, JohansenKM}, in a few regards. 
Firstly, it is present even at the level of the twist of $\cN=1$ supersymmetry, rather than the $\cN=2$ required in~\cite{BeemEtAl}. As such, this algebra will also act in a holomorphic twist of any $\N=2$ theory, from which any other twist (including that considered in~\cite{BeemEtAl}) can be obtained by a further deformation of the differential. 
Secondly, this symmetry does not pick out any preferred $\CC$-plane (or a more general Riemann surface) inside of the four-dimensional manifold $\CC^2$; as such, we view it as more intrinsic symmetry of the twisted theory. 
At the level of algebras, however, our previous caveat is of course still valid: one must work in a derived way in order to see anything nontrivial. 
Indeed, instead of the state space having a symmetry by a Lie algebra (as occurs for affine Kac--Moody in complex dimension one), there is an $L_\infty$ algebra which acts: as previously mentioned, it arises as a central extension of derived sections of holomorphic functions on the complex manifold.

Finally, in \S\ref{sec:dimred} we demonstrate the compatibility of our calculations with dimensional reduction to a theory with in two dimensions along an arbitrary Riemann surfacae. 
In the case of a torus, this procedure produces the holomorphic 
twist of $\cN=(2,2)$; more generally, one finds a twist of a theory with $\cN=(0,2)$ supersymmetry. We also consider dimensional reduction along a plane which may not be a complex subspace of~$\C^2$, using our expression for the twist as a family over the space of complex structures on~$\R^4$; this produces either the holomorphic or $B$-type twist of the resulting $\N=(2,2)$ theory, and witnesses the spectral sequence between them~\cite{BPS-SS} as an instance of the Hodge-to-de-Rham spectral sequence.

\subsection*{Acknowledgements}
We thank K.~Costello, T.~Dimofte, R.~Eager, O.~Gwilliam, V.~Kac, D.~Pei, M.~Szczesny, and J.~Walcher for conversations and helpful advice related to many aspects of this work. I.S. thanks the Kavli Institute for Theoretical Physics, the Erwin-Schr\"odinger-Institut f\"ur mathematische Physik, the Center for Quantum Geometry of Moduli Spaces, and the Mathematisches Forschungsinstitut Oberwolfach for hospitality during the preparation of this work. 
B.W. thanks Northeastern University, the Banff International Research Station, the Aspen Center for Physics, and the Simons Center for hospitality during the preparation of this work.  
The work of I.S. was supported in part by the Deutsche Forschungsgemeinschaft, within the framework of the Exzellenzstrategie des Bundes und der L\"ander.
The work of B.W. was supported by Northeastern University and National Science Foundation Award DMS-1645877.

\section{$\N=1$ supersymmetry in four dimensions}
\label{sec:prelim}

We begin by recalling some standard facts and conventions of supersymmetric field theory.

\subsection{Conventions for operators in field theories}
\label{ssec:ops}

The building blocks of a classical field theory are its fields, which arise from local data on the spacetime manifold. 
The fields of a theory without defects define a locally free sheaf on the spacetime manifold $M$, and are typically given as the smooth sections of some (translation invariant) super vector bundle on~$M$. 
For instance, the \emph{chiral multiplet} in a four-dimensional $\N=1$ theory contains as component fields one complex scalar and one Weyl fermion:
\deq{ 
\phi \in \Map(\R^4,\C) = C^\infty(\RR^4) , \qquad
\psi \in \Map(\R^4,\Pi S_+) = C^\infty(\RR^4) \tensor_{\CC} \Pi S_+ .
}
These are both sections\footnote{By convention, sections or smooth functions are always {\em complex} valued.}
of trivial bundles on $\RR^4$, and we refer to the trivial bundle $\ul{\CC} \oplus \Pi \ul{S_+}$ simply as the chiral multiplet for the remainder of this section.%
\footnote{$\Pi(-)$ denotes parity shift with respect to the super grading.}

Suppose $E$ is a super vector bundle on a spacetime manifold $M$, defining a field theory whose {\em fields} are its sheaf~$\sE$ of smooth sections.
(In full generality, such as in gauge theories, $E$ will also carry a differential.)
If $x \in M$ is any point, we can speak of the {\em local operators} of the theory supported at $x \in M$. 
These are operators that depend algebraically (or formally algebraically) on the fields and their derivatives at the point $x$. 
Mathematically, the definition is the following.

\begin{dfn}
\label{dfn: susy ops}
Let $E$ be a super vector bundle on $M$ and $\sE$ its sheaf of smooth sections. 
The space of {\em local operators} of $\sE$ at $x \in M$ is the super vector space\footnote{Here, $\Hat{\Sym}(W) = \prod_{n \geq 0} \Sym^n(W)$ is the completed symmetric algebra.}
\[
  \Obs_x = \Hat{\Sym}(J^\infty E |_x)^\vee .
\]
Here, $J^\infty E$ denotes the super vector bundle of $\infty$-jets of $E$ and $J^\infty E|_x$ is its fiber at $x \in M$. 
\end{dfn}

In this work, we will mostly consider field theories defined on an affine space $\ST$, which will be $\R^n$ with a metric of either Euclidean or Lorentzian signature. (For most of the paper, $\ST$ will simply be Euclidean $\R^4$.) 
In this setting, it is natural to suppose that the bundle $E$ is translation invariant.
That is, we specify an isomorphism with the trivial bundle $E = \RR^n \times E_0$. (Note that translation invariance of a bundle is data, rather than a property!)
$E_0$ denotes the fiber of $E$ over $0 \in \RR^n$. 
For translation-invariant bundles, the spaces of local functionals over any two points are identified, so that it makes sense to write $O_x$ for the operator at $x\in\ST$ corresponding to $O_0 \in \Obs_0$.

\begin{eg}
  Take $E = \ul{\C}$ to be the trivial complex line bundle on~$\R^4$, so that $\phi \in \sE$ is a complex scalar field. An example of a local operator is given by 
  \deq[eq:exop]{
O_y(\phi) = \phi (y) \pdv{\phi}{x_1} (y) .
}
Since the chiral multiplet in four-dimensional $\N=1$ theories contains a complex scalar field, this expression will also define an operator in any $\N=1$ theory with chiral matter. 
\end{eg}
We note that it is standard in the physics literature to simply refer to an operator like~\eqref{eq:exop} using the notation
\[
\phi \pdv{\phi}{x_1} .
\]
This standard notation has the potential to lead to confusion between fields and operators. 
The distinction is conceptually important, though, as the following remark makes clear: while the fields of a theory are naturally a sheaf over the spacetime, operators with specified support naturally form a sort of \emph{(pre)cosheaf}.

\begin{rmk}
 As  we have emphasized, the fields of the theory are the smooth sections $\sE$ of the super vector bundle $E$. 
Intrinsically, the fields satisfy a sort of locality: they form a sheaf on spacetime, which for us is just $\ST$. 
One can define a more general class of operators (sometimes thought of as ``smeared'' operators in physics) by restricting the support not to  be pointlike, but to lie in a more general open set. That is, we could consider all functions on the sheaf $\sE$:
\[
\sO(\sE) = \Hat{\Sym}(\sE^\vee)  .
\]
Since $\sE$ is a sheaf, and we are taking an appropriate topological linear dual, this object behaves like a cosheaf: it makes sense to evaluate $\sO(\sE)$ on any open set $U \subset \ST$; and if $U \hookrightarrow V$ is an embedding of open sets, then there is a natural map $\sO(\sE)(U) \to \sO(\sE)(\ST)$. 
In fact, a more general structure is present:
the object $\sO(\sE)$ has the structure of a {\em factorization algebra} on $\ST$~\cite{CG1, CG2}.

This more general notion of an observable is related to the local operators we have just defined. 
We can evaluate the factorization algebra on a disk $D(x,r)$ centered at $x \in \ST$ to obtain the super vector space $\sO(\sE)(D(x,r))$. 
The local operators (with pointlike support) embed inside this space:
\deq{
  \Obs_x \hookrightarrow \sO(\sE)(D(x,r)).
}
In fact, there is a more precise relationship in the context of holomorphic theories, as we will point out in~\S\ref{sec: hol op}. 
\end{rmk}

The space of local operators is not quite the home for Lagrangians in a (supersymmetric) field theory. 
The difference is that we want to consider local operators that are only defined up total derivatives. 
The way to say this invariantly is the following. 
Notice that the bundle of jets $J^\infty E$ is not just a super vector bundle; in fact, it is a super $D$-module (in the appropriate super sense). 
In other words, it comes equipped with a canonical flat connection. 
Thus, as we vary the points $x \in \RR^n$ the local operators $\Obs_x$ also carry the structure of a $D$-module. 

Next, we recall the axiomatization of action functionals as integrals over Lagrangian densities. 
We borrow conventions from \cite{CosRenorm, CG2}. 

\begin{dfn}
\label{dfn: localfun}
The space of {\em local functionals} of a super vector bundle $E$ is 
\[
\oloc(\sE) = {\rm Dens}_{M} \tensor_{D_{M}} \sO_\text{red} (J^\infty E) .
\]
Here, $D_M$ is the algebra of differential operators on $M$, and $\sO_\text{red}(J^\infty E) = \prod_{n > 0} \Sym^n_{C^\infty_M} \left(J^\infty(E)^\vee\right)$ is the space of reduced functionals on jets.
\end{dfn}

A local functional encapsulates the data of a Lagrangian density defining a theory.
We will often write local functionals as operators, with the caveat that we are modding out by those functionals that are a total derivative. 

\begin{rmk}
  For a field theory on an affine spacetime $M \cong \R^n$, there is an action by the abelian (complex) Lie algebra of translations,
\begin{equation}
  \STlie = \CC^n = \Span_\CC \{\partial_{x_1}, \ldots, \partial_{x_n}\},
\end{equation}
on the space of local functionals. $\STlie$ is just $\ST_\C$, regarded as an abelian Lie algebra;
we will use $n$ for the real dimension of the spacetime, and later on $d = n/2$ for the complex dimension after the holomorphic twist.
We will mostly be interested in those local functionals that are invariant with respect to this action. 
Further, if $E$ is a translation invariant vector bundle on~$\ST$, there is an isomorphism
\[
\oloc(\sE)^{\CC^n} \cong \CC \cdot \d^n x \tensor^{\LL}_{U(\STlie)} \sO_\text{red} (J^\infty E|_0) .
\]
where we note that the algebra $U(\STlie) = \CC[\partial_{x_1}, \ldots, \partial_{x_n}]$ is precisely the (commutative) algebra of translation invariant differential operators. 
\end{rmk}

\begin{eg}
As an example of a local functional consider the free supersymmetric Lagrangian for the $\N=1$ chiral multiplet on $\RR^4$. 
It consists of the standard kinetic terms:
\deq{
  L = \left( - \partial\bar\phi \cdot\partial\phi + i \bar \psi \dslash \psi \right) \, \d^4x.
}
Note that this functional is manifestly translation invariant. 
\end{eg}

\subsection{Supersymmetry algebras and spinors in four dimensions}
\label{ssec:susyalg}

We now specialize from general field theories to supersymmetric field theories. By definition, these are theories in which the action of the affine transformations of the spacetime~$\ST$ are extended to a super Lie algebra. We also consider only four-dimensional theories here; as such, let $\ST = \R^{1,3}$ or~$\R^4$, corresponding to Lorentzian or Euclidean signature. (We will work with complexified algebras in any case, so that $\ST$ will usually denote $\C^4$; however, the signature will be relevant in a couple of places, which we will point out explicitly when they occur.) 

In order to discuss the supersymmetry algebra, we first fix a couple of conventions related to spinors in four dimensions. In any number of dimensions, the spinors of $\so(\ST)$ can be constructed by choosing a maximal isotropic subspace $L\subset V_\C$; the exterior algebra
\deq{
 D = \Lambda^* (L)
 }
 then carries the structure of a Clifford module. To give this structure, we just need to specify the action of~$V$ on~$D$ by Clifford multiplication; we recall that, in even dimensions,
 \deq{
 V_\C \cong L \oplus L^\vee.
 }
 $L$ is then taken to act by multiplication and $L^\vee$ by contraction. The commutator generates the pairing between~$L$ and~$L^\vee$, which is precisely the inner product on~$\ST$. Since the spin group sits inside of the even Clifford algebra, $D$ acquires a representation of~$\Spin(\ST)$. This representation is reducible, since the splitting
 \deq{
 D = \Lambda^\text{even}(L) \oplus \Lambda^\text{odd}(L)
 }
 is preserved by the action of~$\Lambda^2 V \cong \so(\ST)$. These irreducible spinor representations are called \emph{Weyl spinors} in the physics literature. 
 
We will always work with Weyl spinors, represented by symbols like $\psi$ or $\chi$; these transform in the complex two-dimensional chiral spinor representation $S_+ \cong \Lambda^\text{even}(L)$ of~$\so(4)$, constructed above. After using the exceptional isomorphism
\deq{
\so(4) \cong \su(2) \times \su(2),
}
the representation $S_+$ is the defining representation of the left $\su(2)$, tensored with the trivial representation of the right. 
The antisymmetric square of this representation is the trivial representation; we will write this pairing simply as $\psi\chi$, which (since spinor fields will have odd parity by spin and statistics) is meaningful independent of the order in which the symbols appear. 

An element of the $S_-$ representation will carry a bar, reflecting the fact that---in Lorentzian signature---the $S_-$ is the complex conjugate representation of the~$S_+$. 
For example, $\bar\psi$ denotes the conjugate of~$\psi$. In Euclidean signature, one must also apply  the automorphism of the algebra which interchanges the two $\su(2)$ factors; this is often denoted $\gamma^0$ in the physics literature. (In general, we will work in a complexified setting, and will not need to consider real structures.)
It is thus immediate that $\overline{(\chi \psi)} = \bar\chi \bar\psi$, where the overline denotes complex conjugation, and, just as the case of $S_+$, we have identified the anti-symmetric square of $S_-$ with the trivial representation. 
We will also make frequent use of the isomorphism 
\[
\Gamma : S_+ \otimes S_- \xto{\cong} \ST .
\]
We will also use the Feynman ``slash'' notation for the inverse inclusion $\ST \hookrightarrow S_+ \otimes S_-$, 
so that (for example) there is a linear differential operator
\deq{
\dslash: \Map(\R^4,S_+) \rightarrow \Map(\R^4,S_-).
}

Having settled these conventions, let us now return to the topic of supersymmetry. The four-dimensional $\N=1$ supersymmetry algebra is a super-Lie algebra with underlying super-vector space
\[
\sp = \sp^0 \oplus \sp^1.
\]
The superscripts denote the grading by~$\Z/2\Z$ corresponding to fermion parity; $\sp^1$ is therefore odd. Here, the bosonic part is of the form
\[
\sp^0 = \left[ \STlie \rtimes \lie{so}(\ST) \right] \oplus \lie{r},
\]
where $\STlie \rtimes \lie{so}(\ST)$ is the Poincar\'e algebra, which generates affine transformations of~$\ST$, and~$\lie{r}$, the $R$-symmetry, is in this case a one-dimensional (abelian) Lie algebra, which one may or may not choose to include. As an $\fp^0$-module, the fermionic part is
\[
\sp^1 = \Pi(S_+ \oplus S_-),
\]
where $S_\pm$ has charge $\pm1$ under~$\lie{r}$. Here 
 $\Pi$ denotes parity shift, with respect to the $\Z/2\Z$ grading, and the anticommutator map is just the isomorphism
\[
\Gamma : S_+ \otimes S_- \rightarrow \ST,
\]
extended by zero in the obvious way to a map from $\Sy^2(S_+ \oplus S_-)$.

As with any super-Poincar\'e algebra, $\fp$ has a normal $\Z/2\Z$-graded subalgebra $\st$ of supertranslations, of the form
\[
\st = \STlie \oplus \sp^1 \subset \sp .
\]
That is, $\st^1 \cong \sp^1$. We can think of~$\st$ as arising from~$\sp$ by forgetting the $\lie{so}(\ST)\times\lie{r}$ part of the algebra. However, we can remember the $\lie{so}(\ST)\times\lie{r}$-module structure; $A$ is then the extension of~$\lie{so}(\ST)\times\lie{r}$ by the module~$\st$.

The algebra $\sp$ also sits inside a larger algebra, the $\N=1$ \emph{superconformal} algebra in four dimensions, which is a simple super-Lie algebra:
\[
\sp \subset \sc = \su(2,2|1).
\]
Here $SO(\ST) \cong \su(2) \oplus \su(2)$ sits block-diagonally inside of $\su(2,2)$, and $\STlie$ is one of the off-diagonal blocks.
Notice that, while there is no requirement for the $R$-symmetry to be represented on a super-Poincar{\'e}-invariant theory---and in fact, it is often anomalous---it forms part of the simple algebra $\sc$, and therefore must be present in superconformal field theories.

\subsection{Supersymmetric field theories}
\label{ssec:actions}

By definition, a supersymmetric theory admits an action of the super-Poincar{\'e} algebra, extending the action of affine transformations of the spacetime on the fields. 
Ideally, this means we have an (strict) action of the Lie algebra $\sp$ on the fields of the theory $\sE$ in such a way that the classical Lagrangian is invariant. 
In practice, this is rarely the case, even for the free $\N=1$ chiral multiplet in four dimensions. 
What one can really find is that there is an action of the supersymmetry algebra on the critical locus of the action functional. 

It is natural to require that this action be ``local" in the sense that it is through differential operators.
A way to cast this is to require that the supersymmetry action determine a representation on $\oloc(\sE)$,
\deq{
\rho : \sp \to \End(\oloc(\sE)),
}
in such a way that the Lagrangian $L \in \oloc(\sE)$ defining the classical theory is fixed.%
\footnote{We have seen that $\oloc(\sE)$ is actually a sheaf on $\RR^n$, but for now we consider just its global sections.}

We will find it convenient to repackage the data of this algebra action using standard manipulations of Koszul duality. For physicist readers, this is analogous to the way in which the BRST formalism repackages the procedure of taking gauge invariants: the BRST differential encodes the gauge algebra and its action on the fields. We are free to repackage any symmetry in this fashion; if we do not wish to take (co)invariants, we can just remember the differential without passing to (co)homology.

Recall that the Chevalley--Eilenberg complex of a Lie algebra is defined by
\deq{
  \clie^*(\lie{g}) = \left[ \Sym\left( \lie{g}^\vee[-1] \right), \dCE \right],
}
where $\dCE$ is the \CE{} differential.
It is a degree $+1$ operator, obtained by extending the dual of the Lie bracket on~$\lie{g}$ according to the Leibniz rule; the data of such a degree-one nilpotent differential is precisely equivalent to a Lie algebra structure on~$\lie{g}$ provided $\fg$ is concentrated in degree zero. 
If we relax the condition that $\fg$ is concentrated in degree zero, then we obtain the structure of an $L_\infty$ algebra on~$\lie{g}$. 
Furthermore, if $\lie{g}$ is a super-Lie algebra, the same definition applies (with the condition that overall parity is determined by the sum of the homological degree and intrinsic parity). 
We also note that a Lie algebra structure on~$\lie{g}$ and a $\lie{g}$-module structure on~$M$ are together precisely equivalent to a degree-one nilpotent differential on
\deq{
  \Sym\left( \lie{g}^\vee[-1] \oplus M^\vee \right).
  }
(The physicist reader may think of the collection of all ordinary and ghost fields in the BRST formalism.)

  It is further standard that the data of a map (or more generally, an $L_\infty$ map) of Lie  algebras $\rho : \fg \to \fh$ is equivalent to the data of a Maurer--Cartan element in the dg Lie algebra
\[
\theta_{\rho} \in \clie^*(\fg) \tensor \fh .
\]
Here, we use the commutative dg algebra structure on $\clie^*(\fg)$ together with the Lie bracket on $\fh$.
The Maurer--Cartan equation for $\theta_{\rho}$ is
\[
\dCE_\fg \theta_{\rho} + \frac{1}{2} [\theta_{\rho} , \theta_{\rho}]_{\fh} = 0 .
\] 
where $\dCE_\fg$ is the \CE{} differential for $\fg$ and $[-,-]_{\fh}$ is the Lie bracket on $\fh$. (Further terms would appear if $\lie{h}$ were an $L_\infty$ algebra rather than simply Lie.)

Thus, another way to encode $\N=1$ supersymmetry is to prescribe a Maurer--Cartan element in
\begin{equation}
  \theta_{\rho} \in \clie^*(\sp) \tensor \End(\oloc(\sE)) ,
\end{equation}
or equivalently a BRST-type differential acting in the space
\deq{
  \clie^*(\sp) \otimes \oloc(\sE)
}

One can think of  this as adding ghosts that do not depend on the spacetime; thus, the sheaf is the constant sheaf with value $\sp$, rather than the sheaf of sections of the bundle $\underline{\sp}$ as is typical in gauge theories. See~\cite{Baulieu-susy} for an example of this technique in the physics literature.
Restricting to just the supertranslation algebra, 
we can decompose such a Maurer--Cartan element as
\[
  \theta_{\rho} = \sum_{i=1}^4 \delta_{x_i} + \delta_{Q} + \delta_{\smash{\bar Q}} .
\]

\subsection{Twisting, the nilpotence variety, and~$B\st$}
\label{ssec:Bg}

We will be interested in \emph{holomorphic twists} of supersymmetric field theories. At root, this means that one passes to the cohomology of a chosen nilpotent element in the supersymmetry algebra. One can also think of this as giving a vacuum expectation value to the corresponding ghost; although the ghosts in our setting are spacetime-independent and nondynamical, this description also makes sense for theories with local supersymmetry, where it recovers the proposal of Costello and Li for defining twists of supergravity theories~\cite{CostelloLi}. For recent reviews of the twisting procedure and classifications of possible twists in different dimensions, see for example~\cite{NV,Chris}.

The moduli space of allowed ``vacuum expectation  values'' for the ghosts of supertranslations is nothing other than the space of nilpotent elements in~$\st^1$, which is an algebraic variety $Y$ defined by homogeneous quadratic equations:
\deq{
  Y = \{Q \in \st^1 : Q^2 = 0 \}.
}
For four-dimensional minimal supersymmetry, it is easy to see that
\deq{
  Y = S_+ \cup_{\{0\}} S_- \subset S_+ \oplus S_- = \st^1.
}
Thus, if $\O(Y)$ denotes the algebra of functions on $Y$, one finds
\deq[eq:OY]{
  \O(Y) = \O(\st^1[-1]) / \langle S_+^\vee S_-^\vee \rangle .
}
Note that $Y$ is an ordinary (not super) affine variety and $\O(Y)$ is its ordinary algebra of functions---the homogeneous coordinate ring of the corresponding projective variety, which is here the disjoint union of two projective lines in~$P^3$. 

On general grounds, twists of a particular supersymmetric theory can therefore be thought of as a family over the corresponding nilpotence variety~$Y$. We will see this explicitly for minimal supersymmetry in four dimensions, where the result will be a family of holomorphic theories over the space of complex structures on~$\R^4$; in general, nilpotence varieties that do not arise from dimensional reduction are closely related to spaces of complex structures, since only holomorphic twists are present. The nilpotence variety is also closely related to the classifying space of the super-Lie algebra~$\st$~\cite{NV}; this makes sense, since the twist construction to yield a family over~$B\st$ being a module over $\st$. 

Recall that the classifying space~$B\lie{g}$ of a Lie algebra~$\lie{g}$ is the formal derived space whose space of functions is modeled by the \CE{} complex: by definition,
\deq{
  \O(B\lie{g}) \equiv \clie^*(\lie{g}) .
}
For $\lie{g}$ semisimple, $B\lie{g}$ is the de~Rham stack of (a compact real form of) the corresponding group~$G$.

For super-Lie algebras, as we recalled above, the \CE{} complex is graded by both homological degree and intrinsic parity, $\Z \times \Z/2\Z$. In the case of the supertranslation algebra~$\st$, this grading lifts to a bigrading by~$\Z \times \Z$, since $\st^0$ is abelian and $\st^1$ is trivial as a $\st^0$-module. The result can be given in the form 
\deq[eq:BST]{
\O(B\st) 
=  \left[ \O(\st[1]), \dCE \right]
= \left[  \Sy^*(\st^1)^\vee \otimes \Lambda^* (\st^0)^\vee, \dCE \right],
}
where $\dCE$ carries degree $(2,-1)$ with respect to the two indicated gradings, and the original homological grading is their sum. Of course, everything here is equivariant with respect to the $\so(\ST) \times \lie{r}$-module structure. Moreover, the differential is in fact $\O(\st^1[-1])$-linear, so that~\eqref{eq:BST} becomes the Koszul complex (with respect to the second grading) of the defining ideal of the nilpotence variety, in free $\O(\st^1[-1])$-modules; its zeroth homology is then just the homogeneous coordinate ring~\eqref{eq:OY}. 

\subsubsection{Pure spinor superfields}
\label{sssec:PSSF}
In addition to classifying the possible twists of a supersymmetric field theory, the nilpotence variety can also be used to construct field representations of the corresponding super-Poincar{\'e} algebra. We briefly recall this technique here; later, we will show how this formalism simplifies the computation of the holomorphic twist in four-dimensional theories. For more detail, we refer the reader to~\cite{Cederwall,NV}.

The construction begins by observing that there is a canonical scalar element
\deq{
  \id \in \End(\st^1) \cong \st^1 \otimes (\st^1)^\vee.
}
We can then push this element forward to $\st \otimes_\C \O(Y)$, using the natural inclusion maps on either side. The result is a scalar nilpotent operator, the \emph{Berkovits differential}, which we will denote $\Berk$. It acts in the tensor product of any $\st$-module $M$ with any $\O(Y)$-module $\Gamma$; in the case where the latter is just $\O(Y)$ itself, one can think of this as forming the trivial bundle over~$Y$ with fiber $M$, and then acting by the obvious tautological bundle whose fiber over~$Q\in Y$ is spanned by the operator~$Q$ itself. Moreover, the whole construction is Lorentz invariant if we insist that $M$ and~$\Gamma$ are equivariant modules (so that $M$ is in fact an $\sp$-module). $\Gamma$ can be thought of geometrically as the space of global sections of an equivariant bundle or coherent sheaf.

In general, $H^*(M \otimes \Gamma, \Berk)$ does not admit an action of the full~$\sp$; $\sp^0$ is guaranteed to commute with~$\Berk$ (since it is a scalar), but $\sp^1$ may not. However, we can make a specific choice for~$M$ such that the homology becomes a field representation of~$\sp$ for \emph{any} choice of~$\Gamma$.  Namely, we can take $M$ to be the space of \emph{free superfields}, i.e., the algebra of functions $\O(\st)$.  This admits two commuting actions of~$\st$, by left and right translations. The action of $SO(\ST)$ comes from pullback under the (adjoint) $SO(\ST)$ action on~$\ST$. 
We are therefore free to apply the Berkovits differential using the \emph{right} action, and are guaranteed that $\sp$ will still act on the left on its homology. It follows that 
\deq{
  H^* \left(  \O(\st) \otimes \Gamma, \Berk \right)
}
is a supermultiplet corresponding to the equivariant sheaf $\Gamma$ on~$Y$. Furthermore, $\O(\st)$ is naturally bigraded, and~$\Berk$ decomposes into the sum of two bigraded pieces; one is independent of~$\Sy^*(\st^0)^\vee$, whereas the other involves even translations (derivatives). There is thus a spectral sequence beginning at $\O(\st) \otimes \Gamma$ and abutting to the homology of~$\Berk$; the $E_1$ page contains the component fields of the resulting supermultiplet, and the differentials on following pages correspond to a BV differential. 

In the context of $\N=1$ supersymmetry in four dimensions, the vector multiplet arises from the structure sheaf of~$Y$, whereas the chiral multiplet comes from 
\deq{
  \Gamma = \O (S_+),
}
considered as an $\O(Y)$-module in the obvious way as a quotient of~$\O(Y)$ with respect to the ideal generated by~$S_-^\vee$. The complex $\O(\st) \otimes \Gamma$ thus reduces to
\deq{
  \left[ \left( \O(\ST) \otimes \Lambda^* S_+^\vee \otimes \Lambda^* S_-^\vee \right) \otimes \Sy^* (S_+^\vee), \Berk \right],
}
where $\Berk$ acts on the $E_0$ page by the identity on the two copies of~$S_+$, so that its homology is simply
\deq{
  H^* \left(  \O(\st) \otimes \Gamma, \Berk \right)
  \cong \O( \ST ) \otimes \Lambda^* S_-^\vee.
}
The reader will recognize this as the chiral multiplet of four-dimensional minimal supersymmetry, which we review in more detail in the component formalism in the following section. For degree reasons, no BV differential can appear here.

\section{The chiral multiplet and auxiliary fields}
\label{sec:SUSY}

\subsection{Component formalism}
\label{ssec:components}
The reader will recall that the chiral multiplet in a four-dimensional $\N=1$ theory contains as component fields one complex scalar and one Weyl fermion:
\deq{ 
(\phi, \psi) \in \Gamma (\RR^4, \ul{\CC} \oplus \Pi \ul{S}_+) = \O(\RR^4) \tensor (\CC \oplus \Pi S_+)
}
together with
\deq{
(\Bar{\phi}, \Bar{\psi}) \in \Gamma (\RR^4, \ul{\CC} \oplus \Pi \ul{S}_-) = \O(\RR^4) \tensor (\CC \oplus \Pi S_-)
}
where $\psi, \Bar{\psi}$ have opposite chirality. 

The free Lagrangian for this multiplet just consists of the standard kinetic terms for each field:
\deq{
L_{\rm free} = - \partial\bar\phi \cdot\partial\phi + i \bar \psi \dslash \psi .
}
If $E = \ul{\CC} \oplus \Pi \smash{\ul{S}}_+$ is the underlying super vector bundle of the chiral multiplet, this Lagrangian is a local functional in $L_0 \in \oloc(\sE)$ .

The action $L_{\rm free}$ is invariant (up to a total derivative) under the transformations
\begin{equation}
\label{transf-component}
\begin{aligned}[c]
\delta \phi &= \epar \psi + a^\mu \partial_\mu \phi, \\
\delta \psi &= i \bar\epar (\dslash \phi )  + a^\mu \partial_\mu \psi.
\end{aligned}
\end{equation}
Here $\epar \in (\st^1)^\vee[-1]$ and~$a^\mu \in (\st^0)^\vee[-1]$ are generators of the Chevalley--Eilenberg complex, and the notation $\epar\psi$ represents spinor contraction as defined above in~\S\ref{ssec:susyalg}.
Of course, the complex conjugates of these also hold:
\begin{equation}
\begin{aligned}[c]
\delta \bar\phi &= \bar\epar \bar\psi + a^\mu \partial_\mu \bar\phi, \\
\delta \bar\psi &= -i  (\dslash \bar\phi ) \epar + a^\mu \partial_\mu \bar\psi.
\end{aligned}
\end{equation}
Here, we are encoding the action of the supersymmetry algebra using the techniques of~\cite{Baulieu-susy}, as reviewed above in~\S\ref{ssec:actions}. Global (i.e.\ spacetime-independent), nondynamical ghosts are included, and the differential $\delta$ of \CE{} type  encodes the structure of the supertranslation algebra as well as the module structure on the physical fields. 
While we do not pass to cohomology of this differential, we can conveniently recover the differential arising in any twist of the theory by setting the global supersymmetry ghosts $\epar$ to appropriate nonzero values---i.e., to any point on the nilpotence variety.

The ghosts are \emph{bosonic} spinor variables $\epar$ and~$\bar\epar$, along with a fermionic vector $a^\mu$, all carrying ghost number one.
The differential acts on fields according to~\eqref{transf-component}, and 
the structure of the supersymmetry algebra is encoded in the action of the differential on the ghosts themselves:
\begin{equation}
\begin{aligned}[c]
\delta a^\mu &= \bar\epar \gamma^\mu \epar, \\
\delta \epar &= 0, \\
\delta \bar\epar &= 0.
\end{aligned}
\end{equation}

Now, in order for us to have an action of~$\sp$, the Maurer--Cartan condition $\delta^2 = 0$ must hold. 
However, this is not true using the naive action above, since $\delta^2 \psi = \delta(\delta \psi)$ is in fact nonzero;
in other words, the naive definition of $\delta$ does not define a map of Lie algebras $\sp \to \End(\oloc(\sE))$. 
However, $\delta^2$ lies in the ideal generated by the equations of motion, so that~\eqref{transf-component} \emph{does} define an action of the supersymmetry algebra on-shell (i.e., on the sheaf of solutions to equations of motion). This is a general feature of supersymmetry multiplets; when only physical fields are included in the multiplet, closure of the algebra requires that the equations of motion for the fermion be imposed.

In our case, the well-known work-around for this issue is to introduce an {\em auxiliary field}. 
This has the affect of modifying the space of fields $\sE$ to a larger space $\Tilde{\sE}$ that admits an off-shell action of $\sp$, without changing the sheaf of solutions to equations of motion or its representation of~$\sp$. An auxiliary-field formalism is not always available; we will see a general technique for avoiding the introduction of auxiliary fields below in~\S\ref{sec: bv aux}.

For the chiral multiplet, the auxiliary field is an element
\deq{
F \in C^\infty (\RR^4) ,
}
that appears algebraically in the action functional. 
In particular, the free Lagrangian is extended to
\deq[freeF]{
  \Tilde{L}_{\rm free} = - \partial\bar\phi \cdot\partial\phi + i \bar \psi \dslash \psi + \bar{F} F,
}
so that $F$'s equation of motion in the free theory simply sets $F = \bar F = 0$. 
If the bundle associated to this larger space of fields is 
\deq{
\Tilde{E} = E \oplus \ul{\CC} = \ul{\CC} \oplus \Pi \ul{S}_+ \oplus \ul{\CC},
}
then this Lagrangian is a local functional $\Tilde{L}_0 \in \oloc(\Tilde{\sE})$. The  reader will now recognize that
\deq{
  \Tilde{\sE} = \O(\ST) \otimes \Lambda^* (S_+^\vee),
}
matching the pure spinor superfield construction above.

The modified supersymmetry transformations now read
\begin{align}
\delta \phi &= \epar \psi, \nonumber \\
\delta \psi &= i \bar\epar (\dslash \phi ) +  \epar F, \label{susy-F} \\
\delta F &= - i \bar\epar \dslash \psi. \nonumber
\end{align}
(The complex conjugates of these are also valid.) After restoring the obvious action by ordinary bosonic translations, which  is suppressed above, the differential defined by~\eqref{susy-F} is now nilpotent, signaling closure of the algebra off-shell.
Indeed, $\delta$ now defines a map of Lie algebras
\deq{
\sp \to \End(\oloc(\Tilde{\sE})) 
}
in such a way that $\Tilde{L}_0$ is preserved.

\subsection{Superpotential interactions}
\label{ssec:F-terms}

In this section, we review supersymmetry-preserving interactions of $\N=1$ chiral multiplets, in the formalism with auxiliary fields; these are parameterized by a holomorphic function known as the superpotential. 

We consider a theory with $n$ chiral superfields, labeled with an index $i$.
In the formalism with auxiliary fields, the most general su\-per\-sym\-me\-try-pre\-ser\-ving interaction term that can be added to the free Lagrangian---$n$ copies of~\eqref{freeF}---is 
\deq{
L_\text{int} = - \frac{1}{2} W^{ij} \psi_i \psi_j + W^i F_j + \text{c.c.},
}
where $W^i$ denotes the corresponding derivative of a holomorphic function of the bosonic chiral fields $\phi_i$.
For example, 
\deq{
W^{ij} = \frac{d^2}{d\phi_i \, d\phi_j} W.
}
While $W$ is an arbitrary holomorphic function, only the quadratic and cubic terms in~$W$ will provide relevant interaction terms; these are (respectively) the mass matrix of the $\phi$ fields and the 
scalar self-couplings for the theory, and supersymmetry invariance fixes all other relevant couplings (i.e.\ fermion masses and cubic and quartic scalar vertices) as functions of  these. The terms at linear order only shift the action functional by a constant; we will consider only $W$ of quadratic and higher order.

When superpotential interactions are included, the auxiliary field $F$ still appears algebraically in the Lagrangian, but new terms appear:
\deq{
L = \bar{F}^i F_i + W^i F_i + \bar{W}_i \bar{F}^i +  \text{$F$-independent terms}.
}
Thus, the equations of motion for~$F$ are deformed to
\deq[F-EOM]{
F_i   = \bar{W}_i,  \qquad \bar{F}^i = W^i,
}
and the supersymmetry transformations of the fermionic component fields $\psi_i$  are correspondingly deformed after  $F$ is eliminated, as can be read off by substituting~\eqref{F-EOM} into~\eqref{susy-F}:
\deq[susy-deformed]{
\delta \psi_i = i \bar\epar (\dslash \phi_i ) +  \epar \bar{W}_i.
}

\subsection{The BV formalism}
\label{ssec:BV}

We quickly remind the reader of the basics of the BV formalism, mostly to fix conventions. For more detailed discussion, we refer to~\cite{CG2, CosRenorm}, the article~\cite{Schwarz-BV}, or the review in~\cite{CY4}.

In the BV formalism for classical field theories, one is interested in studying the sheaf of solutions to the classical equations of motion. These are just critical points for the action functional $S = \int L$:
\deq{
  \Crit(S) \subset \sE.
}
One now resolves functions on the critical locus (i.e., classical observables) freely in functions on~$\sE$, using the Koszul complex of the equations of motion. This amounts to constructing the \emph{BV fields} of the theory,
\deq{
  \sB = T^*[-1] \sE,
}
which is a sheaf of shifted symplectic vector spaces, obtained (just as before) as the sections of a dg vector bundle $B \rightarrow \ST$ whose fibers are the shifted cotangent bundles to the fibers of the original bundle~$E$. Since $\sB$ is a shifted symplectic space, its algebra of functions $\O(\sB)$ carries a shifted Poisson structure; this bracket structure is usually called the \emph{antibracket} in the physics literature.

In order to resolve the critical locus, the classical BV observables $\O(\sB)$ are equipped with the differential $\ad_S = \{S,-\}$ generated by the action functional under the antibracket. In the simplest case, when $\sE$ carries no differential, this makes $\O(\sB)$ into the Koszul complex for the equations of motion of the original theory, and no further modification is needed. More generally, though, the action must be modified so as to generate the differential on~$\sE$ while maintaining nilpotence; the latter requirement amounts to imposing the \emph{classical master equation}, $\{\mathfrak{S},\mathfrak{S}\} = 0$. (Here $\mathfrak{S} = \int \mathfrak{L}$ is the BV action functional.)
One can think of this as finding an appropriate lift of $L \in \oloc(\sE)$ to~$\mathfrak{L} \in \oloc(\sB)$, such that its restriction to the zero section returns $L$, and the first-order terms in antifields generate the internal differential on~$\sE$; in the end, the BV differential will do the job of passing  to the quotient of~$\Crit(S)$ by the action  of the gauge group. 

These  two  requirements fix the  terms of the BV Lagrangian~$\mathfrak{L}$ that are of order  zero and one in antifields; higher-order  terms,  if any, are generated by  requiring  the classical master  equation. To   quantize the theory, one tries to  deform the BV action to a  solution of the \emph{quantum  master  equation}, which is a  deformation of the  classical master equation by the BV Laplacian; see~\S\ref{ssec:BVquant} below. In general, for an honestly nilpotent internal differential on~$\sE$, a BV action that is linear in antifields will suffice, and the master equation  will already be satisfied. However, one can even extend the construction of the BV action to cases where the internal differential on~$\sE$ is only nilpotent modulo the defining ideal of~$\Crit(S)$~\cite{Baulieu-susy}. We will see an example of a BV action that is higher order in antifields below.

\subsection{BV actions for $\N=1$ chiral multiplets}
\label{sec: bv aux} 

Using the above procedure, we can now construct the BV Lagrangian at linear order in antifields, beginning with the chiral multiplet in the auxiliary-field formalism: 
\deq{
\mathfrak{L}
= \widetilde{L} + \phi^* \delta \phi + \psi^* \delta \psi + F^* \delta F + \text{c.c.}
}
Note that we are here treating $\delta$, which represents the action of the super-Poincar\'e algebra, as the internal differential on~$\widetilde{\sE}$! This amounts to constructing the BV action equivariantly with respect to the supersymmetry algebra. (The physical chiral multiplet, of course, has no internal differential.)

Since the supersymmetry transformations close on-shell, we are assured that the BV differential (i.e., the adjoint action of $\mathfrak{S}$ under the antibracket) is nilpotent. Indeed, for our purposes, it is sufficient to note that this continues to hold when the $a$-ghosts are set to zero---as long as we will eventually choose values for the bosonic $\epar$ ghosts that lie in the nilpotence variety of the supersymmetry algebra. And taking this choice as internal differential on~$\widetilde{\sE}$ is precisely passing to the holomorphic twist of  the  theory.
Since this is our aim, we discard the $a$-ghosts now, obtaining
\deq[BVaction-F-sum]{
\mathfrak{L}
= \widetilde{L}_\text{free} + \widetilde{L}_\text{int} + \widetilde{L}_\text{a.f.},
}
where
\begin{equation}
\begin{aligned}[c]
  \widetilde{L}_\text{free} &= - \partial \bar\phi \cdot \partial\phi + i \bar\psi \dslash \psi + \bar F F, \\
  \widetilde{L}_\text{int} &= - \frac{1}{2} W^{ij} \psi_i \psi_j + W^i F_j + \text{c.c.}, \\
  \widetilde{L}_\text{a.f.} &= \phi^* \epar\psi + \psi^* \epar F + i \psi^* \bar\epar \dslash \phi - i F^* \bar\epar \dslash \psi + \text{c.c.}
\end{aligned}
\label{BVaction-F}
\end{equation}
An implicit summation over flavors is understood in the first and third terms.

\subsubsection{Eliminating the auxiliary field}

We will now integrate out the auxiliary field, reducing the BV fields from $T^*[-1]\widetilde{\sE}$ to~$T^*[-1]\sE$ and producing a BV action without auxiliary fields. (Since the procedure is analogous to symplectic reduction, we will have to first restrict to the observables that have zero antibracket with the auxiliary field; this amounts to throwing out its antifield.) Notice that in~\eqref{BVaction-F}, the equation of motion for the auxiliary field $F$ is deformed to
\deq{
\bar{F}^i + W^i + \epar \psi^{*i} = 0.
}
(The equation of motion for~$\bar F$ is just the complex conjugate of the above.) Substituting these back into~\eqref{BVaction-F-sum}, and setting the antifields of auxiliary fields to zero, yields the BV action for the chiral multiplet in component formalism:
\begin{multline}
\mathfrak{L} = - \partial \bar\phi \cdot \partial\phi + i \bar\psi \dslash \psi + (W + \epar \psi^*)(\bar W + \bar\epar \bar\psi^*)  \\
- \frac{1}{2} W^{ij} \psi_i \psi_j - W^i (\bar{W}_i + \bar\epar \bar\psi^*_i ) 
+ \phi^* \epar\psi + i \psi^* \bar\epar\dslash \phi - \psi^{*i} \epar ( \bar{W}_i + \bar\epar \bar\psi^*_i ) + \text{c.c.}
\end{multline}
After expanding terms and cleaning this up, one finds
\deq{
  \mathfrak{L} = L_0 + L_1 + L_2,
}
where
\begin{equation}
  \begin{aligned}[c]
    L_0 &= - \partial \bar\phi \cdot \partial\phi + i \bar\psi \dslash \psi - \frac{1}{2} W^{ij} \psi_i \psi_j - \frac{1}{2} \bar{W}_{ij} \bar\psi^i \bar\psi^j -  |W|^2, \\
    L_1 &= \phi^* \epar\psi + i \psi^* \bar\epar\dslash \phi - \epar\psi^{*i} \bar{W}_i + \text{c.c.}, \\
    L_2 &= - \epar\psi^{*i} \bar\epar \bar\psi^*_i.
  \end{aligned}
\label{BVaction-elim}
\end{equation}

A few words of explanation are warranted here. Firstly, $L_0$ consists of terms that are independent of antifields; it reproduces the standard component action of the theory with superpotential interactions, without auxiliary fields. The antifield-dependent terms reflect the supersymmetry transformations of the fields; recall  that, because we have already set the ghost parameters for translations to zero, our differential will be nilpotent only for ghost parameters $\epar$ corresponding to nilpotent supercharges. Nonetheless, the transformations in~\eqref{BVaction-elim} neatly reproduce those in~\eqref{transf-component}, with an additional term that is dependent on the superpotential and give rise to the interaction spectral sequence in this formalism (arising from the terms $\epar\psi^{*i} \bar{W}_i$). This additional term is the one appearing in~\eqref{susy-deformed}; for this reason, the interaction no longer affects only the antifield-independent portion of the action, and we do not separate the free and the interaction terms explicitly. 

Furthermore, the action~\eqref{BVaction-elim} is no longer linear in antifields. A quadratic term $L_2$ has appeared upon elimination of the auxiliary fields, corresponding to the fact that the supersymmetry transformations no longer define a module structure on the space of off-shell fields (although they \emph{do} define a module structure modulo the equation of motion for~$\psi$, as we remarked previously). 

Had we started just with the component action and the transformations~\eqref{transf-component}, we could have used the techniques of~\cite{Baulieu-susy} to produce this BV action directly, without any reference to an auxiliary field formalism. We would have written down $L_0$ and~$L_1$ based on that information, we then would have found that, rather than being zero, $\{S,S\}$ would have been proportional to the Dirac equation for~$\psi$. Choosing $L_2$ so as to cancel that term would have produced the  solution~\eqref{BVaction-elim}, and the process of solving order by order in antifields would then be complete. These techniques apply even in circumstances where no auxiliary-field formalism is available, and produce BV complexes with actions quadratic in antifields that are analogous to~\eqref{BVaction-elim}.

\section{Twisted chiral matter: the $\beta\gamma$ system}
\label{sec:holo} 

The type of twists of supersymmetric theories we study in this paper are not of the familiar topological flavor, but are {\em holomorphic}. 
Like a topological twist, these holomorphic twists do not depend on the underlying metric data as the starting supersymmetric theory does, but unlike a topological twist, they depend on the complex structure. 

The starting point for a holomorphic theory is that of a holomorphic vector bundle equipped with a holomorphic version of the BRST operator. 
All of the theories we consider in this section will be in the BV formalism, and there is a suitable holomorphic version of the BV bracket. 
Given a holomorphic theory there is a natural way to construct a BV theory, which we will refer to as its {\em BV-ification}. 

\noindent {\bf Note:} Unless otherwise stated, we will work in the BV formalism for the remainder of the paper.
Thus, when we refer to fields we mean the full space of BV fields, and the action functional is the full BV action.

\subsection{Holomorphic field theory} \label{sec: holtheory}

In this section we set up notations and conventions for holomorphic field theories. 
We mostly follow the the approach to this subject presented in \cite{BWhol}. 

First, we define the appropriate notion of a ``free" theory. 

\begin{dfn}
A {\em free holomorphic theory} on $\CC^d$ consists of the following data:
\begin{itemize}
\item a $\ZZ$-graded complex vector space $Z^\bullet$;
\item a non-degenerate pairing of cohomological degree $(d-1)$
\[
\omega^\text{hol} : Z^\bullet \times Z^\bullet \to \CC \cdot \d^d z [d-1] ;
\] 
\item a holomorphic differential operator of cohomological degree $+1$:
\[
Q^\text{hol} : \sO^\text{hol}(\CC^d) \tensor Z^\bullet \to \sO^\text{hol}(\CC^d) \tensor Z^\bullet ;
\]
\end{itemize}
such that: $Q^\text{hol}$ is graded skew self-adjoint to $\omega^\text{hol}$, and $(Q^\text{hol})^2 = Q^\text{hol} \circ Q^\text{hol} = 0$. 
\end{dfn}

We use the notation $\CC \cdot \d^d z$ to indicate the fiber of the holomorphic canonical bundle at the origin in $\CC^d$. 
In the definition, we extend $\omega^\text{hol}$ to a pairing $\omega^\text{hol} :  \sO^\text{hol}(\CC^d) \tensor Z^\bu \times \sO^\text{hol}(\CC^d) \tensor Z^\bu \to \sO^\text{hol}(\CC^d) \cdot \d^d z [d-1] = \Omega^{d,hol}(\CC^d) [d-1]$ by $\sO^\text{hol}(\CC^d)$-linearity. 

\begin{rmk}\label{rmk: z2}
There is a weakening of this definition that is relevant for us. 
Instead of starting with a $\ZZ$-graded vector bundle, we can consider a $\ZZ/2$-graded vector bundle $Z^\bullet = Z_\text{even} \oplus Z_\text{odd}$.
We then require that $Q^\text{hol}$ is an odd holomorphic differential operator and that $\omega^\text{hol}$ be odd when $d$ is even, and even when $d$ is odd. 
\end{rmk}

The reader may observe that there is no ``space of fields" in the definition of a free holomorphic theory. 
One obtains the fields of the resulting free field theory, in the BV formalism, by taking the Dolbeault complex with values in the trivial holomorphic vector bundle with fiber $Z^\bullet$.
That is, the space of BV fields (including ghosts, fields, anti-fields, etc.) with its linear BV differential, is the complex 
\[
\sB_Z = \left(\Omega^{0,*}(\CC^d, Z^\bullet), \dbar + Q^\text{hol}\right) .
\] 
If $\alpha \in \sE_X$ denotes a section, the free Lagrangian is
\[
L_{\rm free} = \omega^\text{hol} (\alpha, (\dbar + Q^\text{hol}) \alpha) \d^d z 
\] 
where $\d^d z$ is the standard holomorphic volume form on $\CC^d$. 
We think of the passage from $Z^\bullet$ to $\sB_X = \Omega^{0,*}(\CC^d , Z^\bullet)$ an assignment that takes a holomorphic theory (as we've defined it) to the data of a BV theory.

We remark on a few essential points:
\begin{itemize}
\item[(1)] 
There is a $\ZZ$-grading on $\sB_Z$ given by the totalization of the internal grading of $Z^\bullet$ and the natural grading on Dolbeault forms.
This is the ghost grading of the BV theory.
\item[(2)] The non-degenerate pairing $\omega^\text{hol}$ extends to the space of fields by $\Omega^{0,*}(\CC^d)$-linearity. 
The operator $Q^\text{hol}$ extends to the Dolbeault complex since $\Omega^{0,*}(\CC^d)$ is a resolution for holomorphic functions.
\item[(3)] Since $Q^\text{hol}$ is holomorphic and $(Q^\text{hol})^2 = 0$, the total linear BV differential satisfies $(\dbar + Q^\text{hol})^2 = 0$. 
Thus, we have the following complex of BV fields $(\sB_Z, \dbar + Q^\text{hol})$ which resolves $(\sO^\text{hol}(\CC^d) \tensor Z^\bullet, Q^\text{hol})$. 
\end{itemize}

Here is the most basic, and perhaps most important for this paper, example of a free holomorphic theory. 

\begin{eg}
{\em The $\beta\gamma$ system on $\CC^d$.}
Suppose $V$ is any $\ZZ$-graded vector space.
Given this data, there is a natural holomorphic theory defined on $\CC^d$, for any $d$. 
The graded vector space underlying the fields of the $\beta\gamma$ system with values in $V$ is
\[
Z^\bullet = V \oplus \d^d z \cdot V^\vee [d-1] ,
\]
equipped with $Q^\text{hol} = 0$.
Here, $\d^d z\cdot  V^\vee$ is meant to indicate that we are looking at the fiber of the vector bundle $K_{\CC^d} \tensor V^\vee$ at $0 \in \CC^2$.
The pairing $\omega^\text{hol}$ is given by the obvious evaluation pairing $\<-,-\>$ between $V$ and its linear dual $V^\vee$. 
The resulting fields in the BV formalism are given by the Dolbeault complex
\[
(\gamma, \beta) \in \Omega^{0,*}(\CC^2) \tensor V \oplus \Omega^{d,*}(\CC^d) \tensor V^\vee [d-1] 
\] 
and the free action functional is simply
\[
L_{\rm free} (\beta,\gamma) = \<\beta, \dbar \gamma\> .
\]
Of course, for the holomorphic twist of $4d$ $\N = 1$ we will be most interested in the case $d = 2$. 
\end{eg}

Simply put, holomorphic Lagrangians are the ones that are natural from the point of view of the complex structure on the spacetime manifold we are putting the holomorphic theory on. 
That is, they are Lagrangians that are built from the fields of the BV theory which only involve holomorphic derivatives.
Our main examples of holomorphic Lagrangians will result from twists of supersymmetric interaction terms.
The precise definition is the following. 

\begin{dfn}
The space of {\em holomorphic local functionals} of a free holomorphic theory $(X, \omega^\text{hol}, Q^\text{hol})$ on $\CC^d$ is the $\ZZ$-graded sheaf
\deq{
\oloc^\text{hol}(Z^\bullet) = \Omega^{d,\text{hol}}_{\CC^d} \tensor_{D_{\CC^d}^\text{hol}} \prod_{n > 0} {\rm Hom}_{\sO^\text{hol}_{\CC^d}} (J^\text{hol}_{\CC^d} \tensor Z^\bullet)^{\tensor n}, \sO^\text{hol}_{\CC^d}) 
}
Here $\Omega^{d, \text{hol}}_{\CC^d}$ is the space of holomorphic top forms on $\CC^d$, $D_{\CC^d}^\text{hol}$ is the algebra of holomorphic differential operators, and $J^\text{hol}_{\CC^d}$ is the vector bundle of holomorphic $\infty$-jets of the trivial bundle on $\CC^d$. 
We note that $Q^\text{hol}$ determines a differential on the graded sheaf $\oloc^\text{hol}(Z^\bullet)$, giving it the structure of a sheaf of cochain complexes.
\end{dfn}

\begin{rmk} 
We have seen how every free holomorphic theory based on~$Z^\bullet$ gives rise to a free BV theory $\sB_Z$. 
We can also consider the complex of local functionals on $\sB_Z$ as in Definition \ref{dfn: localfun} equipped with the linear BV differential $(\oloc(\sB_Z), \dbar + Q^\text{hol})$. 
There is a quasi-isomorphism of sheaves on $\CC^d$
\[
\left(\oloc^\text{hol}(Z)[d], Q^\text{hol} \right) \simeq \left(\oloc(\sB_Z), \dbar + Q^\text{hol}\right)
\]
which is compatible with the BV brackets induced by $\omega^\text{hol}$ on both sides. 
This quasi-isomorphism tells us that holomorphic Lagrangians are precisely ordinary local functionals that are closed for the $\dbar$ operator. 
\end{rmk}

We can use the ordinary classical master equation for local functionals of a BV theory to make the following natural definition. 

\begin{dfn}
A {\em classical holomorphic theory} on a complex manifold $Y$ is the data of a free holomorphic theory $(Z^\bullet, Q^\text{hol}, (-,-)_Z)$ plus a holomorphic local functional
\[
I^\text{hol} \in \oloc^\text{hol}(Z^\bullet) 
\]
of cohomological degree $d$ such that the resulting local functional satisfies the classical master equation.
\end{dfn} 

\begin{rmk}\label{rmk: z22}
This is an extension of Remark~\ref{rmk: z2}.
In the case that the holomorphic theory is just $\ZZ/2$ graded, we require that $I^\text{hol}$ be even when $d$ is even, and $I^\text{hol}$ be odd when $d$ is odd. 
\end{rmk}

Before discussing how holomorphic Lagrangians arise from twists, we give an intrinsic holomorphic description of certain interactions one can add to the free $\beta\gamma$ system on $\CC^2$, or more generally on any Calabi--Yau manifold.

\begin{eg}
Let
\[
W \in \Sym^{\geq 2} (V^\vee) = \bigoplus_{n \geq 2} \Sym^{n} (V^\vee)
\]
be a polynomial on the complex vector space $V$ that is at least quadratic. 
This determines a holomorphic Lagrangian on the $\beta\gamma$ system via the formula
\[
L_W (\gamma, \beta) = W(\gamma) \,\d^2 z
\]
Note that in order for this Lagrangian to make sense we have used the obvious holomorphic volume form on $\CC^2$. 

Explicitly, this Lagrangian has the following interpretation. 
Choose a basis $\{e^i\}$ of $V$ and identify $W \in \CC[e_i]$.
Then, we can expand the $\gamma$ field 
\[
\gamma = \gamma^{0} + \gamma^{1} + \gamma^{2} \in \Omega^{0,*}(\CC^2) \tensor V
\]
as
\[
\gamma^0 = \gamma^0_i e^i \;\; , \;\; \gamma^{1} = \gamma^{1}_i e^i \;\; , \;\; \gamma^{2} = \gamma^{2}_i e^i 
\]
where where $\gamma^{a}_i$ is a form of type $(0,a)$.
The Lagrangian, which only remembers the top component, is of the form
\begin{equation}
  L_W(\gamma, \beta) = \frac{1}{2} \left(\partial_i W(\gamma^0) \gamma^{2}_i + \partial_i \partial_j W(\gamma^0) \gamma^{1}_i \wedge \gamma^{1}_j \right) \,\d^2 z .
\end{equation}
For example, if $V = \CC$ and $W = x^3 \in \Sym(V^\vee) = \CC[x]$, then the Lagrangian would read
\[
L_W(\gamma, \beta) = \frac{1}{2} \left((\gamma^0)^2 \gamma^{2} + \gamma^0 (\gamma^{1})^2 \right) \d^2 z .
\]
\end{eg}

\begin{rmk}
We remark is that while the free $\beta\gamma$ system is BRST $\ZZ$-graded, the deformed theory by a nonzero potential $W$ is only $\ZZ/2$-graded; see Remarks~\ref{rmk: z2} and~\ref{rmk: z22}. 
This is because $L_W$ is not homogenous with respect to the $\ZZ$-grading. 
It is, however, even with respect to the obvious forgetful map $\ZZ \to \ZZ/2$. 
In particular, the fields $\gamma^0$, $\gamma^2$, and~$\beta^1$ are of even parity and $\gamma^1$, $\beta^0$, and~$\beta^2$ are of odd parity. 
\end{rmk}

To match with the terminology in supersymmetry, we refer to the $\beta\gamma$ system on $\CC^2$ with values in $V$ equipped with the interaction $L_W (\gamma) = W(\gamma) \,\d^2 z$, as the $\beta\gamma$ system deformed by the superpotential $W$. 

\subsection{Holomorphic operators} 
\label{sec: hol op}

Local operators supported at a point in an arbitrary field theory can be identified with polynomials of derivatives of fields at that that point. 
In a topological twist, since all derivatives are made exact by the supercharge, the only operators left over are given by evaluating the fields at a particular point. 
In a holomorphic theory, not all derivatives are made exact, but all the anti-holomorphic ones are. 
Thus, the operators left over are given by (formal) polynomials in holomorphic derivatives of the fields. 
The precise definition is the following. 

\begin{dfn}\label{dfn: hollocal}
Let $(Z^\bullet, \omega^\text{hol}, Q^\text{hol}, I^\text{hol})$ be a holomorphic theory on $\CC^d$. 
The $\ZZ$-graded vector space of {\em classical holomorphic local operators} at $w \in \CC^d$ is defined by 
\[
\Obs_{w}^\text{hol} = \sO(Z^\bullet [\![z_1 - w_1,\ldots, z_d-w_d]\!]) = \Hat{\Sym} \left(Z^\bullet [\![z_1 - w_1,\ldots, z_d-w_d]\!]\right)^\vee .
\]
Here, as always, $(-)^\vee$ denotes the continuous linear dual. 
\end{dfn}

\begin{rmk}
  \label{rmk:sigma-model}
Let $\hD^d$ be the formal (complex) $d$-disk, whose ring of functions is $\sO(\hD^d) = \CC[\![z_1,\ldots, z_d]\!]$. 
The holomorphic operators are just functionals on the space $\sO(\hD^d) \tensor Z^\bullet$.
If one thinks of $\sO(\hD^d) \tensor Z^\bullet$ as a formal $\sigma$-model of maps $\hD^n \to Z^\bullet$, then we are simply considering the corresponding operators of the $\sigma$-model. 
In other words, the holomorphic operators are the operators on the formal completion of the space of maps $\CC^d \to Z^\bullet$ at the point $w \in \CC^d$.
\end{rmk}

\begin{rmk} 
We have already mentioned that through the program of Costello-Gwilliam \cite{CG1,CG2} one attaches a factorization algebra to any perturbative QFT. 
The local operators, as we've defined them, see only a small piece of this factorization algebra. 
Indeed, $\Obs_w^{\rm hol}$ is the value of the factorization algebra on a disk in $\CC^d$ centered at $w$ as the radius gets infinitesimally small. 
When $d=1$ this is the same relationship between holomorphic factorization algebras and vertex algebras, whereby the state space of the vertex algebra is given by the holomorphic local operators. 
For general $d$, the factorization algebra should endow the space of holomorphic local operators with a higher dimensional analog of the OPE.
\end{rmk}

Every holomorphic local operator determines a functional on the solutions to the equations of motion to the holomorphic theory.
Explicitly, the linear element 
\[
(z_1 - w_1)^{-k_1-1} \cdots (z_d - w_d)^{-k_d - 1} v^\vee \in \Obs_{w}^\text{hol}
\]
can be understood as the operator
\[
\varphi \in \sO^\text{hol} (\CC^d) \tensor V \mapsto \left\<v^\vee , \frac{\partial^{k_1}}{\partial z_1^{k_1}} \cdots \frac{\partial^{k_d}}{\partial z_d^{k_d}} \varphi \right\> (z = w)
\]
Here, the braces denote the contraction between $V$ and its dual $V^\vee$. 
Non-linear operators can be understood similarly. 

The piece of the BV differential $Q^\text{hol} + \{I^\text{hol},-\}$ acts on the space $\Obs_{w}^\text{hol}$.
Inherently, this operator is square zero, so we find that
\[
\left(\Obs_{w}^\text{hol} , Q^\text{hol} + \{I^\text{hol},-\}\right)
\]
is a cochain complex. 
We will refer to this as the {\em cochain complex of holomorphic local operators}. 

\begin{rmk}
\label{rmk: ss1}
We could have started with the full classical BV description of a holomorphic theory with fields
\[
\sE_V = \left(\Omega^{0,*}(\CC^d, Z^\bullet), \dbar + Q^\text{hol}\right)  .
\] 
The space of local operators $\Obs_{w}$ of $\sE_V$ at $w \in \CC^d$, as defined in Definition \ref{dfn: susy ops}, is much bigger than the space of holomorphic local operators. 
Indeed, such operators could involve {\em anti}-holomorphic derivatives ${\partial}/{\partial \zbar_i}$. 
However, $\Obs_w$ is a cochain complex equipped with the classical BV differential $\dbar + Q^\text{hol} + \{I,-\}$.
This results in a filtration on the observables by antiholomorphic form degree, and therefore a corresponding spectral sequence, on whose $E_0$ page we take the cohomology of the classical observables with respect to the $\dbar$ operator. 
Since we are dealing with {\em local operators}, there is no higher $\dbar$-cohomology and the $E_1$-page can be identified with the cochain complex of holomorphic local operators $\left(\Obs_{w}^\text{hol}, Q^\text{hol} + \{I^\text{hol},-\}\right)$.
\end{rmk}

Since the fields of the BV theory associated to a holomorphic theory are built from the Dolbeault complex, there is a natural action by the unitary group $U(d)$ on the space of fields. 
In fact, the group of biholomorphisms on $\CC^d$ acts on the Dolbeault complex of $\CC^d$ simply by pullback of differential forms.
Given any biholomorphism $\varphi : \CC^d \to \CC^d$ there is an automorphism of complexes
\[
\varphi^* : \Omega^{0,*}(\CC^d) \to \Omega^{0,*}(\CC^d) 
\]
sending $\alpha \mapsto \varphi^* \alpha$, which is compatible with $\dbar$ by holomorphicity. 

Moreover, if we additionally assume that $I^\text{hol}$ is $U(d)$-invariant, the resulting BV action has a symmetry by the group $U(d)$. 
In this case, this symmetry determines an action of $U(d)$ on the classical holomorphic local operators $\Obs^\text{hol}_w$. 

\subsection{$U(2)$-equivariant description and holomorphic twist}

The next two sections involve the proof of the following proposition, which completely characterizes the holomorphic twist of the chiral multiplet. 

\begin{prop} \label{prop:F-terms}
Consider the $\N = 1$ chiral multiplet on~$\RR^4$ with holomorphic superpotential $W$. 
Let $Q \in \Pi S_+ \subset \st$ be a chiral element in the Lie algebra of supertranslations. 
The twist of the $\N=1$ chiral multiplet with respect to $Q$ is equivalent to the deformation of the free $\beta\gamma$ system on $\CC^2$ by the interaction
\[
  L_W(\beta,\gamma) = W(\gamma) \, \d^2 z.
\]
Here we are extracting the top component of the mixed form $W(\gamma) \in \Omega^{0,*}(\C^2)$.
\end{prop}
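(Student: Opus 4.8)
The plan is to compute the twist directly from the component BV action~\eqref{BVaction-elim}, by setting the supersymmetry ghost $\epar$ equal to the fixed chiral supercharge $Q \in \Pi S_+$ and all remaining ghosts to zero, and then taking cohomology of the resulting twisting differential $t = \{S_Q,-\}$ on the full BV complex $\sB = T^*[-1]\sE$. Since $Q \in S_+ \subset Y$, nilpotence of $t$ is automatic. The first step is kinematic: a nonzero $Q$ fixes a complex structure on $\R^4$. Concretely, the map $\Gamma(Q \otimes -) \colon S_- \to \ST_\C$ is injective, so its image is a two-dimensional subspace of $\ST_\C$; these are precisely the translations rendered $Q$-exact, and I would declare them the antiholomorphic directions, thereby identifying $\ST \cong \C^2$ with $\bar\partial$ spanning this image. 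Using $S_+ \cong \Lambda^\text{even}(L)$ and $S_- \cong \Lambda^\text{odd}(L)$ from~\S\ref{ssec:susyalg}, I would take $Q$ to be the generator of $\Lambda^0 L$, so that the spinor pairing $Q(-)$ and the slashed derivatives acquire an explicit holomorphic/antiholomorphic decomposition.

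With these identifications the core of the argument is the cohomology of the free part of $t$: the linear differential comprising both the algebraic supercharge action read off from $L_1$ (with $\epar = Q$) and the antiholomorphic derivatives supplied, upon contraction with $Q$, by the slashed kinetic terms. I would compute this cohomology on the full $T^*[-1]\sE$, treating the antifields $\phi^*,\psi^*$ on equal footing with the fields. The expected mechanism is that the antiholomorphic multiplet, together with the appropriate spinor and antifield components, resolves the $\bar\partial$ operator, so that the surviving cohomology organizes into the two Dolbeault complexes
\[
\gamma \in \Omega^{0,*}(\C^2) \otimes V, \qquad \beta \in \Omega^{2,*}(\C^2) \otimes V^\vee[1],
\]
with induced differential $\bar\partial$; here $\gamma^0$ descends from a chiral scalar, while $\beta$ assembles from the antifield of the scalar together with the surviving fermion components. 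I would then check that the $(-1)$-shifted antibracket on $\sB$ descends to the evaluation pairing between $V$ and $V^\vee$ against $\d^2 z$, identifying the output with the free $\beta\gamma$ system of the example above.

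The final step is to track the superpotential. On cohomology, the operator $\{I,-\}$ inherited from the interaction terms of~\eqref{BVaction-elim}---the Yukawa $-\tfrac12 W^{ij}\psi_i\psi_j$, the potential $-|W|^2$, and the antifield term $-\epar\psi^{*i}\bar W_i$---must reduce to $L_W(\gamma) = W(\gamma)\,\d^2 z$, whose top Dolbeault component $\partial_i W(\gamma^0)\,\gamma^2_i + \partial_i\partial_j W(\gamma^0)\,\gamma^1_i \wedge \gamma^1_j$ was computed in the worked example above. Because the representatives for $\gamma^1$ and $\gamma^2$ are built from precisely the fermion and antifield components appearing in the superpotential vertices, I expect the component expansion of $W(\gamma)$ to reproduce the twisted interaction term by term. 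The main obstacle---and the source of the promised subtlety---is reconciling the two formulations: with auxiliary fields the BV action is linear in antifields, whereas after their elimination the quadratic term $L_2 = -\epar\psi^{*i}\bar\epar\bar\psi^*_i$ appears. I would exploit the fact that, on the twist by $Q \in S_+$ (so that $\bar\epar = 0$), this term vanishes identically, so that both the auxiliary-field complex $\tilde\sE = \O(\ST)\otimes\Lambda^*(S_+^\vee)$ and the component complex land on the same $\beta\gamma$ theory with superpotential. To keep the statement uniform over the nilpotence variety, I would carry out the entire computation $U(2)$-equivariantly, recording the dependence on $Q$ through the induced complex structure.
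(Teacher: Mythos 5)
Your proposal follows essentially the same route as the paper's own treatment: the paper likewise works $U(2)$-equivariantly, sets $\epar$ to the chiral supercharge and all other ghosts to zero, cancels the resulting contractible pairs so that the survivors assemble into the two Dolbeault complexes $\gamma$ and $\beta$ with the evaluation pairing, identifies the residual superpotential vertices with $[W(\gamma)]_\text{top}$, and---in the formulation without auxiliary fields starting from~\eqref{BVaction-elim}---uses exactly your observation that the antifield-quadratic term $L_2$ is proportional to both $\epar$ and $\bar\epar$ and hence vanishes on the twist, which is what reconciles the two formulations. The one correction your write-up needs is a chirality swap in the bookkeeping: with $Q \in S_+$ one has $\delta\phi = Q\psi \neq 0$, so the chiral scalar $\phi$ cancels against a component of~$\psi$ (and $\psi_+$ against $F$ in the auxiliary formulation); it is the conjugate fields $\bar\phi$, $\bar\psi$, together with the antifield $\psi^{*}_-$, that assemble into $\gamma$, and correspondingly the vertices surviving the twist are the conjugate ones, $-\tfrac{1}{2}\bar{W}_{ij}\bar\psi^i\wedge\bar\psi^j$ and $\bar{W}_i\psi^{*i}_-$, rather than $-\tfrac{1}{2}W^{ij}\psi_i\psi_j$, these being what is then relabeled as $[W(\gamma)]_\text{top}$ in the $\beta\gamma$ variables.
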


The $\beta\gamma$ system was our typical example of a free holomorphic theory. 
The deformation we are seeing in the twist of the theory with superpotential is a deformation of this theory by a Lagrangian that is holomorphic by assumption. 
Therefore, the twist of the chiral multiplet in the presence of a superpotential arises as the BV-ification of a holomorphic theory on $\CC^2$. 

The next two sections provide two separate proofs of this proposition. 
The first uses the standard description of off-shell supersymmetry via the introduction of auxiliary fields and closely follows~\S\ref{sec: bv aux} above. 
The second uses the off-shell BV description which does not involve auxiliary fields at the expense of introducing some higher interaction terms in the Lagrangian. 
We then further show that the computation of the twist can be packaged neatly in terms of the  pure spinor superfield formalism.

\subsection{The twist in the presence of auxiliary fields}
\label{sec: twistaux}
In this section we compute the holomorphic twist, and prove Proposition~\ref{prop:F-terms}, using the standard auxiliary fields. 
It is a straightforward exercise to compute the decomposition of the spin representation under the subgroup $U(2) \subset SO(4)$, 
As we have recalled above, the complex Dirac spinor (with its decomposition into Weyl spinors) is constructed as 
\deq{
D = \Lambda^* L; \qquad S_+ = \Lambda^0 L \oplus \Lambda^2 L, \quad S_- = \Lambda^1 L \cong L.
}
The $U(2)$ subgroup is nothing other than the stabilizer of the subspace $L \subset \ST_\C$, which corresponds to a complex structure on~$\ST$; its complexification is just $\lie{gl}(L)$. As such, the spinors and the vector decompose into~$U(2)$ representations as
\deq[spindecomp]{
S_+ \rightarrow \rep{1}^{1} \oplus \rep{1}^{-1}, \quad S_- \rightarrow \rep{2}^0, \quad
\st^0 \rightarrow \rep{2}^1 \oplus \rep{2}^{-1}.
}
The dual of the vector transforms identically, but of course the pairing connects opposite $U(1)$ charges. 
Here boldface integers label $SU(2)$ representations by their dimensions, and the exponent represents the charge under the center of~$U(2)$, being $U(1)$. The isomorphism between the vector and the bispinor decomposes into the two obvious maps
\deq{
\rep{1}^1 \otimes \rep{2}^0 \cong  \rep{2}^1, \quad \rep{1}^{-1} \otimes \rep{2}^0 \cong  \rep{2}^{-1}.
}
In what follows, we will write the components of the $\psi$ field under the decompositions~\eqref{spindecomp} as $\psi_+$, $\psi_-$, and~$\bar\psi$, and similarly for $\epar$ and~$\bar\epar$. The supersymmetry transformations~\eqref{susy-F} then become
\begin{align}
\delta \phi &= \epar_+ \psi_- - \epar_- \psi_+, \nonumber\\
\delta \psi_+ &= \epar_+ F + i \bar\epar \left(  \partial \phi \right) , \nonumber\\
\delta \psi_- &= \epar_- F + i \bar\epar \left(  \bar\partial \phi \right), \label{susy-holo-1}\\
\delta F &= - i \bar\epar \left( \partial \psi_- - \bar\partial \psi_+ \right), \nonumber
\end{align}
and their complex conjugates reduce to
\begin{align}
\delta \bar\phi &= \bar\epar \bar\psi, \nonumber\\
\delta \bar\psi &= i \epar_+ \bar\partial \bar\phi  +  i \epar_- \partial \bar\phi 
	+ \bar\epar \bar{F}, \label{susy-holo-2}\\
\delta \bar{F} &= - i \epar_+  \bar\partial \wedge \bar\psi  - i \epar_-  \partial \wedge \bar\psi. \nonumber
\end{align}
(Here, we are again writing a differential of Chevalley--Eilenberg type, acting on linear operators rather than fields.) Using these transformations, it is trivial to read off the twisting differential $\delta = \{ Q_-, \cdot \}$ by setting $\epar_+ = 1$ and other ghosts to zero:\footnote{Of course, choosing $\epar_-$ instead would give an equivalent result with respect to a different complex structure.}
\begin{equation}
 \begin{aligned}[c]
 \delta\phi &= \psi_-,  \\
  \delta \psi_- &= 0,  \\
 \delta \psi_+ &= F,  \\
 \delta F &= 0,   
 \end{aligned}
 \qquad\qquad
 \begin{aligned}[c]
 \delta \bar\phi &= 0,  \\
 \delta \bar\psi &= i \bar\partial \bar \phi,  \\
 \delta \bar{F} &= -  i \bar\partial \wedge \bar\psi. 
 \end{aligned}
 \label{susy-twisting}
\end{equation}
We thus arrive at a cochain complex which is just the Dolbeault resolution of holomorphic functions, $\Omega^{0,*}(\CC^2)$. The twist of the chiral multiplet therefore becomes the field $\gamma$ of the $\beta\gamma$ system; the BV antifields will provide the corresponding $\beta$ field. This is the first half of the proof of Prop.~\ref{prop:F-terms}, in auxiliary-field language.

In fact, the full structure of~\eqref{susy-holo-2} showcases the structure of the twist as a family over the nilpotence variety. It is immediate to see that, for a generic point $(\epar_+,\epar_-)$, one obtains the Dolbeault complex, with a differential corresponding to a deformation of complex structure:
\deq{
  \bar\partial_\epar = d\bar{z}_i \, \left( \epar_+ \pdv{ }{\bar{z}_i} + \epar_- \epsilon_{ij} \pdv{ }{z_j} \right) .
}
The connected component of the nilpotence variety corresponding to the $S_+$, which is just a copy of~$\C P^1$, is thus explicitly identified with the space of complex structures on~$\R^4$, as must happen on general grounds~\cite{NV}. At the point $(\epar_+,\epar_-) = (1,0)$, the $\epar_-$ deformation is just the tangent space to the nilpotence variety. 

Further, the $\bar\epar$ deformations represent the normal bundle to the nilpotence variety. This must always contain a copy of the defining representation of $SU(n)$, witnessing the fact that  antiholomorphic translations are nullhomotopic in the twisted theory. In a topologically twisted theory, the supercharges providing a nullhomotopy for the translations are responsible for the phenomenon of \emph{topological descent;} here, a holomorphic analogue is present, which allows us to construct nonlocal holomorphic operators of ghost number zero out of local operators of nonzero ghost number. While the story of topological descent is classical, such higher structures were considered again recently in~\cite{BBBDN}. In our case, the relevant operad is a holomorphic analogue of the little discs operad that appears in topological field theories; in complex dimension $d$, there is a class in the cohomology of the binary part of the operad in degree $d-1$, corresponding to the Dolbeault cohomology of punctured $\C^d$; this is nothing other than the pairing on the fields of the $\beta\gamma$ system.  We plan to give a detailed discussion of this operad in future work.

Let's return now to the BV action~\eqref{BVaction-F}, decompose it in $U(2)$-equivariant language including antifields,  and then simply set $\epar_+$ to one and all other ghost antifields to zero. This will give the deformation of the action that corresponds to the deformation of the original BV differential by the twisting supercharge. The $U(2)$-equivariant decomposition is straightforward, and the result for the first two terms is
\begin{equation}
\begin{aligned}[c]
  L_\text{free} &= - \partial\bar\phi \wedge \bar\partial\phi  -  \bar\partial\bar\phi \wedge \partial\phi- i \bar\psi \left( \partial \psi_- + \bar\partial \psi_+ \right) + \bar{F} F, \\
L_\text{int} &= - \frac{1}{2} W^{ij} \left( \psi_{i+} \psi_{j-} - \psi_{i-} \psi_{j+} \right) - \frac{1}{2} \bar{W}_{ij} \bar\psi^i \wedge \bar\psi^j + W^i F_i + \bar{W}_i \bar{F}^i.
\end{aligned}
\label{eqs:holoL}
\end{equation}
For the antifield portion of the action, we have
\begin{multline}
L_\text{a.f.} = \phi^* ( \epar_+ \psi_- - \epar_- \psi_+ ) + F (\psi^*_+ \epar_- - \psi^*_- \epar_+ ) - i \bar\epar (\psi^*_+ \bar\partial \phi + \psi^*_- \partial \phi) \\
 - i F^* \bar\epar (\partial \psi_- + \bar\partial \psi_+ ) 
 + \bar\phi^* \bar\epar \bar\psi + F \bar\psi^* \bar\epar
 - i \bar\psi^* (\epar_+ \bar\partial \bar\phi + \epar_- \partial \bar\phi) 
 + i \bar{F}^* (\epar_+ [\bar\partial \bar\psi] + \epar_- [\partial \bar\psi] ) .
 \label{eqs:holoLaf}
\end{multline}
As such, the BV action for the holomorphic twist is easy to write down:
\deq[BVaction-holotwist]{
  L_\text{tw} = L_\text{free} + L_\text{int} + \phi^* \psi_- - F \psi^*_- - i \bar\psi^* \bar\partial \bar\phi + i \bar{F}^* [ \bar\partial \bar\psi],
}
reproducing the differential~\eqref{susy-twisting} above. After discarding the contractible part of this complex, the rest of the action takes a much simpler form: $L_\text{free}$ disappears entirely, reflecting the fact that it arises from a pairing between the chiral field and its $Q$-trivial complex conjugate. This is a reflection of the standard piece of dogma that $D$-terms are not relevant for index computations.

On the other hand, $L_\text{int}$ (which comes from the $F$-term of the original action) does play a significant role: it contributes the terms
\deq[twisted-F-terms]{
L_\text{int} \sim - \frac{1}{2} \bar{W}_{ij} \bar\psi^i \wedge \bar\psi^j + \bar{W}_i \bar{F}^i.
}
Identifying the fields in the twisted theory with a total Dolbeault form $\gamma$, this reduces to the simple expression
\deq{
  L_\text{int} = [W(\gamma)]_\text{top}.
}
Of course, the term in the action pairs this $(0,2)$ form with the Calabi--Yau form on spacetime. It is also ineresting to note that the kinetic terms of the twisted theory do \emph{not} arise from the kinetic terms of the full theory; rather, they come from the terms pairing antifields with fields, and therefore from the supersymmetry transformations of the chiral multiplet (which become the internal differential in the twisted theory).

\subsection{The twist without auxiliary fields} 
Armed with the results of the above calculations, it is straightforward  to compute the twist of the theory in the BV formalism without auxiliary fields. Note that this is explicitly an ``$L_\infty$''  twist, in the sense that we are deforming a Maurer--Cartan element that is not simply a BRST differential. As above, we will first  reduce to holomorphic  language, following \eqref{eqs:holoL} and~\eqref{eqs:holoLaf} above. Then we will integrate out the auxiliary field; in holomorphic language, its equations of motion are
\deq[holo-Fterm]{
\begin{aligned}[c]
\bar{F}^i &= -W^i - \epar_+ \psi^*_- + \epar_- \psi^*_+, \\
F_i &= - \bar{W}_i - \bar\epar \bar\psi^*_i.
\end{aligned}
}
Having done this, we can set the supersymmetry ghosts to the background values corresponding  to  the holomorphically twisted theory: $\epar_+ = 1$, and $\epar_-  = \bar\epar = 0$.  The analogue of~\eqref{eqs:holoL}, imposing~\eqref{holo-Fterm}, is then  
\deq[eqs:holoL-noF]{
\begin{aligned}[c]
  L_\text{free} &\rightarrow - \partial\bar\phi \wedge \bar\partial\phi  -  \bar\partial\bar\phi \wedge \partial\phi- i \bar\psi \left( \partial \psi_- + \bar\partial \psi_+ \right) + (W^i + \psi_-^{*i})\bar W_i, \\
L_\text{int} &\rightarrow - \frac{1}{2} W^{ij} \left( \psi_{i+} \psi_{j-} - \psi_{i-} \psi_{j+} \right) - \frac{1}{2} \bar{W}_{ij} \bar\psi^i \wedge \bar\psi^j - W^i \bar{W}_i  - \bar{W}_i (W^i + \psi^{*i}_-).
\end{aligned}
}
Notice that the terms containing the antifield $\psi^{*i}_-$ cancel in the sum of these two terms! The BV portion of the action correspondingly becomes
\deq[eqs:holoLaf-noF]{
L_\text{a.f.} \rightarrow \phi^{*i} \psi_{-i} + \bar{W}_i \psi^{*i}_- - i \bar\psi^*_i \bar\partial \bar\phi^i. 
}
Here, we have simply set the antifields of~$F$ to zero, as we did above in the full theory.
Note that the term in the BV action quadratic in antifields plays no role in the twisted theory in this dimension, since it contains both $\epar$ and~$\bar\epar$, and is therefore set to zero on any background corresponding to  a twist. In this sense, the $L_\infty$ structure is lost in the holomorphic theory. It will, of course, play an important role in theories where the entire action of supertranslations is gauged---i.e., in the coupling of chiral matter to supergravity.

Furthermore, the portion of the complex that was contractible before is \emph{not} completely contractible now!  The pair $\phi, \psi_-$ still form an acyclic boson--fermion pair, and can be removed; the resulting total BV action is
\deq[BVaction-tot]{
\mathfrak{L} = - i \bar\psi^i \wedge \bar\partial \psi_{+i} -  \frac{1}{2} \bar{W}_{ij} \bar\psi^i \wedge \bar\psi^j  + \bar{W}_i \psi^{*i}_- - i \bar\psi^*_i \bar\partial \bar\phi^i.
}
The quadratic term $|W_i|^2$ disappears because $\phi$ still belongs to an acyclic pair. As above, the fields that survive to the twist can still be identified with two copies of the Dolbeault complex, but the components of the antifield that cannot be integrated out now play the roles of the auxiliary fields. (For the assignments of $U(1)$ charges, see below in Table~\ref{tab:gradings}.) We identify
\deq{
  \gamma = (\bar\phi, \bar\psi, \psi_-^*), \qquad
  \beta = (\psi_+, \bar\psi^*, \bar\phi^*).
}
As above, the antifield of~$\gamma^i$ is $\beta^{2-i}$. 
Making these identifications and integrating once by parts, we can rewrite~\eqref{BVaction-tot} as
\deq[BVaction-ident]{
  \mathfrak{L} = i \langle \beta, \bar\partial \gamma \rangle + [W(\gamma)]_\text{top}, 
}
in perfect agreement with the result above.

\subsection{Twisting in the pure spinor superfield formalism}
\label{ssec:PStwist}

In this section, we arrive at the twist of the chiral multiplet in yet a third way, Recall that superfields are elements of  the coordinate ring of super-Minkowski space,  which (by definition) is a free graded commutative algebra:
\deq{
\O(\st) = \O(\ST) \otimes \C[ \theta^\alpha, \bar\theta^{\dot\alpha}].
}
Here the $\theta\in S_+$ are odd coordinates; we will denote even coordinates by~$x^\mu \in V$, $\mu = 1,\ldots,4$. 

The space of superfields carries the obvious representation of~$\so(4)$. However, we will find it convenient to  reduce immediately to $\lie{u}(2)$-equivariant language, as we are interested in considering the holomorphic twist. The space of superfields then looks like
\deq{
\C[z^i,\bar{z}^i; \theta^+,\theta^-, \bar\theta] , \quad  i =1,2.
}
The $\lie{u}(2)$ representations are the obvious ones: 
\deq{
  \rep{2}^{-1}, \rep{2}^{1}; \rep{1}^1, \rep{1}^{-1}, \rep{2}^0.
}
Note that we take the momentum operator $\partial/\partial z$, rather than $z$ itself, to transform in the fundamental of~$U(2)$. 
The algebra of supertranslations acts on the space of superfields by the left and right regular representations, which mutually commute:
\deq[QD-SO(4)]{
\begin{aligned}[c]
Q_\alpha &= i \pdv{}{\theta^\alpha} - \sigma^\mu_{\alpha\dot\beta} \bar\theta^{\dot\beta} \pdv{}{x^\mu},
\\
\bar{Q}_{\dot\beta} &=- i \pdv{}{\bar\theta^{\dot\beta}} + \sigma^\mu_{\alpha\dot\beta} \theta^{\alpha} \pdv{}{x^\mu},
\end{aligned}
\qquad
\begin{aligned}[c]
D_\alpha &= \pdv{}{\theta^\alpha} - i \sigma^\mu_{\alpha\dot\beta} \bar\theta^{\dot\beta} \pdv{}{x^\mu},
\\
\bar{D}_{\dot\beta} &= - \pdv{}{\bar\theta^{\dot\beta}} + i \sigma^\mu_{\alpha\dot\beta} \theta^{\alpha} \pdv{}{x^\mu}.
\end{aligned}
}
It is easy to  reduce these expressions to holomorphic language, and the results look like
\deq[QD-U(2)]{
\begin{aligned}[c]
&Q_+ = i \pdv{}{\theta^-} -  \bar\theta \wedge \pdv{}{{z}},
\\
&Q_- = i \pdv{}{\theta^+} -  \bar\theta \wedge \pdv{}{\bar{z}},
\\
&\bar{Q} =- i \pdv{}{\bar\theta} + \theta^-  \pdv{}{z} +  \theta^+  \pdv{}{\bar{z}},
\end{aligned}
\qquad
\begin{aligned}[c]
&D_+ = \pdv{}{\theta^-} - i \bar\theta \wedge \pdv{}{{z}},
\\
&D_- = \pdv{}{\theta^+} - i \bar\theta \wedge \pdv{}{\bar{z}},
\\
&\bar{D} = - \pdv{}{\bar\theta} + i \theta^- \pdv{}{z} + i \theta^+  \pdv{}{\bar{z}}.
\end{aligned}
}

As we reviewed above in~\S\ref{sssec:PSSF}, the chiral superfield in the pure spinor superfield formalism can be obtained from the $\O(Y)$-module  $\Gamma = \C[u_+,u_-]$, and is then isomorphic to
\deq{
  H^*(\O(\st) \otimes \Gamma, \Berk) \cong \O(\ST \oplus \Pi S_-).
}
The action of the supersymmetry generators can be obtained  by throwing out nullhomotopic  terms from~\eqref{QD-U(2)} above:
\deq{
  Q_+ = -  \bar\theta \wedge \pdv{}{z}, \quad
  Q_- = - \bar\theta \wedge \pdv{}{\bar z}, \quad
  \bar Q = -i \pdv{}{\bar\theta}.
}
This makes it apparent  that $Q_-$ is nothing other than the Dolbeault differential, and in fact that the differential obtained from generic values $(\epar_+,\epar_-)$ is just the Dolbeault differential for a deformation of the complex structure on~$\ST$. We will return to this point later on. The  $\bar\epar$ differential, on the other hand, is the same cancelling differential we saw above.

\subsection{Gauge interactions and general $\N=1$ theories}
\label{ssec:gauge}
In this section, we compute the twist of the four-dimensional $\N=1$ vector multiplet. Having done this, we will give a general description (Proposition~\ref{prop:gen4d}) for the holomorphic twist of any four-dimensional supersymmetric theory; this applies just as well to the minimal (holomorphic) twist of theories with more supersymmetry, since they are special examples of $\N=1$ theories. While we will not explicitly compute characters for gauge theories in this work, we will use our description of the gauge multiplet in the discussion of the twisted flavor current multiplet below.

The vector multiplet consists of fields $(A, \lambda, \Bar{\lambda}, D)$ of ghost degree zero, where 
\begin{itemize}
\item $A \in \Omega^1(\RR^4) \tensor \fg$ is a connection one-form with values in the Lie algebra~$\fg$;
\item $(\lambda,\bar\lambda) \in C^\infty(\ST) \tensor \fg \tensor \Pi \left( S_+ \oplus S_- \right)$ are a pair of $\fg$-valued spinors of opposite chirality; 
\item $D \in C^\infty(\ST^4) \tensor \fg$ is an auxiliary field.
\end{itemize}
In addition, there is a ghost field $c \in C^\infty(\ST) \tensor \fg$ of ghost degree $-1$.

This data compiles together to define a super dg Lie algebra where the differential is of the form
\begin{equation}
  \begin{tikzcd}[row sep = tiny]
  \ul{0} & & \ul{1} \\
  C^\infty(\RR^4) \tensor \fg \ar[rr, "\d"] & & \Omega^{1} (\RR^4) \tensor \fg  \\
 & & C^\infty(\RR^4) \tensor \fg \tensor \Pi \left( S_+ + S_- \right) \\ 
 & &  C^\infty(\RR^4) \tensor \fg, 
\end{tikzcd}
\label{eq:regraded}
\end{equation}
and the Lie bracket is extended from the matrix commutator. 
Note that it is the shift by one of the fields, rather than simply the fields, that carry a dg Lie structure; as such, the grading in~\eqref{eq:regraded} above is shifted by one from the natural grading on fields. The Chevalley--Eilenberg complex of this dg Lie algebra, which incorporates the shift by one, returns the operators of the theory, together with their BRST differential. 
In this diagram, the outer two lines are of even parity and the last two lines are of even parity (although note that the homological degree means that $c$ has odd parity overall).  
The horizontal degree is the BRST degree. 

Let us now pass to the dual description, in which gauge transformations on fields are represented by a Chevalley--Eilenberg differential on operators. The internal differential of degree $+1$ then acts by
\deq[eq:internal-diff]{
  \begin{aligned}[c]
    \kappa c &= [c, c], \\
    \kappa A &= d c + [A, c],\\
    \kappa \lambda &= [\lambda, c], \\
    \kappa D &= [D,c].
  \end{aligned}
}
(Note that we are abusing notation slightly, by using the same letters in~\eqref{eq:internal-diff} for operators as we used for the corresponding fields.) The cohomology of this differential accomplishes the task of passing to the gauge-invariant sector of the theory.

The action of $\N=1$ supertranslations is the familiar one for the $\N=1$ vector multiplet. This action can be represented by a differential, as we did for the chiral multiplet above, as follows:
\begin{equation}
  \label{eq:V-trans}
  \begin{aligned}[c]
    \delta c &=  0 , \\ 
    \delta A &= \bar\epar \lambda + \bar\lambda \epar, \\
    \delta \lambda &= i F^+ \epar + \epar D, \\
    \delta D &= i \left( -\bar\epar \codslash \lambda + \epar \codslash \bar\lambda \right)
  \end{aligned}
\end{equation}
As  above, we need to either restore the obvious $a^\mu\partial_\mu$ terms, or set one of $\epar$ or~$\bar\epar$ to zero (i.e., restrict to the nilpotence variety).  The notation deserves a word of explanation: When we here write the product~$S_+ \otimes S_-$, as in~$\bar\epar\lambda$, we are implicitly identifying it with the vector representation. Furthermore, in the variation of~$\lambda$, we have identified
\deq{
  \Hom(S_+,S_+) \cong S_+ \tensor S_+^\vee \cong S_+ \tensor S_+,
}
using the antisymmetric pairing on the Weyl spinor; this representation contains the self-dual two-form, denoted by~$F_+$. Finally, in the last line, the vector, spinor, and conjugate spinor are contracted to form a scalar. 

To twist this multiplet, we proceed as in~\S\ref{sec: twistaux}: first, we choose a complex structure on~$M$, and perform the $U(2)$-equivariant decomposition of the fields; then,  we set $\epar$ to a value corresponding to an appropriate point on the nilpotence variety.
The $U(2)$ decomposition of the transformations~\eqref{eq:V-trans} is
\begin{equation}
  \label{eq:V-trans-holo}
  \begin{aligned}[c]
    \delta c &= 0, \\
    \delta A &= \bar\epar \lambda_+ + \epar_+ \bar\lambda, \\
    \delta \bar A &= \bar\epar \lambda_- + \epar_- \bar\lambda, \\
    \delta \lambda_+ &= \epar_+ (D + i F_0) + i\epar_- F_2, \\
    \delta \lambda_- &= \epar_- (D + i F_0) + i\epar_+ F_{-2}, \\
    \delta \bar\lambda &= \bar\epar D + F^-  \bar\epar, \\
    \delta D &=  i \left( - \bar\epar  \wedge \left( \codslash \lambda_- + \codslashbar  \lambda_+ \right) 
    + \epar_+ \codslashbar  \wedge \bar\lambda +  \epar_- \codslash \wedge \bar\lambda \right).
  \end{aligned}
\end{equation}
Here $F$ with a subscript labels the three scalar components of the self-dual two-form, under the $U(2)$-equivariant decomposition, with their $U(1)$ charges. 

As above, to  read off the twisting differential, we can just set $\epar_+$ to one  and  all other parameters to zero. When we have done this, it is immediate that $(A,\bar\lambda)$ form an acyclic pair and can be discarded. Furthermore, $D$ is closed under the differential, but it is set equal to a component $\partial\wedge \bar{A}$ of the field strength by the image of~$\lambda_+$, so that these two generators can  also be discarded. To obtain the full differential of  the twisted theory, we will now have to add back in the original internal differential~\eqref{eq:internal-diff}. The result is
\begin{equation}
  \label{eq:V-trans-reduced}
  \begin{aligned}[c]
    \delta c &= [c,c], \\
    \delta \bar A &= \bar\partial c + [\bar{A},c], \\
    \delta \lambda_- &= \left( \bar\partial + \bar{A} \right) \wedge \bar{A} + [\lambda_-,c].
  \end{aligned}
\end{equation}
To sum up, the fields of the twisted vector multiplet assemble into a single copy of the Dolbeault complex, $\sA \in \Omega^{0,*}(M,\lie{g})[1]$, but shifted so that the $(0,1)$ form---which arises from the physical gauge field---appears in ghost number zero. 
The differential, however, is slightly different: it reads 
\deq{
  \delta \sA = d\sA + [\sA,\sA],
}
where $d$ is the de~Rham differential. This arises from a BV action of the type
\deq{
  \mathfrak{L} = b \left( d\sA + [\sA,\sA] \right)  = b \wedge F_\sA,
}
where $b \in \Omega^{2,*}(M,\lie{g}^\vee)$ is the antifield multiplet. 
The physical values of unbroken $U(1)$ symmetries are given in Table~\ref{tab:vec}; a similar table for the chiral multiplet is displayed below (Table~\ref{tab:gradings}).
Note that in our conventions, the $R$-charges of the fields $(c, A, \lambda, \Bar{\lambda}, D)$ read $(0, 0, -1, 1, 0)$ for the untwisted multiplet.

\begin{table}
  \begin{tabular}{c|ccc|cc|c}
& $R$ &  $L$ &  gh & tw$_1$ & tw$_2$ & \\ \hline
$c$ & 0 & 0 & $1$ & 1 & 0  & $\sA^0$ \\
$A$ & 0 & 1 & 0 & 0 & 1 & $-$ \\
$\bar A$ & 0 & $-1$ & 0 & 0 & $-1$ & $\sA^1$\\
$\lambda_+$  & $-1$ & $1$ & $0$ & $-1$ & 0 & $-$ \\
$\lambda_-$ & $-1$ & $-1$ & $0$ & $-1$ &  $-2$ & $\sA^2$\\ 
$\bar\lambda$ & $1$ & 0 & 0 & 1 & 1 & $-$ \\
$D$ & 0 & 0 & 0 & 0 & 0& $-$ 
  \end{tabular}
  \qquad
  \begin{tabular}{c|ccc|cc|c}
& $R$ &  $L$ &  gh & tw$_1$ & tw$_2$ & \\ \hline
$c^*$ & 0 & 0 & $-2$ & $-2$ & 0  & $b^2$ \\
$A^*$ & 0 & $-1$ & $-1$ & $-1$ & $-1$ & $-$ \\
$\bar A^*$ & 0 & $1$ & $-1$ & $-1$ & $1$ & $b^1$\\
$\lambda_-^*$  & $1$ & $-1$ & $-1$ & $0$ & 0 & $-$ \\
$\lambda_+^*$ & $1$ & $1$ & $-1$ & $0$ &  $2$ & $b^0$\\ 
$\bar\lambda^*$ & $-1$ & 0 & $-1$ & $-2$ & $-1$ & $-$ \\
$D^*$ & 0 & 0 & $-1$ & $-1$ & 0& $-$ 
  \end{tabular}
  \smallskip
  \caption{Gradings on the twist of the vector multiplet. The differential has twisted bidegree $(1,0)$. (Compare Table~\ref{tab:gradings} for the chiral multiplet.)}
  \label{tab:vec}
\end{table}

We can summarize the results of this section with the following proposition, giving a tidy description of the holomorphic twist of any four-dimensional supersymmetric theory in complete generality:
\begin{prop}[Compare~{\cite{JohansenHolo}}]
  \label{prop:gen4d}
  The holomorphic twist of a general $\N=1$ theory in four dimensions on a K\"ahler manifold~$M$, with gauge Lie algebra $\lie{g}$ and chiral matter transforming in a representation~$V$ of~$\lie{g}$, produces the holomorphic BV theory whose fields are
  \begin{equation}
    \label{eq:genBV}
    \begin{aligned}[c]
      \sA &\in \Omega^{0,*}(M,\lie{g})[1], \\
      b &\in \Omega^{2,*}(M,\lie{g}^\vee),
    \end{aligned}
    \qquad
    \begin{aligned}[c]
      \gamma &\in \Omega^{0,*}(M,V), \\
      \beta &\in \Omega^{2,*}(M,V^\vee)[1].
    \end{aligned}
\end{equation}
The dynamics of the theory are specified by the BV action
\begin{equation}
  \label{eq:YMBVaction}
  \mathfrak{L} = \langle b, \bar\partial \sA + [\sA,\sA] \rangle + \langle \beta, (\bar\partial + \sA ) \gamma \rangle,
\end{equation}
encoding the minimal coupling of the $\beta\gamma$ system to holomorphic gauge theory. On a Calabi--Yau manifold, we can also add gauge-invariant superpotential interactions, in the form discussed previously:
\begin{equation}
  \mathfrak{L}_W = W(\gamma) \wedge \Omega,
\end{equation}
where~$\Omega$ is the Calabi--Yau form.
\end{prop}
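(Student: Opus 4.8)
The plan is to prove the proposition by assembling the twist from its constituent multiplets and then fixing the interactions by gauge covariance. A general $\N=1$ theory is built from chiral matter valued in a representation~$V$ together with a vector multiplet with gauge algebra~$\lie{g}$, coupled through minimal gauge interactions and a gauge-invariant superpotential. The holomorphic twist of the free chiral multiplet has already been identified in Proposition~\ref{prop:F-terms} with the $\beta\gamma$ system, giving the fields $\gamma \in \Omega^{0,*}(M,V)$ and $\beta \in \Omega^{2,*}(M,V^\vee)[1]$, while the twist of the free vector multiplet was computed in~\S\ref{ssec:gauge}, producing the holomorphic BF-type theory with fields $\sA \in \Omega^{0,*}(M,\lie{g})[1]$ and $b \in \Omega^{2,*}(M,\lie{g}^\vee)$ and action $\langle b, \bar\partial\sA + [\sA,\sA]\rangle = \langle b, F_\sA \rangle$. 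Since every object involved ($\bar\partial$, the holomorphic pairing, and the representation action) is natural with respect to biholomorphisms, the affine computation globalizes to any K\"ahler~$M$. It therefore remains only to twist the gauge coupling between the two sectors.

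I would carry this out in two complementary ways. The first is the direct computation: decompose the gauge-coupled $\N=1$ BV action into $U(2)$-equivariant pieces exactly as in~\S\ref{sec: twistaux} and~\S\ref{ssec:gauge}, set the twisting ghost $\epar_+ = 1$ and all others to zero, and read off the surviving terms after discarding the acyclic pairs. The second, which I regard as the more conceptual route, exploits the fact that the twisting supercharge $Q$ is a gauge singlet and hence commutes with the gauge symmetry; the whole construction is therefore gauge-equivariant, and the twisted matter sector must be a gauge-equivariant deformation of the free $\beta\gamma$ system. Reading the gradings off Table~\ref{tab:vec}, the twisted ghost $\sA^0 \in \Omega^{0,0}(M,\lie{g})$ acts on $\gamma$ through the representation~$V$, so the internal differential on the matter fields is the gauge-covariant Dolbeault operator $\bar\partial + \sA$. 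The unique local, holomorphic, gauge-covariant completion of the free kinetic term $\langle\beta,\bar\partial\gamma\rangle$ is then $\langle\beta,(\bar\partial+\sA)\gamma\rangle$, which is precisely the minimal coupling appearing in~\eqref{eq:YMBVaction}.

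The superpotential term is added directly from Proposition~\ref{prop:F-terms}: twisting the $F$-term interaction yields $[W(\gamma)]_\text{top}$, and gauge invariance of~$W$ guarantees that $W(\gamma)$ is a gauge-invariant local functional, hence closed under the twisted BRST differential. On a Calabi--Yau manifold one simply pairs this $(0,d)$-form against the holomorphic volume form~$\Omega$ rather than against a flat $\d^d z$, giving $\mathfrak{L}_W = W(\gamma)\wedge\Omega$. Finally, one verifies the classical master equation for the full action: each summand is separately closed, and the only cross-terms to check are those coupling the gauge and matter sectors, whose vanishing is exactly the statement that $\langle\beta,(\bar\partial+\sA)\gamma\rangle$ and $W(\gamma)$ are gauge invariant.

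The main obstacle is controlling the \emph{additional} interaction terms present in the physical gauge-coupled theory---the matter Yukawa couplings of the schematic form $\bar\lambda\,\psi\,\bar\phi$ and the $D$-term coupling $\bar\phi\, D\,\phi$---and showing that they reorganize correctly under the twist. One must check that the vector-multiplet fields $\bar\lambda$ and $D$ remain contractible even in the presence of matter, so that after integrating them out the surviving Yukawa couplings assemble precisely into the holomorphic covariant derivative $\bar\partial + \sA$ acting on~$\gamma$, with no residual quadratic-in-antifield remnant. The gauge-equivariance argument of the second paragraph is what ultimately guarantees this: because $Q$ is gauge invariant, the twisted theory can contain no gauge-noncovariant interaction, and the only gauge-covariant, ghost-number-zero, holomorphic deformation available is the minimal coupling itself.
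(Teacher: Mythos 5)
Your proposal follows essentially the same route as the paper: the proposition there is likewise assembled by combining the separately computed twists of the chiral multiplet (Proposition~\ref{prop:F-terms}) and of the vector multiplet in~\S\ref{ssec:gauge}, where the $U(2)$-equivariant decomposition at $\epar_+ = 1$ yields the acyclic pairs $(A,\bar\lambda)$ and $(\lambda_+, D)$ and leaves the holomorphic BF fields $(\sA, b)$, with the minimal coupling $\langle\beta,(\bar\partial+\sA)\gamma\rangle$ then stated as the summary of the coupled case. Your supplementary gauge-equivariance argument, and your explicit identification of the Yukawa and $D$-term cross-terms as the step requiring verification, go slightly beyond the paper's own presentation, which asserts the coupled result without carrying out that check in detail.
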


\begin{rmk}
In \cite{Yangian} the twist of {\em pure} gauge theory is computed using a presentation of the theory in the first-order formalism of super Yang-Mills.
\end{rmk}

We note that, precisely in complex dimension two, the holomorphic twist of the vector multiplet is closely related to holomorphically twisted matter. In general, the $\beta\gamma$ system on an (ungraded) vector space has operators supported in degrees zero and $(d-1)$, while one expects the vector multiplet to have operators supported in degree one, with antifields in degree $(d-2)$. Precisely in complex dimension two, these pairs coincide. Note, however, that it is probably more accurate to think of the twisted vector multiplet as related to the $\beta \gamma$ system on $\lie{g}[1]$, rather than as an ungraded $\beta\gamma$ system with the roles of fields and antifields reversed.

It is a pleasant exercise, performed in~\cite{CostelloLi,CY4}, to check that the dimensional reduction of the holomorphic twist of ten-dimensional supersymmetric Yang--Mills theory (which is the holomorphic Chern--Simons theory with fields $\Omega^{0,*}(\C^5,\lie{g})$) is the twist of maximally supersymmetric Yang--Mills in four dimensions. From that perspective, the $\Z$-grading is broken to~$\Z/2$ due to the presence of a cubic superpotential, which originates from a component of the ten-dimensional cubic Chern--Simons interaction.

\subsection{BV quantization of the holomorphic twist}
\label{ssec:BVquant}
In this section, we turn to the quantization of the $\beta\gamma$ system on $\CC^2$. 
One of the advantages of formulating the holomorphic twist of the supersymmetric theory in the BV formalism is that there is a natural BV quantization. 
In fact, for every {\em free} BV theory there is a unique quantization obtained by deforming the classical BV differential by the BV Laplacian $\hbar \Delta$ given by the contraction with the $(-1)$-shifted symplectic form defining the classical theory.
Even in the case of a superpotential, we will see that no quantum correction arise.
Indeed, the quantization still exists uniquely. 
We will see how this works at the level of holomorphic local operators, as we introduced in~\S\ref{sec: holtheory}. 

\subsubsection{A recollection of the quantum BV formalism}

Classically, we have recalled that in the BV formalism a theory is given by the data of a complex of BV fields $\sB = T^*[-1] \sE$ together with a BV action $\mathfrak{S} = \int \mathfrak{L}$ satisfying the classical master equation. 
We study the BV quantization of the system through the quantization of its observables. 
As we have already mentioned, the full theory of quantization of the observables of a BV theory has been developed in~\cite{CG1,CG2}, using the language of factorization algebras.
While we do not use the full theory here, we remark that the quantization takes place locally on the spacetime manifold; that is, for each open set. 
The main result of \cite{CG2} is that these quantizations glue together according to the axioms of a factorization algebra. 

Schematically, the BV formalism suggests that a quantization of a classical theory is constructed in two steps: 
\begin{enumerate}
\item[(a)] tensoring the underlying graded vector space of observables $\Obs$ with $\CC[\![\hbar]\!]$ and
\item[(b)] modifying the differential to $\{\mathfrak{S}^q,-\} +\hbar \Delta$ where $\Delta$ is the BV Laplacian and $\mathfrak{S}^q = \mathfrak{S} + O(\hbar)$ is a {\em quantum} action satisfying the quantum master equation
\[
\{\mathfrak{S}^q, \mathfrak{S}^q\} + \hbar \Delta \mathfrak{S}^q = 0 .
\] 
\end{enumerate}

Naively, this prescription is incomplete for several reasons. 
First, $\Delta$ is not defined on all of the observables;
the naive formula involves an ill-defined pairing of distributions.
There is a natural way to circumvent this difficulty by introducing a mollification of $\Delta$ instead.
This approach is developed in a very broad context in Chapter 9 of \cite{CG2},
where one introduces a scale dependent BV Laplacian $\Delta_L$.\footnote{More generally, one can associate a BV Laplacian to every {\em parametrix}.}

Secondly, the same infinities that plagued the naive BV operator persist in defining the quantum action $\mathfrak{S}$. 
For a generic theory, when introducing an interaction term one must properly regularize the action functional by introducing counterterms. 
Even if one can introduce counterterms to get a well-defined loop expansion of the quantum action, it may still fail to satisfy the quantum master equation.

\subsubsection{Exactness in the holomorphic theory}

For the holomorphic theory we consider, there is a simple combinatorial reason why no such counterterms arise. 
No loop diagrams can ever contribute to the quantum action; therefore, the classical BV action will also automatically satisfy the quantum master equation!

\begin{prop}
The $\beta\gamma$ system on any complex surface $X$ in the presence of a holomorphic potential $W \in \Sym(V^\vee)$ is exact at tree level. 
In particular, a quantization of the theory exists locally on $\CC^2$ and on Hopf surfaces $X_{q_1,q_2} = \left(\CC^2 \setminus 0\right) / (z_1, z_2) \sim (q_1^\ZZ z_1, q_2^\ZZ z_2)$.  
\end{prop}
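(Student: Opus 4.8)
The plan is to prove the sharper statement that the renormalized effective interaction of the theory is exactly the classical superpotential term, uncorrected at every order in $\hbar$; exactness and the existence of the quantization are then formal consequences. First I would record the combinatorial data of the interacting $\beta\gamma$ system from the BV action $\mathfrak{L} = i\langle\beta,\bar\partial\gamma\rangle + [W(\gamma)]_\text{top}$ of~\eqref{BVaction-ident}. The quadratic term pairs $\beta$ against $\bar\partial\gamma$ and is strictly off-diagonal in the pair $(\gamma,\beta)$, so after gauge-fixing the propagator is an honest $\beta$--$\gamma$ propagator built from the Green's function for $\bar\partial$; it has no $\gamma$--$\gamma$ or $\beta$--$\beta$ components. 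The interaction vertices, coming from $W(\gamma)\in\Sym^{\geq 2}(V^\vee)$, carry legs of type $\gamma$ only.

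The heart of the argument is a type-counting observation about Feynman graphs. Any internal edge is a propagator and hence must terminate on a $\beta$-leg at one of its two ends; but no interaction vertex possesses a $\beta$-leg. Consequently no internal edge can be attached to any vertex, so \emph{every} connected graph reduces to a single $W$-vertex with external $\gamma$-legs, and in particular the theory has no loops and no nontrivial trees. Phrased in the renormalization framework of~\cite{CosRenorm, CG2}, the scale-$L$ effective interaction is the sum over connected graphs decorated with the propagator $P(\epsilon,L)$ on internal edges; the observation shows this sum contains no graph with an internal edge, so the effective interaction equals $[W(\gamma)]_\text{top}$ for all $L$, with no counterterms and no $\hbar$-dependence. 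This is precisely the claim of tree-level exactness.

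Exactness of the quantum master equation is then immediate. The regularized BV Laplacian $\Delta_L$ is the contraction with the inverse of the off-diagonal pairing, so it differentiates once in $\beta$ and once in $\gamma$; since $[W(\gamma)]_\text{top}$ has no $\beta$-dependence, $\Delta_L[W(\gamma)]_\text{top}=0$, and likewise $\{[W(\gamma)]_\text{top},[W(\gamma)]_\text{top}\}_L$ vanishes for want of a $\beta$-derivative. The scale-$L$ quantum master equation therefore collapses to the classical master equation, which holds by the very definition of a classical holomorphic theory; the underlying free $\beta\gamma$ system contributes its unique quantization as recalled above. Hence $\mathfrak{S}^q = \mathfrak{S}$ solves the quantum master equation and defines a quantization.

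For the geometric conclusions, locally on $\CC^2$ the Green's function for $\bar\partial$ supplies the propagator directly and the argument applies verbatim. For the Hopf surface $X_{q_1,q_2}$ I would present it as the quotient of $\CC^2\setminus 0$ by the biholomorphic $\ZZ$-action $(z_1,z_2)\mapsto(q_1 z_1,q_2 z_2)$, under which the $\beta\gamma$ fields and their pairing are equivariant; since the construction is entirely local and loop-free, the quantization descends to the quotient without any global anomaly. The step I expect to be the genuine obstacle is making the combinatorial vanishing survive regularization rigorously: one must verify that passing to the mollified Laplacian $\Delta_L$ and the finite-scale propagator $P(\epsilon,L)$ preserves the leg-type bookkeeping---that no $\beta$-leg is generated on a vertex and no diagonal propagator component appears under the homotopy renormalization-group flow---so that the clean absence of internal edges is not spoiled by the analytic details of point-splitting.
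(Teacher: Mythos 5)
Your proposal is correct and is essentially the paper's own argument: the propagator arising from the odd BV pairing is purely of $\beta$--$\gamma$ type while the vertices $I_W$ have only $\gamma$-legs, so the Feynman expansion collapses and the classical action satisfies the quantum master equation without corrections. In fact your type-counting gives the slightly sharper conclusion (no internal edges at all, not merely no loops), and the regularization worry you flag is harmless for exactly the reason you suspect---the gauge-fixing operator $\dbar^*$ preserves the $\gamma$/$\beta$ decomposition, so the mollified propagator $P(\epsilon,L)$ retains the off-diagonal leg structure.
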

\begin{proof}
The quantum theory is constructed out of weights of diagrams are constructed out of the vertices, labeled by $I_W$, and edges, labeled by the propagator. 
The weights are filtered with respect to the parameter $\hbar$, which counts the genus of the graph. 
Since the propagator arises from the BV pairing on the space of fields, it is symbolically of the form
\[
\begin{tikzpicture}
		\draw[fermion] (0,0) -- (2,0);
		\draw (-0.2, 0) node {$\beta$};
		\draw (2.2,0) node {$\gamma$};
		\draw (1,.5) node {$P$};
\end{tikzpicture}
\label{fig:props}
\]
Here, we are ignoring any regularization, since it will play no role for us. 
Since the vertices of the diagrams, labeled by $I_W$, are only functions the fields $\gamma$, we see that only trees will appear in the expansion. 
\end{proof}

This proposition implies that the quantization of the holomorphic theory is easy to understand.
We are most interested in its implication at the level of {\em local} observables.

Just as in the non-BV case, there is a notion of local observables in the classical BV formalism.
We define the cochain complex of {\em local} BV observables at $x \in M$ as
\[ 
\Obs_x :=  \left( \Hat{\Sym}(J^\infty B |_x)^\vee , \{\mathfrak{S},-\} \right) 
\] 
where $B$ is the vector bundle on $M$ underlying the sheaf $\sB$. 

To quantize, we adjoin the parameter $\hbar$, and begin the construction of the quantum action by adding the BV Laplacian $\hbar\Delta$.
For the holomorphic theory, the quantum action receives no quantum corrections and we can disregard terms of order $\hbar^n$ for $n \geq 1$. 
Also, it is easy to see that the BV Laplacian vanishes identically on the {\em holomorphic} local operators.
Thus, the quantum holomorphic operators for the $\beta\gamma$ system in the presence of a superpotential is of the form
\[
  \left(\Obs_0^\text{hol}[\![\hbar]\!] , \{L_W,-\}\right),
\]
where $\Obs_0^\text{hol}$ is the space of classical holomorphic operators defined in~\S\ref{sec: hol op}.

\begin{rmk}
When one couples the $\N=1$ chiral multiplet to a gauge field, where the matter takes values in the adjoint representation, it is known that there are nonzero quantum corrections to the superpotential in perturbation theory \cite{CDSW}. 
We hope to return to seeing these perturbative corrections at the level of the holomorphic twist in future work.
\end{rmk}

\section{Holomorphic characters}
\label{sec:characters}

Consider a quantum field theory defined on affine space $\RR^n$.  
For simplicity, we assume this theory is translation invariant and we denote by $\Obs_0$ the local operators of the theory supported at the origin $0 \in \RR^n$. 
In the case that the theory is conformal, there is the state-operator correspondence, which relates local operators to states in the Hilbert space on $S^{n-1}$.
In the holomorphic case on $\CC^d$, we will find evidence (at least for $d=2$) for such a correspondence by relating a certain $q$-character of the local operators to the partition function over a class of complex manifolds diffeomorphic to $S^{2d-1} \times S^1$. 
This is a generalization to the situation in chiral CFT whereby the $q$-character of a vertex algebra is related to the partition function along an elliptic curve. 

There is a certain class of {\em non-local} operators that will play an essential role for us.
To see them, note that we can restrict the theory to the submanifold 
\[
\RR^n \setminus 0 \subset \RR^n .
\]
The radius of a point in punctured affine space gives a natural projection $r : \RR^n \setminus 0 \to \RR_{>0}$.
We can then reduce the theory along the map $r$ to get a theory of quantum mechanics defined on the positive line $\RR_{>0}$. 
In other words, since $\RR^n \setminus 0 \cong S^{n-1} \times \RR_{>0}$, we can understand this as compactifying the theory along the $(n-1)$-sphere.

We assume that compactification along $S^{n-1}$ results in a topological theory. 
For topological or holomorphic theories, this is certainly true, since the translations that survive the holomorphic twist cannot intersect nontrivially with the translations that survive compactification.\footnote{In fact, there is the small caveat that we must actually consider a dense algebraic subspace of operators of the resulting quantum mechanics that is actually topological.}
Denote by $\sA$ the local operators of the compactified theory on $\RR_{>0}$. 
A topological quantum mechanics is nothing other than a single associative algebra; in our setting, the one-dimensional OPE endows $\sA$ with the natural structure of a (homotopy) associative algebra. 

From the perspective of the full theory on $\RR^n$, the algebra $\sA$ actually contains non-local operators:
it consists precisely of the operators supported on spheres $S^{n-1}$, which can originate either from local operators in the full theory or from nonlocal operators wrapped on a nontrivial cycle. 
The operator product of these sphere operators induced by radial ordering endows $\sA$ with the aforementioned associative product. 
Moreover, the operator product of $S^{n-1}$-operators with local operators implies that $\Obs_0$ is a module for the algebra $\sA$. 

Given any algebra $A$ and a module $M$, finite dimensional over $\CC$, one defines the character by $a \mapsto \Tr_V(\exp(a))$ thought of as a map $HH_0(A) \to \CC$, where $HH_0(A)$ is the zeroth Hochschild homology. 
In our situation, for observables we define the {\em local character} is the the character of the $\sA$-module $\Obs_0$
\[
\ch_{\sA} (\Obs_0) : HH_0(\sA) \to \CC[\![\hbar]\!] .
\]

Of course, the space of local operators $\Obs_0$ is very rarely finite dimensional, so the above definition needs to be properly interpreted. 
In practice, there are additional gradings, or symmetries, present in a QFT which allow one to define the graded dimension of $\Obs_0$. 
For instance, for a chiral conformal field theory, the conformal structure allows one to define the $q$-character of local operators. 
For general holomorphic theories, there is a natural generalization, see Definition~\ref{dfn: holchar}. 

There is another interpretation of this character from the point of view of quantum mechanics. 
Since the original theory is defined on all of~$\RR^n$, the compactified theory on~$\RR_{>0}$ admits a natural boundary condition extending it to a theory on~$\RR_{\geq 0}$.
One can describe this boundary condition by saying that the boundary operators supported at $0 \in \RR_{\geq 0}$---i.e., functions on field configurations compatible with the boundary conditions---are isomorphic to the local operators $\Obs_0$ of the original theory.

In fact, the algebra of local operators in the quantum mechanics is essentially the (differential graded) Weyl algebra, formed from the symplectic vector space which is the cotangent bundle to (the spectrum of) holomorphic local operators in the upstairs theory. One can see this by considering the Dolbeault cohomology of punctured~$\C^d$, which has classes in degree zero and $d-1$ that are paired by integration. (We discuss this further below in~\S\ref{sec:symmetry}.) The fields of the $\beta\gamma$ system on this geometry, after passing to the cohomology of~$\bar\partial$, are $Z^\bullet \otimes H^*_{\bar\partial}(\C^d \setminus 0)$, which has a symplectic pairing in degree zero, and can be thought of as the cotangent bundle to~$Z^\bullet \otimes H^*_{\bar\partial}(\C^d)$. The operator product discussed above on local operators is precisely the quantization of classical local operators with respect to the degree-zero Poisson bracket structure. 

As is familiar from elementary quantum mechanics, such algebras usually admit unique irreducible unitary representations, which are constructed by taking functions on a Lagrangian subspace of the relevant symplectic vector space. It is immediate to see that the operators that are local upstairs define a canonical choice of such a Lagrangian, akin to the zero section of a cotangent bundle. 
Hilbert spaces are associated to boundaries, and this choice of Lagrangian is to be interpreted as a choice of boundary condition in the manner discussed above.  The resulting Hilbert space, over which the trace is taken, then depends on a choice of boundary condition, i.e.~Lagrangian, analogous to polarization data in geometric quantization.
Finally, the partition function of the original theory on $S^{n-1} \times S^1$ can be thought of as a trace over the Hilbert space of this quantum-mechanical system, which is the corresponding  module of~$\sA$.

\subsection{The local character}

We now turn to the holomorphic situation.
For a holomorphic theory (like the ones coming from twists of $\N=1$ in dimension four) on $\CC^d$, we have defined the holomorphic local operators $\Obs^\text{hol}_w$ at $w \in \CC^d$. 
These are simply on-shell local operators of the underlying free BV theory, but as a cochain complex are equipped with the differential $Q^\text{hol} + \{I^\text{hol}, -\}$ where $Q^\text{hol}$ is a linear holomorphic differential operator and $I^\text{hol}$ is the holomorphic interaction. 

\begin{dfn}
\label{dfn: holchar}
Let $\Obs^\text{hol}_0$ be the holomorphic local operators of a holomorphic theory on $\CC^d$ as defined in Definition \ref{dfn: hollocal}. 
The {\em bare ${\bf q}$-character} is defined by the formal series
\[
\chi ({\bf q}) = \sum_{j_1,\ldots,j_d \in \ZZ} q_{1}^{j_1} \cdots q_{d}^{j_d} \dim (\Obs^{(j_1,\ldots,j_d)}_0) \in \CC[\![q_1^{\pm}, \ldots, q_d^{\pm}]\!] .
\]
Here, $\Obs^{(j_1,\ldots,j_d)}$ labels the $(j_1,\ldots,j_d)$-eigenspace corresponding to the action of the maximal torus $T^d \subset U(d)$. 
\end{dfn}

There is the following algebraic way to think about this ${\bf q}$-character. 
Because the space of local operators is a $U(d)$-representation, there is a map of algebras 
\[
U(\fu(d)) \to \End(\Obs_0) .
\]
The character only depends on the Cartan Lie subalgebra $\CC^d \subset \fu(d)$ which we think of as being generated by the scaling operators
\[
L_0^{i} = z^i \frac{\partial}{\partial z_i} \;\;\; , \;\;\; 1 \leq i \leq d
\]
where no summation convention is used. 
Restricting to the Cartan, we obtain a map of algebras $\rho : U(\CC^d) \to \End(\Obs_0)$. 

By definition, this map factors through endomorphisms of the $T^d$-eigenspaces 
\[
U(\CC^d) \to \bigoplus_{(j_1,\ldots,j_d)} \End(\Obs^{(j_1,\ldots,j_d)}_0) .
\]
The character is obtained from the induced map at the level of Hochschild homology:
\[
HH_*(\rho) : HH_*(U(\CC^d)) \to \bigoplus_{(j_1,\ldots,j_d)} HH_*(\End(\Obs^{(j_1,\ldots,j_d)}_0)) .
\]
Indeed, if we assume that each $\Obs^{(j_1,\ldots,j_d)}_0$ is finite dimensional, Morita invariance implies that this map of graded vector spaces is given by a single linear map
\[
HH_*(\rho) : HH_*(U(\CC^d)) \to \bigoplus_{(j_1,\ldots,j_d)} HH_0(\End(\Obs^{(j_1,\ldots,j_d)}_0)) = \bigoplus_{(j_1,\ldots,j_d)} \CC .
\]
Choosing an isomorphism $\oplus_{(j_1,\ldots,j_d)} \CC = \CC[q_1^\pm,\ldots, q^{\pm}]$ we witness the ${\bf q}$-character above as the image of $1 \in HH_*(U(\CC^d))$ under this map
\[
\chi_{\bf q} (\Obs_0) = HH_*(\rho) (1) .
\] 
More concretely, we can express the character as $\chi_{\bf q} (\Obs_0) = \Tr_{\Obs_0} (q_1^{L_0^1} \cdots q_d^{L_0^d})$. 

When a holomorphic theory posses extra symmetries, there are equivariant versions of the $q$-character. 
For instance, if the theory has an additional $U(1)$-symmetry we can define the multi-variable character 
\[
\chi ({\bf q} , u) = \sum_{j_1,\ldots,j_d \in \ZZ} \sum_{k \in \ZZ}  q_{1}^{j_1} \cdots q_{d}^{j_d} u^k \dim (\Obs^{(j_1,\ldots,j_d), k}_0) \in \CC[\![q_1^{\pm}, \ldots, q_d^{\pm}]\!] .
\]
where $\Obs^{(j_1,\ldots,j_d), k}_0$ is the $(j_1, \ldots, j_d), k$-eigenspace of the holomorphic local operators with respect to $T^d \times U(1) \subset U(d) \times U(1)$.

\subsection{Local operators of the free theory on $\CC^2$} \label{sec: bg ops}

\def\bb{\mathsf{b}}
\def\cc{\mathsf{c}}

We now turn our focus to a particular holomorphic theory: the free $\beta\gamma$ system on $\CC^2$. 
This will be our first calculation of a holomorphic character.
Before we proceed, we present a description of the local operators of the theory on $\CC^2$. 

Recall, the BV fields of the $\beta\gamma$ system with values in a complex vector space $V$ is the Dolbeault complex on $\CC^2$ with values in the vector space
\[
V \oplus \d^2 z \cdot V^\vee [1] .
\] 
When $V$ is ungraded, it is thus in cohomological degree zero, and $V^\vee$ in degree $(-1)$.
Fix a basis $\{e_i\}_{i=1}^{N = \dim(V)}$ for $V$ and let $\{e^i\}$ be the dual basis.
Solutions to the classical equations of motion are parametrized by fields 
\begin{equation}
  \begin{aligned}[c]
    \gamma^0_i &\in \sO^\text{hol}(\CC^2) \tensor V , \\
    \beta^{0;j}\, \d ^2 z &\in \Omega^{2,hol}(\CC^2) \tensor V^\vee [1] = \d^2 z \cdot \sO^\text{hol}(\CC^2) \tensor V^\vee [1].
  \end{aligned}
\end{equation}

We label the corresponding linear local holomorphic operators (supported at $w = 0 \in \CC^2$) with bold letters as 
\[
\begin{array}{ccclll} 
\bgamma_{n_1,n_2; i} & : & \gamma^0 & \mapsto & \frac{\partial^{n_1}}{\partial z_1^{n_1}} \frac{\partial^{n_2}}{\partial z_2^{n_2}} \gamma^0_i (z=0) \\
\bbeta_{n_1+1, n_2+1}^j & : & \beta^{0} \d^2 z & \mapsto & \frac{\partial^{n_1}}{\partial z_1^{n_1}} \frac{\partial^{n_2}}{\partial z_2^{n_2}} \beta^{j} (z=0) ,
\end{array}
\]
where $n_1,n_2 \geq 0$ and $i,j \in \{1,\ldots,N\}$.
We use the bold fonts $\bbeta, \bgamma$ to distinguish linear operators from their fields $\beta, \gamma$. 
Note that the ghost degree of $\bgamma_{n_1,n_2; i}$ is $0$ and the ghost degree of $\bbeta_{n_1+1, n_2+1}^j$ is $+1$. 

Using this basis, it is immediate to verify the following description of local holomorphic operators. 

\begin{lem}
Let $\Obs_0^\text{hol}$ be the local holomorphic operators at $w=0$ of the free $\beta\gamma$ system on $\CC^2$.
There is a graded isomorphism
\[
\Obs_0^\text{hol} \cong \Sym \left( (\CC[\![z_1,z_2]\!] \tensor V)^\vee \oplus (\CC[\![z_1,z_2]\!] \tensor V^\vee)^\vee [-1] \right) [\hbar]
\]
which on linear generators sends
\[
z_1^{-n_1}z_2^{-n_2} e_i + z_1^{-m_1} z_2^{-m_2} e^j \mapsto \bgamma_{n_1,n_2;i} + \bbeta_{m_1+1,m_2+1}^j
\]
where $\{e^i\}$ is a basis for $V$ and $\{e_i\}$ is the dual basis. 
\end{lem}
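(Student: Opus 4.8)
The plan is to unwind Definition~\ref{dfn: hollocal} directly for the $\beta\gamma$ system and keep careful track of the cohomological shift. By that definition, $\Obs_0^\text{hol} = \Hat{\Sym}\left((Z^\bullet[\![z_1,z_2]\!])^\vee\right)$, and for the $\beta\gamma$ system with values in $V$ the underlying graded space is $Z^\bullet = V \oplus \d^2 z \cdot V^\vee[1]$ (here $d-1=1$). The first step is purely formal: distribute the formal power series and the linear dual across the direct sum, using $(W[1])^\vee \cong W^\vee[-1]$, to obtain
\[
(Z^\bullet[\![z_1,z_2]\!])^\vee \cong (\CC[\![z_1,z_2]\!]\tensor V)^\vee \oplus (\CC[\![z_1,z_2]\!]\tensor V^\vee)^\vee[-1],
\]
where trivializing the line $\d^2 z$ contributes nothing and the shift $[1]$ on the fields becomes $[-1]$ on operators, which is exactly what records the ghost degrees ($0$ for $\bgamma$ and $+1$ for $\bbeta$). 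Applying $\Hat{\Sym}$ to both sides then produces the claimed right-hand side; since the theory is free, every local operator is a polynomial in finitely many of the basic linear operators, so the symmetric algebra is genuinely free on the linear generators.

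The content of the lemma is therefore concentrated in the identification of the continuous dual of the jet space $\CC[\![z_1,z_2]\!]$ at the level of linear generators. I would make this explicit via the iterated residue pairing: the functional written formally as $z_1^{-n_1-1}z_2^{-n_2-1}$ extracts the coefficient of $z_1^{n_1}z_2^{n_2}$ in a power series $f$, which up to a factorial is $\partial_{z_1}^{n_1}\partial_{z_2}^{n_2} f|_{z=0}$. This is precisely the definition of $\bgamma_{n_1,n_2;i}$ and $\bbeta_{n_1+1,n_2+1}^j$ recalled just above, once one contracts against the relevant basis vector of $V$ or $V^\vee$. Thus the assignment $z_1^{-n_1}z_2^{-n_2}e_i \mapsto \bgamma_{n_1,n_2;i}$ and $z_1^{-m_1}z_2^{-m_2}e^j \mapsto \bbeta_{m_1+1,m_2+1}^j$ is nothing but the explicit form of the residue duality $(\CC[\![z_1,z_2]\!])^\vee \cong \Span\{z_1^{-n_1-1}z_2^{-n_2-1}\}$, with the index shift in the $\bbeta$ generators merely recording the presence of the line $\d^2 z$ in its definition.

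The factor $[\hbar]$ is adjoined for free: by the tree-level exactness established above, the $\beta\gamma$ system (with or without superpotential) receives no loop corrections, so quantization only tensors the underlying graded vector space with $\CC[\hbar]$. The one place where the argument is more than bookkeeping—and the only genuine obstacle—is pinning down the precise meaning of the continuous dual. I would fix the adic (projective-limit) topology on $\CC[\![z_1,z_2]\!]$, for which continuous functionals factor through a finite-order jet; with this topology the dual is exactly spanned by the finite-derivative functionals and the residue pairing above is nondegenerate. Granting that, the correspondence on linear generators is a basis-level statement, and the lemma follows immediately by passing to symmetric algebras.
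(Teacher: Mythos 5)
Your proof is correct and takes essentially the same route as the paper, which simply declares the lemma ``immediate to verify'' from the basis of linear operators $\bgamma_{n_1,n_2;i}$, $\bbeta^{j}_{m_1+1,m_2+1}$: unwinding Definition~\ref{dfn: hollocal}, distributing the continuous dual over $Z^\bullet = V \oplus \d^2z \cdot V^\vee[1]$, and identifying the dual of $\CC[\![z_1,z_2]\!]$ with the span of finite-order derivative (residue) functionals is exactly that verification. Your explicit treatment of the adic topology on jets, the ghost-degree shift $[-1]$, and the origin of the $[\hbar]$ factor supplies the bookkeeping the paper leaves implicit.
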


With this description of the local operators of the $\beta\gamma$ system on $\CC^2$ in hand, we move on to present a formula for the character. 

\subsection{Symmetries in $4d$ $\N=1$}

We will present the character of the free $\beta\gamma$ system on $\CC^2$ in two equivalent ways. 
The first is natural from the description of the theory as a holomorphic one. 
The other arises from most naturally from the description of the theory from the twist of $4d$ $\N=1$ supersymmetry. 

\subsubsection{The first description of the holomorphic character}

We can summarize the symmetries present in the free holomorphic theory on $\CC^2$ as follows.
For the various $U(1)$ symmetries, we use the notation $U(1)_{y}$ when we want to stress which variable for the Cartan, or fugacity, used in the expression of the character.

\begin{itemize}
\item The $U(2)$ symmetry, present in any holomorphic theory on $\CC^2$, whose character will decompose with respect to its Cartan $U(1)_{q_1} \times U(1)_{q_2}$ that we label by $q_1,q_2$;
\item The $U(1)_z$-flavor symmetry. 
Here, $\bgamma$ has weight $+1$ and $\bbeta$ has weight $-1$;
\item The $U(1)_u$ symmetry present on the BV complex corresponding to the ghost weight. 
Note that while the action has ghost degree zero, there are local operators of nontrivial ghost degree:
The operator $\bbeta$ has ghost degree $+1$. 
\end{itemize}
For general free $\beta\gamma$ systems, these fugacities are the generalization to arbitrary complex dimension of the regraded fugacities used in the discussion of the elliptic genus in~\cite{GukovGadde}. Note that these are all symmetries of the {\em classical} BV theory. 
In fact, they all extend (uniquely) to symmetries of the quantum theory.

\begin{lem}\label{lem: U(2) equivariance} 
The symmetry by $U(2) \times U(1)_z \times U(1)_u$ on the classical free $\beta\gamma$ system with values in the complex vector space $V$ lifts to a symmetry of the quantization.
\end{lem}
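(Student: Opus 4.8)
The plan is to verify that each of the classical symmetries listed above—$U(2)$, $U(1)_z$, and $U(1)_u$—persists after BV quantization, by checking that its infinitesimal generator commutes with the quantum BV differential $\{\mathfrak{S}^q,-\} + \hbar\Delta$ on the space of observables. Since we have just established (in the exactness proposition) that the quantization of the free $\beta\gamma$ system receives no quantum corrections, we have $\mathfrak{S}^q = \mathfrak{S}$ and the quantum differential is simply $\bar\partial + \hbar\Delta$. A symmetry of the classical theory lifts to the quantization precisely when its action commutes with $\bar\partial$ (which is automatic, since all three symmetries act by operators built from the linear structure and the grading data, all compatible with $\bar\partial$ by holomorphicity) and additionally commutes with the BV Laplacian $\Delta$.

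The main content is therefore the compatibility with $\Delta$. First I would observe that $\Delta$ is the second-order operator determined by the $(-1)$-shifted symplectic pairing $\omega^\text{hol}$, which here is the evaluation pairing $\langle-,-\rangle$ between $V$ and $V^\vee$ contracted against the holomorphic volume form $\d^2 z$. A one-parameter group of symmetries lifts to the quantization if and only if it preserves this pairing (equivalently, if its generating vector field is divergence-free with respect to $\Delta$, so that $[\mathcal{L}_X, \Delta] = 0$). For $U(2)$ this is the statement that the pairing between $\Omega^{0,*}(\CC^2,V)$ and $\Omega^{2,*}(\CC^2,V^\vee)$ is $U(2)$-invariant: the $\d^2 z$ factor transforms by the determinant character of $U(2)$, and the Dolbeault-form integration pairing $\Omega^{0,*}\times\Omega^{2,*}\to\Omega^{2,2}\to\CC$ is manifestly invariant under biholomorphisms, so the net pairing is preserved. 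For $U(1)_z$, the $\bgamma$ fields carry weight $+1$ and $\bbeta$ weight $-1$, so the pairing has total weight zero and is preserved. For $U(1)_u$ (ghost number), the pairing has cohomological degree $(d-1) = +1$, matching the degree of $\Delta$, and one checks the weights of $\gamma$ and $\beta$ add to a fixed value consistent with the BV antibracket having degree $+1$.

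Concretely, I would phrase the verification as follows: each generator acts diagonally on the basis of linear local operators $\bgamma_{n_1,n_2;i}$ and $\bbeta^j_{m_1+1,m_2+1}$ exhibited in Lemma above, and $\Delta$ acts by contracting a $\bgamma$ against the dual $\bbeta$ (pairing $V$ with $V^\vee$ and matching holomorphic-derivative multi-indices). It then suffices to check that the total weight of each such contractable pair $(\bgamma,\bbeta)$ is zero under $U(1)_z$ and $U(1)_u$, and that $\Delta$ intertwines the $U(2)$-action, i.e. the contracted pairs sit in dual $U(2)$-representations whose product is the trivial representation twisted by the fixed determinant weight carried by $\d^2 z$. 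Since these weights are constant across all contractions implemented by $\Delta$, the generators commute with $\Delta$, and hence with the full quantum differential.

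The step I expect to be the main (though mild) obstacle is the bookkeeping for $U(2)$: unlike the two $U(1)$'s, the $U(2)$-action mixes the holomorphic derivatives $\partial/\partial z_1$ and $\partial/\partial z_2$, so one cannot argue eigenspace-by-eigenspace but must instead track the full $U(2)$-representation structure of $\CC[\![z_1,z_2]\!]$ and its dual. The cleanest route is to note that $\Delta$ is built entirely from the $U(2)$-equivariant pairing $\omega^\text{hol}$, so its equivariance is structural rather than computational—the only genuine check is that the volume form $\d^2 z$ transforms in the determinant representation and that this character is the same on both sides of the pairing, which it is, since both $\beta$ and its contraction partner reference a single copy of $\d^2 z$. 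Once this structural observation is made, the remaining verifications reduce to the weight-matching already described, and the lemma follows.
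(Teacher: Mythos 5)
Your proof is correct and follows essentially the same route as the paper's: split the quantum differential as $\dbar + \hbar\Delta$, note that equivariance of $\dbar$ is automatic, and reduce compatibility with $\Delta$ to invariance of the $(-1)$-shifted symplectic pairing under each factor of $U(2)\times U(1)_z\times U(1)_u$. The paper's own argument is just a terser version of this; your extra weight-matching bookkeeping on the operator basis is not needed but does no harm.
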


\begin{proof}
The differential on the quantum observables is of the form $\dbar + \hbar \Delta$. 
The operator $\dbar$ is manifestly equivariant for the action of $U(2)$.
Since $U(1)_z \times U(1)_u$ does not act on spacetime, $\dbar$ trivially commutes with its action. Further, the action of $U(2)$ is through linear automorphisms, and since the BV Laplacian $\Delta$ is a second order differential operator, it certainly commutes with the action of $U(2)$. 
Likewise, since $U(1)_z \times U(1)_u$ is compatible with the $(-1)$-symplectic pairing, it automatically is compatible with $\Delta$. 
\end{proof}

In conclusion, each of the bulleted symmetries above extend by $\hbar$-linearity to symmetries of the quantum observables of the free theory.
We now compute the local character with respect to the group $U(2)_{q_1,q_2} \times U(1)_z \times U(1)_u$. 

\begin{prop} 
\label{prop: char}
The local character of the free $\beta\gamma$ system on $\CC^2$ is equal to
\[
\chi(q_1,q_2 ; z ; u) = \prod_{n_1, n_2 \geq 0} \frac{1 - z^{-1} u q_1^{n_1 + 1} q_2^{n_2 + 1}}{1 - z q_1^{n_1} q_2^{n_2}} \in \CC[\![q_1^{\pm},q_2^{\pm}, z^\pm, u]\!] .
\]
The specialization $u = 1$ is well-defined and recovers the elliptic $\Gamma$-function
\[
\chi(q_1,q_2 ; z ; u) |_{u=1} = \Gamma^\text{ell} (q_1,q_2 ; z) .
\] 
\end{prop}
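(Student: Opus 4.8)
The plan is to read the character directly off the explicit presentation of $\Obs_0^\text{hol}$ as a free graded-commutative algebra supplied by the preceding Lemma, and then to recognize the resulting infinite product as the elliptic $\Gamma$-function. Throughout I take $\dim V = 1$, so that the generators carry no flavor index; for general $V$ one simply raises each factor to the power $\dim V$. Recall that for a graded vector space $W$, the character of $\Sym(W)$ factorizes over a weight basis: an even generator of weight $w$ contributes $\sum_{k \ge 0} w^k = (1-w)^{-1}$, while an odd generator of weight $w$ contributes $1 - w$. The sign is present because $\Obs_0^\text{hol}$ is a super vector space graded by ghost number, so the relevant invariant is the supertrace, weighting a state of ghost number $k$ by $(-1)^k$; equivalently, one absorbs this sign by sending $u \mapsto -u$ on the odd generators. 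It therefore suffices to tabulate the weights of $\bgamma_{n_1,n_2}$ and $\bbeta_{n_1+1,n_2+1}$ under the Cartan $U(1)_{q_1} \times U(1)_{q_2} \times U(1)_z \times U(1)_u$.

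The weight bookkeeping is the heart of the computation. The operator $\bgamma_{n_1,n_2}$ extracts the Taylor coefficient $\partial_{z_1}^{n_1}\partial_{z_2}^{n_2}\gamma(0)$; under the scaling generators $L_0^i = z^i \partial/\partial z_i$ it has eigenvalues $(n_1,n_2)$, contributing $q_1^{n_1} q_2^{n_2}$. It has ghost number $0$ (hence is even and carries no power of $u$) and $U(1)_z$-weight $+1$, for total weight $z\, q_1^{n_1} q_2^{n_2}$. The operator $\bbeta_{n_1+1,n_2+1}$ likewise extracts $\partial_{z_1}^{n_1}\partial_{z_2}^{n_2}\beta(0)$, but here $\beta$ is a section of the canonical bundle, $\beta^0\,\d^2 z$; the holomorphic volume form $\d^2 z = \d z_1 \wedge \d z_2$ carries scaling weight $(1,1)$, which shifts the eigenvalues of $\bbeta_{n_1+1,n_2+1}$ to $(n_1+1,n_2+1)$ and accounts for the index shift in its labeling. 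This generator has ghost number $+1$ (so it is odd and carries one power of $u$) and $U(1)_z$-weight $-1$, giving total weight $z^{-1} u\, q_1^{n_1+1} q_2^{n_2+1}$.

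Assembling the product over all generators $n_1,n_2 \ge 0$, the even $\bgamma$ generators produce the denominator and the odd $\bbeta$ generators the numerator, yielding
\[
\chi(q_1,q_2;z;u) = \prod_{n_1,n_2 \ge 0} \frac{1 - z^{-1} u\, q_1^{n_1+1} q_2^{n_2+1}}{1 - z\, q_1^{n_1} q_2^{n_2}}.
\]
This product is well-defined in $\CC[\![q_1^\pm, q_2^\pm, z^\pm, u]\!]$: for $(n_1,n_2)\neq(0,0)$ each factor equals $1$ plus terms of positive $(q_1,q_2)$-degree, and only finitely many factors contribute to any fixed multidegree, so the product converges $(q_1,q_2)$-adically, while the exceptional denominator factor $(1-z)^{-1}$ is expanded as a geometric series in $z$.

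Finally, setting $u=1$ is harmless, since $u$ occurs only in numerator factors that remain units there; the specialization is therefore well-defined and gives
\[
\chi(q_1,q_2;z;u)\big|_{u=1} = \prod_{n_1,n_2 \ge 0} \frac{1 - z^{-1} q_1^{n_1+1} q_2^{n_2+1}}{1 - z\, q_1^{n_1} q_2^{n_2}},
\]
which is exactly the standard infinite-product definition of the elliptic $\Gamma$-function $\Gamma^\text{ell}(q_1,q_2;z)$, so this last step is a matter of matching conventions. I expect the main obstacle to lie entirely in the weight bookkeeping of the second paragraph—pinning down the $(1,1)$ shift from the canonical bundle and fixing the numerator sign via the supertrace convention; once these are settled, the remaining steps are formal.
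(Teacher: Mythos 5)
Your proof is correct and follows essentially the same route as the paper's: decompose $\Obs_0^\text{hol}$ as a free (super)symmetric algebra on the weight-graded generators $\bgamma_{n_1,n_2}$ and $\bbeta_{n_1+1,n_2+1}$, read off their weights under $U(1)_{q_1}\times U(1)_{q_2}\times U(1)_z \times U(1)_u$, and multiply the resulting geometric-series factors (for the even generators) against the exterior-algebra factors (for the odd ones). The only differences are cosmetic refinements: you make explicit the supertrace sign convention responsible for the minus sign in the numerator, the $(1,1)$ weight shift coming from the canonical bundle, and the $(q_1,q_2)$-adic well-definedness of the infinite product, all of which the paper's proof leaves implicit.
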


For an introduction to the elliptic $\Gamma$-function and other related hypergeometric series we refer to the textbook reference \cite{Gasper}. 

\begin{proof}
For fixed $n_1,n_2 \geq 0$, let $V^\vee_{n_1,n_2}$ denote the linear span of operators $\{\bgamma_{n_1,n_2; i}\}_{i=1}^N$. 
As a vector space $V^\vee_{n_1,n_2} \cong V^*$, but we want to remember the weights under $U(2)$. 
Likewise, for $n_1 , n_2 > 0$, let $V_{n_1,n_2} \cong V$ be the linear span of the operators $\{\bbeta_{n_1,n_2}^j\}_{j=1}^N$. 

The holomorphic local operators, then, decompose as
\[
\Obs^\text{hol}_{0} = \Sym \left( \left(\bigoplus_{n_1,n_2 \geq 0} V_{n_1,n_2}^*\right) \oplus \left(\bigoplus_{n_1,n_2 > 0}  V_{n_1,n_2}[-1] \right)\right) [\hbar]
\]
The actions of the remaining symmetry groups are easy to read off.
On $V_{n_1,n_2}^\vee$, the group $U(1)_{z} \times U(1)_{u}$ acts by $(+1, 0)$.
On $V^\vee_{n_1,n_2}$, the group $U(1)_{z} \times U(1)_{u}$ acts by $(-1, +1)$.

To compute the character of the local operators it suffices to compute it on the vector space
\[
\Sym \left( \left(\bigoplus_{n_1,n_2 \geq 0} V_{n_1,n_2}^*\right) \oplus \left(\bigoplus_{n_1,n_2 > 0} \oplus V_{n_1,n_2}[-1] \right)\right) \cong \Sym \left(\bigoplus_{n_1,n_2 \geq 0} V_{n_1,n_2}^*\right) \tensor \Wedge \left(\bigoplus_{n_1,n_2 > 0} V_{n_1,n_2} \right) .
\]
We have used the convention that as (ungraded) vector spaces the symmetric algebra of a vector space in odd degree is the exterior algebra.\footnote{For instance, if $W$ is an ordinary vector space, $\Sym(W[-1]) = \Lambda(W)$ as ungraded vector spaces.}
We can further simplify the right-hand side as
\[
\bigotimes_{n_1, n_2 \geq 0} \left(\Sym(V^*_{n_1,n_2})\right) \bigotimes \bigotimes_{n_1,n_2 > 0} \left(\Wedge (V_{n_1,n_2})\right) .
\] 
The character of the symmetric algebra $\Sym(V^\vee_{n_1,n_2})$ contributes
\[
\frac{1}{1-z q_1^{n_1}q_2^{n_2}}
\]
and the character of $\Wedge(V_{n_1,n_2})$ contributes
\[ 
1- z^{-1} u q_1^{n_1+1}q_2^{n_2+1} .
\]
The formula for character in the statement of the proposition follows from the fact that the character of a tensor product is the product of the characters. 
\end{proof}

Our expression for the character of the local operators of the $\beta\gamma$ system on $\CC^2$ agrees with the partition function of the $\N = 1$ supersymmetric chiral multiplet on the manifold $S^3 \times S^1$, computed in \cite{Rom, Closset1, Closset2, SpiriDual}.
For a direct calculation of the partition function in the holomorphically twisted theory, which agrees with our answer here, see below. 

\subsubsection{A physical expression of the character} \label{sec: SU2}

There is another useful way to decompose the character we have just computed. 
It will be useful later on once we introduce the superpotential. 
The variant amounts to decomposing the local operators with respect to the double cover $SU(2) \times U(1)$ of $U(2)$. 
The symmetries are:

\begin{itemize}
\item The Lorentz $SU(2)_p$ symmetry whose Cartan we label by the coordinate $p$. This action arises on operators through the fundamental action of $SU(2)$ on $\CC^2$;
\item The $U(1)_q$ symmetry arising from the grading ${\rm tw}_2$ in Table \ref{tab:gradings}.
For this symmetry, $\bgamma_{n_1,n_2 ; i}$ has weight $n_1+n_2$ and $\bbeta_{n_1,n_2}^i$ has weight $n_1+n_2 + 2$.
\item The $U(1)_u$ symmetry whose corresponding grading we denoted ${\rm tw}_1$ in Table \ref{tab:gradings} (note that this is precisely the ghost degree in the holomorphic twist);  
\item The $U(1)_z$-flavor symmetry. 
Here, $\bgamma$ has weight $+1$ and $\bbeta$ has weight $-1$.  
\end{itemize}

\begin{table}
\begin{center}
\begin{tabular}{c|ccc|cc|c}
& $R$ &  $L$ &  gh & tw$_1$ & tw$_2$ & \\ \hline
$\phi$ & 0 & 0 & 0 & 0 & 0  & $-$ \\
$\bar\phi$ & 0 & 0 & 0 & 0 & 0 & $\gamma^0$ \\
$\psi_+$ & $1$ & $1$ & 0 & $1$ & 2 & $(\beta^0)'$ \\
$\psi_-$ & $1$  & $-1$  & 0 & $1$ & 0 & $-$ \\
$\bar\psi$ & $-1$ & 0 & 0 & $-1$ & $-1$ & $\gamma^1$ \\ \hline
$F$ & $2$ & 0 & 0 & $2$ & $2$ & $-$ \\
$\bar{F}$ & $-2$ & 0 & 0 & $-2$ & $-2$ & $\gamma^2$
\end{tabular}
\qquad
\begin{tabular}{c|ccc|cc|c}
& $R$ &  $L$ &  gh & tw$_1$ & tw$_2$ &  \\ \hline
$\phi^*$ & 0 & 0 & $-1$ & $-1$ & 0 & $-$ \\
$\bar\phi^*$ & 0 & 0 & $-1$ & $-1$ & 0 & $\beta^2$ \\
$\psi_-^*$ & $-1$ & $-1$ & $-1$ & $-2$ & $-2$ & $(\gamma^2)'$  \\
$\psi_+^*$ & $-1$  & $1$  & $-1$ & $-2$ & 0 & $-$ \\
$\bar\psi^*$ & $1$ & 0 & $-1$ & $0$ & $1$ & $\beta^1$ \\ \hline
$F^*$ & $- 2$ & 0 & $-1$ & $-3$ & $-2$ & $-$  \\
$\bar{F}^*$ & $2$ & 0 & $-1$ & $1$ & $2$ & $\beta^0$
\end{tabular} 
\\
\bigskip
\begin{tabular}{c|ccc|cc}
& $R$ &  $L$ &  gh & tw$_1$ & tw$_2$  \\ \hline
$Q_-$ & $1$  & $-1$ & 0 & 1 & 0 \\
$s_0$ & 0 & 0 & 1 & 1 & 0 \\ \hline
$s_\text{int}$ & $-2$  & $0$ & $1$  & $-1$ &  $-2$ 
\end{tabular}
\\
\bigskip
\end{center}
\caption{$U(2)$-equivariant gradings of fields. Here the gradings preserved after the twist are the stabilizers of the line $Q_- + s_0$:  these are the combinations $\text{tw}_1 = R + \text{gh}$ and $\text{tw}_2 = R + L$. Adding $s_\text{int}$ further breaks the grading to $\text{tw}_1 - \text{tw}_2$.}
\label{tab:gradings}
\end{table}

\begin{prop}\label{prop: SU2}
The local character of the holomorphic twist of the free $4d$ $\N=1$ multiplet with respect to the symmetries above is
\[
\chi(p,q; z;u) \prod_{m \geq 0} \prod_{\ell = 0}^m \frac{1 - u z^{-1} q^{m+2} p^{2 \ell - m}}{1- z q^m p^{2 \ell -m}} 
\]
\end{prop}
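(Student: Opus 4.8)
The plan is to derive this formula directly from Proposition~\ref{prop: char} by the change of variables implementing the double cover $SU(2)_p \times U(1)_q \to U(2)_{q_1,q_2}$, followed by a regrouping of the double product. First I would record the dictionary between the two Cartans. Since the fundamental of $SU(2)_p$ acts on $\CC^2 = \langle z_1, z_2\rangle$, a monomial $z_1^{n_1} z_2^{n_2}$ carries $SU(2)_p$-weight $p^{n_1 - n_2}$ and $U(1)_q$-weight $q^{n_1 + n_2}$; comparing with the stated grading assignments (and Table~\ref{tab:gradings}), this is exactly the substitution $q_1 = qp$, $q_2 = qp^{-1}$, under which $q_1^{n_1} q_2^{n_2} = q^{n_1 + n_2} p^{n_1 - n_2}$. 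The fugacities $z$ and $u$, tracking the $U(1)_z$ flavor and $U(1)_u$ ghost weights, are identical in the two descriptions, so only the $(q_1,q_2) \mapsto (p,q)$ piece requires attention.

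Next I would apply this substitution to the product in Proposition~\ref{prop: char}. In the denominator, the factor associated to $\bgamma_{n_1,n_2}$ becomes $(1 - z\, q^{n_1+n_2} p^{n_1 - n_2})^{-1}$; in the numerator, the factor associated to $\bbeta_{n_1+1,n_2+1}$ becomes $1 - u z^{-1} q^{n_1+n_2+2} p^{n_1 - n_2}$. Setting $m = n_1 + n_2$ and $\ell = n_1$, the index $n_1$ runs over $0,\ldots,m$ for fixed $m$, with $n_1 - n_2 = 2\ell - m$; the double product $\prod_{n_1,n_2 \geq 0}$ thus reorganizes into $\prod_{m \geq 0}\prod_{\ell=0}^m$, reproducing precisely the numerator $1 - u z^{-1} q^{m+2} p^{2\ell - m}$ and denominator $1 - z\, q^m p^{2\ell - m}$ of the claimed formula.

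I would also record the intrinsic representation-theoretic interpretation that makes this regrouping manifest, as an independent route. The span $\{\bgamma_{n_1,n_2} : n_1 + n_2 = m\}$ is the $SU(2)_p$-irreducible $\Sym^m(\mathbf{2})$ of spin $m/2$, with weights $p^{2\ell - m}$ for $\ell = 0,\ldots,m$, uniform $U(1)_q$-weight $q^m$, flavor weight $z$, and trivial ghost weight; likewise $\{\bbeta_{n_1+1,n_2+1} : n_1 + n_2 = m\}$ is the spin-$m/2$ irreducible with $U(1)_q$-weight $q^{m+2}$, flavor weight $z^{-1}$, ghost weight $u^{+1}$, and odd parity. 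Invoking the decomposition of $\Obs^\text{hol}_0$ from the Lemma of~\S\ref{sec: bg ops} as a symmetric algebra on the $\bgamma$-generators tensored with an exterior algebra on the $\bbeta$-generators, the character of $\Sym$ over each bosonic weight space contributes a factor $(1 - z\, q^m p^{2\ell - m})^{-1}$ and the character of $\Wedge$ over each fermionic weight space contributes a factor $1 - u z^{-1} q^{m+2} p^{2\ell - m}$; multiplicativity of characters over tensor products then assembles the stated product.

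Since the computation is a reindexing of an already-established character, there is no serious analytic obstacle; the only point demanding care is fixing conventions, namely verifying that the double-cover substitution $q_1 = qp$, $q_2 = qp^{-1}$ is compatible with the weight assignments of the bulleted symmetries and of Table~\ref{tab:gradings}---in particular the $+2$ shift in the $U(1)_q$-weight of the $\bbeta$ operators, which is what produces the $q^{m+2}$ in the numerator. Once this bookkeeping is pinned down, the identity follows immediately.
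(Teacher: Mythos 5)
Your proposal is correct, and your primary route is genuinely different from the paper's. The paper proves Proposition~\ref{prop: SU2} directly: it decomposes the linear generators by $U(1)_q$-weight into $SU(2)_p$-irreducibles (the span of the $\bgamma$'s with $n_1+n_2=m$, and the span of the $\bbeta$'s with $q$-weight $m+2$), writes the character of the symmetric algebra on each irreducible via the generating function $\sum_k z^k q^{km}\chi_{SU(2)_p}(\Sym^k(V_m^\vee)) = \det(1-zq^mA)^{-1}$ with $A = \mathrm{diag}(p,p^{-1})$, and reads off the eigenvalues $p^{2\ell-m}$ on the basis $z_1^\ell z_2^{m-\ell}$; the change of variables $q\to(q_1q_2)^{1/2}$, $p\to(q_1/q_2)^{1/2}$ appears only in a remark \emph{after} the proof, as a consistency check against Proposition~\ref{prop: char}. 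You invert this logic: you take Proposition~\ref{prop: char} as established, substitute $q_1=qp$, $q_2=qp^{-1}$, and reindex the double product by $m=n_1+n_2$, $\ell=n_1$ (a bijection between $\{n_1,n_2\geq 0\}$ and $\{m\geq 0,\ 0\leq\ell\leq m\}$), which lands exactly on the claimed formula, including the $q^{m+2}$ in the numerator coming from $q_1^{n_1+1}q_2^{n_2+1}$. This route buys brevity and makes the compatibility of the two descriptions a theorem rather than an afterthought, at the cost of leaning on the earlier proposition; the paper's route (which your ``independent'' second argument reproduces essentially verbatim, down to the Sym/Wedge split of the Lemma in~\S\ref{sec: bg ops}) is self-contained and exhibits the representation-theoretic structure directly. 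One small point in your favor: your description of the $\bbeta$-span as the spin-$m/2$ irreducible with $U(1)_q$-weight $q^{m+2}$ is actually cleaner than the paper's notation $V_{m+2}$, which conflates the $q$-weight label with the dimension of the irreducible.
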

\begin{proof}
We will first decompose the weights with respect to the grading given by $U(1)_q$.
On linear operators, the decomposition is
\[
\left(\bigoplus_{m \geq 0} V_m^\vee\right) \oplus \left(\bigoplus_{m \geq 0} V_{m+2} \right) [-1]
\]

Let us first compute the contribution of the local operators $\Sym\left(\bigoplus_{m \geq 0} V_m^\vee\right)$ to the local character. 
Since this space has ${\rm tw}_1$ grading zero, it suffices to compute the $SU(2)_p \times U(1)_q \times U(1)_z$ character. 
For each $m$, $V_m^\vee$ is an irreducible representation of $SU(2)$, and hence we have a decomposition
\[
\prod_{m \geq 0} \chi_{SU(2)_p \times U(1)_q \times U(1)_z} \left( \Sym(V_m^\vee) \right) = \prod_{m \geq 0} \sum_{k \geq 0} z^k q^{km} \chi_{SU(2)_p}\left(\Sym^k(V_m^\vee)\right)
\]
The sum on the right hand side is the standard generating function for the determinant, so we can rewrite this as
\[
\prod_{m \geq 0} \frac{1}{\det(1 - z q^m A)}
\]
where the determinant is taken in the $V^\vee_m$ representation and $A$ is the $2 \times 2$ matrix ${\rm diag}(p, p^{-1})$. 
To compute this determinant, we choose the basis $\{z_1^{\ell} z_2^{m-\ell}\}_{\ell = 0}^{m}$ for $V_m^\vee$. 
Since, $A (z_1^{\ell} z_2^{m-\ell}) = p^{2\ell -m} z_1^{\ell} z_2^{m-\ell}$ the expression for the character reduces to
\[
\prod_{m \geq 0} \prod_{\ell = 0}^m \frac{1}{1-z q^m p^{2 \ell -m}} .
\]

Similarly, we can compute the contribution of $\Sym\left(\oplus_{m \geq 0} V_{m+2}\right)[-1]$ to the character which gives
\[
\prod_{m \geq 0} \prod_{\ell = 0}^m \left(1 - u z^{-1} q^{m+2} p^{2 \ell - m}\right)
\]
\end{proof}

\begin{rmk}
Note that the change of variables $q \to (q_1q_2)^{1/2}$ and $p \to (q_1/q_2)^{1/2}$ returns the expression for the character in Proposition \ref{prop: char}. 
This is consistent with the fact that $SU(2) \times U(1)$, whose Cartan we labeled by $q,p$, is a double cover of the group $U(2)$, whose Cartan we labeled by $q_1,q_2$. 
\end{rmk}

\begin{rmk}\label{rmk: flavor}
So far, we have treated the entire target $V$ as weight $+1$ with respect to the flavor symmetry $U(1)_z$.
If $\dim_\CC (V) = N$, then the flavor symmetry is in fact $U(N)$, and we can introduce a flavor fugacity for the entire Cartan subalgebra, thus enhancing the free character.
Labeling the $i$th fugacity by $z_i$, $i=1,\ldots, N$,
this enhanced free character becomes $\prod_{i=1}^N \chi(q_1,q_2; z_i ; u)$. 
\end{rmk}

\subsection{Partition function on Hopf surfaces}

In this section we show how the local character we have computed above is identical to the partition function of the holomorphic twist of $4d$ $\N=1$ chiral multiplet on a particular complex surface called a {\em Hopf surface}. 

We choose to focus on a class of Hopf surfaces which are diagonal.
These compact complex surfaces are defined for any two complex numbers $q_1,q_2$ satisfying $1 < |q_1| \leq |q_2|$ by the quotient
\[
X = \left. \left(\CC^2 \setminus 0\right) \;\; \right/ \;\; \sim
\]
where the relation is $(z_1,z_2) \sim (q^{n}_1 z_1, q^n_2 z_2)$ for $n \in \ZZ$.
As a smooth manifold $X_{q_1,q_2}$ is diffeomorphic to $S^3 \times S^1$, and the Dolbeault cohomology is
\[
H^{0,0}(X_{q_1,q_2}) = H^{0,1} (X_{q_1,q_2}) = H^{2,1}(X_{q_1,q_2}) = H^{2,2}(X_{q_1,q_2}) = \CC 
\]
with all other Dolbeault cohomology groups zero. 
In particular, $X_{q_1,q_2}$ is not K\"ahler.

Our goal is to compare the formula for the local character of the holomorphic theory computed in the last section to the partition function of the theory on Hopf manifolds. 
The relation between the two quantities is evidence for a higher dimensional {\em state-operator} correspondence.
There are some key differences between the usual CFT picture that we wish to point out. 
Firstly, in CFT one uses Weyl transformations to transform $\RR^n$ to $S^{n-1} \times \RR$ and then traces out the remaining direction to obtain the partition function on $S^{n-1} \times S^1$. 
For us, the holomorphic theory on $\CC^2$ restricts to one on $\CC^2 \setminus 0$ which we can then descend to one on the complex manifold $X_{q_1,q_2} \cong S^3 \times S^1$. 
On the other hand, if we perform the reduction in two stages:
\[
\CC^2 \setminus 0 \xto{\cong} S^3 \times \RR_{>0} \to S^3 \times S^1 = X_{q_1,q_2}
\]
then we can think about the partition function as related to Hochschild homology of the algebra obtained from the theory on $\CC^2 \setminus 0$. 
Thus, in the holomorphic case, the relation between the partition function and the trace of local operators in manifest.
The calculations in this section are an explicit test of this relationship. 

The variables involved in the local character consisted of $q_1,q_2$, which labeled the Cartan of the $U(2)$ symmetry group acting on $\CC^2$ by rotations. 
At the level of the partition function, these variables label the complex structure moduli on the Hopf manifold. 
The other local symmetry which we wish to match up with the partition function is the $U(1)_z$-flavor symmetry. 
Globally, we can encode this symmetry by working with a background $U(1)$ connection, which by holomorphicity we can take to be of type $(0,1)$.
That is, we consider the free $\beta\gamma$ system on $X_{q_1,q_2}$ in the presence of a background $(0,1)$-gauge field $A_f \in \Omega^{0,1}(X_{q_1,q_2})$, encoding the $U(1)$-flavor symmetry:
\[
\int_{X_{q_1,q_2}} \beta\dbar \gamma + \int_{X_{q_1,q_2}} \beta A_f \gamma .
\]
Globally, $A_f$ corresponds with a generator of the cohomology group $H^{0,1}(X_{q_1,q_2}) = \CC \cdot a_f$, and to be consistent with the formulas above, we will label the holonomy of $A_f$ by the variable $z$. 
In turn, the partition function will be a function of the variables $q_1,q_2,z$, just as the local character is.

The Hopf manifold can be viewed as the total space of a holomorphic fibration
\[
\begin{tikzcd}
T^2 \ar[r] & X_{q_1,q_2} \ar[d] \\
& \PP^1 
\end{tikzcd}
\]
which is topologically obtained from the Hopf fibration $S^1 \to S^3 \to S^2$ by taking the product with a circle $S^1$. 
In particular, there is a natural (smooth, not holomorphic) map $\pi : X_{q_1, q_2} \to S^3$. 

We will compute this partition function by first compactifying along $\pi$ to obtain a $3$-dimensional theory on $S^3$, with an infinite tower of fields corresponding to the winding modes around $S^1$.
Then, we use a formula for the partition function of partially holomorphic theory on $S^3$, which turns out to be equal to the reduction of our holomorphic theory on $X_{q_1,q_2}$. 
The spirit of our calculation is very similar to the approach in \cite{Spiri3d} at the level of the holomorphic twist. 

When we compactify, we must remember the higher Kaluza--Klein modes, but it is perhaps easier to imagine first the situation of dimensional reduction. 

\subsubsection{Dimensional reduction}

The dimensional reduction of the four-dimensional $\N=1$ supersymmetry algebra to three dimensions is the three-dimensional $\N=2$ supersymmetry algebra, and the four-dimensional chiral multiplet reduces to the $\N=2$ chiral multiplet in three dimensions.
Upon choosing a holomorphic twist $Q \in S_+^{4d}$, two of three translations remain exact upon reduction. There is then a natural description of the resulting theory, the minimal twist of the three-dimensional $\N=2$ chiral multiplet, as follows. 
Following~\cite{ACMV}, we will refer to the twist as ``holomorphic/topological matter.''
The twists and dimensional reductions fit into the following diagram of theories:
\[
\begin{tikzcd}
  4d: &  \text{$\N=1$ chirals} \ar[rrr,"\text{holomorphic twist}"] \ar[d, "\text{dimensional reduction}"'] & & & \beta\gamma {\rm \; system} \ar[d, "\text{dimensional reduction}"] \\
  3d: & \text{$\N = 2$ chirals} \ar[rrr,"\text{minimal twist}"] & & & \text{ hol./top.\ matter} 
\end{tikzcd}
\]

For simplicity, we will give a local description of the three-dimensional theory, that we refer to as holomorphic/topological matter, on the $3$-manifold $\CC \times \RR$.\footnote{The notation is to remind us that we are using the complex structure on $\CC$ and just the real structure on $\RR$.}
In general, a choice of nilpotent supercharge in the three-dimensional $\N=2$ algebra corresponds to a {\em transverse holomorphic foliation} structure, which can be used to give a coordinate independent description of the theory.
We refer the reader to~\cite{ACMV} for more details.  
The fields of the twisted $3d$ theory (in the BV formalism) are given by
\begin{align}
\label{gamma1} \gamma^{3d} & \in \Omega^{0,*}(\CC_z) \hattensor \Omega^*(\RR_t) \tensor V \\
\label{beta1} \beta^{3d} & \in \Omega^{1,*}(\CC_z) \hattensor \Omega^*(\RR_t) \tensor V^\vee [1]
\end{align}
and the action functional is
\beqn\label{3dmatter}
S(\gamma^{3d}, \beta^{3d}) = \int_{\CC \times \RR} \beta^{3d} \d \gamma^{3d} 
\eeqn
where $\d$ is the total de Rham operator on $\CC \times \RR = \RR^3$. 
Notice that only the piece $\d \zbar \partial_{\zbar} + \d t \partial_t$ of the de Rham operator contributes to the action.

\subsubsection{Compactification}

The compactification of the twist of $4d$ $\N=1$ is a bit more subtle.
When we compactify along the map $\pi : X_{q_1,q_2} \to S^3$, we want to remember not just the zero modes but all of the higher modes as well.
The resulting $3$-dimensional theory is of a similar form to the holomorphic/topological matter system, but there is an infinite tower of fields corresponding to the winding modes along $S^1$.
Locally, we can write the fields in the form
\begin{align}
\gamma^{3d}_\pi = \sum_{n\in \ZZ} \gamma_n^{3d}(z,\zbar, t) e^{in \theta} \\
\beta_\pi^{3d} = \sum_{n \in \ZZ} \beta_n^{3d} (z,\zbar,t) e^{i n \theta} .
\end{align}
where $\gamma_n^{3d}, \beta_n^{3d}$ are $3$-dimensional fields as in~\eqref{gamma1} and~\eqref{beta1} for each $n \in \ZZ$, and $\theta$ is a coordinate on~$S^1$. 

The theory, upon compactification, is of the form 
\beqn\label{Sn}
S_n(\gamma_n^{3d}, \beta_n^{3d}) = \int_{S^3} \beta_n^{3d} \d \gamma_n^{3d} + \beta_n^{3d} a_{f} \gamma_n^{3d} + \beta_{n}^{3d} A_{b, n} \gamma_n^{3d}
\eeqn
Here, $a_f$ is the constant background $U(1)$-flavor connection, and $A_{b,n}$ is a background connection that is proportional to the winding mode
\[
A_{b} = i n \d t + \cdots .
\]
This background connection has the effect of shifting the energy of the modes by the integer $n$. 
In particular, when $n=0$ this background field vanishes (up to a gauge transformation) and we are left with the dimensionally reduced theory from before. 

In \S 6.2 of \cite{ACMV} the partition function along $S^3$, equipped with a THF structure, of the 3d holomorphic/topological matter theory is computed. 
For the free theory (\ref{3dmatter}), with constant background connection $a_f$, the answer is
\[
Z_{3d}^{\rm chiral}(S^3_{q_1,q_2}) = \prod_{n_1,n_2 \geq 0} \frac{n_1 \tau_1 + n_2 \tau_2 - i (\tau_1 + \tau_2) + i a_f}{n_1 \tau_1 + n_2 \tau_2 - i a_f}
\]
where $q_i = e^{2\pi i \tau_i}$.

Now, the full partition function of the theory on $X_{q_1,q_2}$ is written as a product of the $3$-dimensional partition functions over the winding modes
\beqn\label{4dpartition}
Z_{4d}^{\rm chiral}(X_{q_1,q_2}) = \prod_{n \in \ZZ} Z_{3d}^{{\rm chiral}, (n)}(S^3_{q_1,q_2}) 
\eeqn
where $Z_{3d}^{{\rm chiral}, (n)}(S^3_{q_1,q_2})$ is the partition function of the $3d$ theory corresponding to the $n$th mode. 

For each $n \in \ZZ$, labeling the winding mode, we have the similar looking formula of the partition function of the theory (\ref{Sn}):
\[
Z_{3d}^{{\rm chiral}, (n)}(S^3_{q_1,q_2}) = \prod_{n_1,n_2 \geq 0} \frac{n_1 \tau_1 + n_2 \tau_2 + n - i (\tau_1 + \tau_2) + i a_f}{n_1 \tau_1 + n_2 \tau_2 + n - i a_f} .
\]
After simplifications involving a regularization scheme---see \cite{AsselHopf} for instance---the product (\ref{4dpartition}) reduces to the elliptic $\Gamma$-function
\[
Z_{4d}^{\rm chiral}(X_{q_1,q_2}) = \Gamma_{ell}(q_1,q_2 ; z) 
\] 
where $z = e^{2 \pi i a_f}$ is the holonomy of the $U(1)$-flavor connection $a_f$. 

\begin{rmk}
The holomorphic twist of the $\N=(1,0)$ hypermultiplet in six dimensions is equivalent to the $\beta\gamma$ system on $\CC^3$; see~\cite{Butson} and~\cite{ESW}.
One can compute the holomorphic local character in a way completely similar to the method here.
It would be useful to compare the calculation of the partition function on the Hopf $3$-fold diffeomorphic to $S^5 \times S^1$ to to the local character as we've just done for Hopf surfaces.
\end{rmk}

\section{Turning on a superpotential}
\label{sec:Fterms}

In this section we perform the calculation of the local character of the $\beta\gamma$ system on $\CC^2$, valued in $V$, in the presence of a holomorphic interaction defined by the holomorphic potential $W : V \to \CC$. 
Recall, we have seen how this theory arises as the twist of the theory of the $\N=1$ chiral multiplet deformed by the $F$-term superpotential determined by $W$.

Our method to compute the local character is direct. 
In the BV formalism, the differential on observables can alway be split up as $s = s_{\rm free} + s_{\rm int}$ where $s_{\rm free}$ comes from the free part of the action and $s_{\rm int}$ from the interacting part. 
In turn, this induces a spectral sequence whose first page computes the cohomology with respect to $s_{\rm free}$ and the differentials on the higher pages are determined by $s_{\rm int}$. 
For holomorphic local operators, when $Q^\text{hol} = 0$, we have implicitly already taken the cohomology with respect to the free part, which is always equal to the Dolbeault operator of some holomorphic vector bundle. 
Thus, all that remains to compute is the cohomology with respect to $s_{\rm int}$. 

\subsection{The holomorphic $\sigma$-model}

The theory in the presence of superpotential interactions can be understood geometrically. Our results can be interpreted as saying, analogous to Remark~\ref{rmk:sigma-model}, that the theory is equivalent to a holomorphic sigma model whose target space is the derived critical locus of~$W$.

Let us unpack this a bit. 
As we've already remarked, the free $\beta\gamma$ system valued in $V$ makes sense in any complex dimension $d$ as the cotangent theory to the moduli space of holomorphic maps ${\rm Map}^\text{hol}(\CC^d, V)$. 
One way of writing this is as 
\[
T^*[-1] \; {\rm Map}(\CC^d_{\dbar} , V)
\]
where $\CC^d_{\dbar}$ indicates the derived manifold whose dg ring of functions is $\Omega^{0,*}(\CC^d)$, which resolves holomorphic functions on $\CC^d$.
Thus, the $\beta\gamma$ system is the cotangent theory of a very natural holomorphic $\sigma$-model into a vector space target.

In particular, the $\beta\gamma$ system on $\CC^2$ on with values in the complex vector space $V$ is the BV theory whose fields are
\begin{align*}
T^*[-1] \; {\rm Map}(\CC^2_{\dbar}, V) & = T^*[-1] \left(\Omega^{0,*}(\CC^2) \tensor V \right) \\ & = \Omega^{0,*}(\CC^2) \tensor V \oplus \Omega^{2,*}(\CC^2) \tensor V^\vee [1] .
\end{align*}

We can rewrite the fields once we choose a Calabi--Yau form on $\CC^2$, which we always assume is simply~$\d^2 z$. 
Indeed, up to issues of compact support, we have
\[
T^*[-1] \; {\rm Map}(\CC^2_{\dbar}, V) = {\rm Map}\left((\CC^2_{\dbar}, \d^2 z) , T^*[1] V\right)
\]
where on the right hand side the $(+1)$-shifted cotangent bundle to $V$ has appeared. 
This gives us an AKSZ description of the $\beta\gamma$ system.
If we choose boundary conditions at $\infty$ in $\CC^2$, the Calabi--Yau form endows $\CC^2$ with the structure of a $2$-oriented dg manifold. 
The standard pairing on $T^*[1] V$ endows it with the structure of a $1$-shifted symplectic structure, and the free $\beta\gamma$ system is equivalent to the resulting AKSZ theory.

Thus, once we choose a holomorphic volume form, the free $\beta\gamma$ system is the holomorphic $\sigma$-model with target the derived manifold $T^*[1] V$ equipped with the dg ring of functions
\beqn\label{koszul1}
\left(\sO(T^*[1] V), Q = 0\right) = \left(\Sy^* (V^\vee) \tensor \Lambda^*(V), 0\right)
\eeqn
with {\em zero} differential.
Here, $\Wedge^k V$ is placed in degree $+k$. 

Using this description, it is easy to see what happens when we turn on the interaction $L_W = \int \d^2 z \,W(\gamma)$ determined by the holomorphic potential $W \in \sO(V)$. 
Recall, the Koszul resolution is a cochain complex which computes the derived critical locus of $W$.
It has the form
\[
\left(\sO(T^*[-1] V), \{W, -\}\right)
\]
where $\{-,-\}$ is the $(-1)$-shifted Poisson bracket associated to the standard $(-1)$-shifted symplectic structure on $T^*[-1] V$. 
Notice that the graded ring $\sO(T^*[1] V)$ in (\ref{koszul1}) differs from the graded ring of the standard Koszul resolution, but they agree if we take the gradings modulo $2$. 
Thus, we can identify the $\ZZ/2$ graded ring of functions on the target of the holomorphic $\sigma$-model with the underlying $\ZZ/2$-graded ring of the Koszul resolution of $W$. 

The effect of turning on $W$ deforms the $\ZZ/2$-graded dg ring (\ref{koszul1}) to 
\beqn\label{koszul2}
\left(\sO(T^*[1] V), \{W, -\}\right)
\eeqn
which we can, in turn, identify with the underlying $\ZZ/2$-graded dg ring of the standard Koszul resolution. 

In conclusion, the $\beta\gamma$ system deformed by $L_W$ is equal to the $\ZZ/2$-{\em graded} theory given by the AKSZ $\sigma$-model
\[
{\rm Map}(\CC^2_{\dbar}, T_W^*[1] V)
\]
where $T_W^*[1] V$ denotes the odd symplectic $\ZZ/2$-graded dg manifold whose ring of functions is (\ref{koszul2}). 
Thus, we can interpret the theory in the presence of a superpotential $W$ as the $\sigma$-model of $\CC^2$ mapping into the {\em derived critical locus} of $W$. 
Here, we forget the $\ZZ$-graded dg manifold $\CC^2_{\dbar}$ down to a $\ZZ/2$ dg manifold in the obvious way.
This has the effect of only remembering the Dolbeault form degree modulo $2$. 

\subsection{The chiral Jacobi ring}

To perform the calculation of the character, we first collect some facts about the $\beta\gamma$ system in the presence of a superpotential.
\begin{itemize}
\item
As above, denote the fields of the $\beta\gamma$ system on $\CC^2$ by
\deq{
\gamma_i = \gamma^0_i + \gamma^{1}_i + \gamma^{2}_i \in \Omega^{0,*}(\C^2), \qquad \beta^{0 ; i} + \beta^{1 ; i} + \beta^{2 ; i} \in \Omega^{2,*}(\C^2) [1]
}
where $i = 1, \ldots, N = \dim_\CC V$. 

\item The BV pairing between fields is the evaluation pairing on $V$ together with the wedge product of forms $(\beta, \gamma) \mapsto [\<\beta\wedge\gamma\>_V]_\text{top}$, so that the antifield to~$\gamma^{a}_i$ is~$\beta^{0, 2-a ; i}$.

\item The classical BV action of the twisted theory is
\deq{
L = i \langle \beta, \bar\partial \gamma \rangle + L_\text{int}(\gamma),
}
where
\deq{
L_\text{int}(\gamma) = \frac{1}{2} \left(W_{ij}(\gamma^0) \gamma_i^{0,1}\wedge \gamma_j^{0,1} + W_{i}(\gamma^0) \gamma_i^{0,2}\right) \d^2 z .
}

\item The local operators, as in~\S\ref{sec: bg ops}, only depend on the lowest component of the $\beta,\gamma$ fields and are denoted by $\bgamma_{n_1,n_2 ; i}, \bbeta_{n_1+1,n_2+1}^j$. 
\end{itemize}

To see the interaction spectral sequence for the local operators supported at $w=0 \in \CC^2$, we will write the bicomplex arising from separating the BV differential $s = \{S,-\}$ into two parts, corresponding to the decomposition $S= S_{\rm free} + S_\text{int}$. 
The cochain complex of local operators at $0$ are of the form
\[
\left(\Obs_0 , s = s_{\rm free} + s_{\rm int}\right)
\]
where $s_0$ is the piece of the BV differential coming from the free part of the action, and $s_{\rm int}$ comes from the interactions. 
It is clear that
\deq{
s_0  = \pm i \bar\partial,
}
acting with opposite signs on fields and antifields, so that the free BV complex is a copy of the Dolbeault complex ($\gamma$) in ghost number zero, together with another copy of the Dolbeault complex ($\beta$), shifted by the Calabi--Yau form, in ghost number $-1$.
Thus, the $E_1$ page of the spectral sequence is precisely the holomorphic local operators $\Obs_0^\text{hol}$ of the free $\beta\gamma$ system, see Remark \ref{rmk: ss1}.
Thus, we can identify the first page of the spectral sequence is
\beqn\label{ss2}
\left(\Obs_0^\text{hol} , s_{\rm int}\right) .
\eeqn
Recall that the holomorphic local operators of the free system have been described in~\S\ref{sec: hol op}.

The cohomology of this complex is the next page of the spectral sequence, and is governed solely by the superpotential $W$. 
Since $L_\text{int}$ only depends on the  fields $\gamma$, $s_\text{int}$ can only act nontrivially on antifields. From the form of the antibracket, the differential on local operators is as follows:
\deq{
s_\text{int}: \bbeta_{n_1,n_2}^i \mapsto \pm \frac{\delta L_\text{int}}{\delta \bgamma_{n_1,n_2;i}}. 
}
To write down an explicit formula for this differential, we set up some notation. 

\subsubsection{} \label{sec: construction1}

Suppose $B$ is any commutative algebra and let $\xi : B \to \CC$ be an algebra map. 
Given an element $F \in \sO(V)$, we can extend it to one $\Tilde{F}(\xi) \in \sO(B \tensor_\CC V)$ as follows. 
Let $F_n : V^{\tensor n} \to \CC$ be the $n$th homogenous component of $F$. 
Consider the composition
\[
(V \tensor B)^{\tensor n} \xto{F_n \tensor \id_{B}^{\tensor n}} B^{\tensor n} \xto{m} B \xto{\xi} \CC
\]
where $m$ is the multiplication on $B$. 
Symmetrizing, we obtain an element $\Tilde{F}_{n}(\xi) \in \Sym^n(B \tensor V)^\vee$. 
Then, the extension is defined as the sum $\Tilde{F}(\xi) = \sum_n \Tilde{F}_{\xi, n}$. 

This construction 
\[
F \in \sO(V) \rightsquigarrow \Tilde{F}(\xi) \in \sO(B \tensor V)
\]
has the following geometric interpretation. 
Think of $B$ as defining the affine scheme $\Spec(B)$. 
Then, $\sO(B \tensor V)$ can be thought of as observables on the space of maps $\Spec(B) \to V$ (here, $V$ is just a linear space, but this construction can be easily modified for general varieties or even stacks). 
The map of algebras $\xi : B \to \CC$ defines a linear map $\xi : B \times V \to V$.
Equivalently, $\xi$ can be thought of as an element in $\Spec(B)$ and hence determines a map
\[
  {\rm ev}_\xi : \Map(\Spec(B) , V) \to V
\]
by evaluation.
Then, given $F \in \sO(V)$, we can pull-back along ${\rm ev}_\xi$ to obtain an element ${\rm ev}^*(F) = \Tilde{F}(\xi) \in \sO(\Map(\Spec(B) , V)) = \sO(B \tensor_\CC V)$.

Now, specialize to the case $B = \CC[\![z_1, z_2]\!]$.
Geometrically, we are looking at the space of maps from the formal disk to $V$
\[
\hD^2 \to V .
\]
We use the notation $\Tilde{F}_{n_1,n_2} := F(\xi)$ where $\xi : \CC[\![z_1,z_2]\!] \to \CC$ is the linear functional $f \mapsto \partial_{z_1}^{n_1} \partial_{z_2}^{n_2} f$. 
Via the above construction, the potential $W \in \sO(V)$ defines an observable 
\[
\Tilde{\partial_i W}_{n_1,n_2} \in \sO(\CC[\![z_1,z_2]\!] \tensor V)  .
\]
Note that this composite observable is a polynomial in the $\gamma_{n_1,n_2 ; j}$ observables we've defined above, 

Using this notation, the differential $s_{\rm int}$ can be read off as
\[
s_{\rm int} : \bbeta^{i}_{n_1 + 1, n_2+1} \mapsto \Tilde{\partial_i W}_{n_1, n_2} (\bgamma) .
\]
On the right-hand side, we use the notation $(\bgamma)$ to stress that the resulting operator only depends on the local field $\gamma$. 

\subsubsection{A remark on the Jacobian ring}

Classically, the Jacobian ring of a polynomial
\[
W \in \CC[x_1,\ldots,x_N] = \sO(V)
\]
is
\[
{\rm Jac}(W) = \CC[x_1,\ldots, x_N]  / \<\partial_1 W, \ldots, \partial_N W\> .
\]
This ring appears as the $B$-model chiral ring of a two-dimensional $\N=(2,2)$ Landau--Ginzburg theory~\cite{LVW}. 
We introduce a slight holomorphic variant of the Jacobi ring that one might think of as the Jacobian ring for the mapping space $\hD^2 \to V$. 

We have already seen that a piece of the space of local operators of the $\beta\gamma$ system on $\CC^2$ is of the form $\sO(\CC[\![z_1,z_2]\!] \tensor V)$.
The ring we consider is a quotient of this by some ideal we now describe. 
As above, let $W \in \sO(V)$ be any polynomial.

Given an algebra map $\xi : B \to \CC$, we have seen in~\S\ref{sec: construction1} how to extend a polynomial $F \in \sO(V)$ to one $\Tilde{F}(\xi) \in \sO(B \tensor V)$.
In the case $B = \CC[\![z_1,z_2]\!]$, consider the polynomials $\partial_1 W ,\ldots, \partial_N W \in \sO(V)$ and the associated observables
\[
\Tilde{\partial_1 W}_{n_1, n_2} (\bgamma), \ldots, \Tilde{\partial_N W}_{n_1, n_2} (\bgamma) \in \sO(\CC[\![z_1,z_2]\!] \tensor V) \subset \Obs^\text{hol}_0 .
\]

\begin{dfn}
Define the two-dimensional {\em chiral Jacobi ring} of $W \in \sO(V)$ to be
\[
\Jac^\text{hol} (W) = \left. \sO(\CC[\![z_1,z_2]\!] \tensor V) \; \middle/ \; \left\<\Tilde{\partial_1 W}_{n_1, n_2} (\bgamma), \ldots, \Tilde{\partial_N W}_{n_1, n_2} (\bgamma) \; | \; n_1,n_2 \geq 0 \right\> \right. .
\]
\end{dfn}

Note that the usual Jacobian ring sits inside the chiral one as the $U(2)$-weight $(0,0)$ subspace and is of the form $\sO(V) / \<\partial_1 W , \ldots, \partial_N W\>$. 
With this definition in hand, the following lemma is easy to prove. 

\begin{lem}\label{lem: potential cohomology}
The cohomology of the local holomorphic observables of the $\beta\gamma$ system on $\CC^2$ in the presence of a potential $W$, see (\ref{ss2}), is isomorphic to the chiral Jacobi ring:
\[
H^*(\Obs^\text{hol}_0 , s_{\rm int}) \cong \Jac^\text{hol} (W) .
\]
\end{lem}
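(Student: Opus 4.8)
The plan is to read off the differential $s_{\rm int}$ as a Koszul differential and to reduce the lemma to an acyclicity (regular sequence) statement. From the preceding description of $\Obs_0^\text{hol}$, the underlying graded vector space is $\Sym\left((\CC[\![z_1,z_2]\!]\tensor V)^\vee\right)\tensor\Wedge\left((\CC[\![z_1,z_2]\!]\tensor V^\vee)^\vee\right)$, with even generators the $\bgamma_{n_1,n_2;i}$ and odd generators the $\bbeta^{i}_{n_1+1,n_2+1}$. Write $R=\sO(\CC[\![z_1,z_2]\!]\tensor V)$ for the commutative subalgebra generated by the $\bgamma$'s. The formula $s_{\rm int}\colon\bbeta^{i}_{n_1+1,n_2+1}\mapsto\Tilde{\partial_i W}_{n_1,n_2}(\bgamma)$, extended by the Leibniz rule, identifies $(\Obs_0^\text{hol},s_{\rm int})$ with exactly the Koszul complex of the family $\{\Tilde{\partial_i W}_{n_1,n_2}\}_{i,n_1,n_2}\subset R$: one odd exterior generator sits over each relation, and the differential contracts it against the corresponding element of $R$. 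This identification is the first step and is immediate from the two inputs just cited.

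In this Koszul complex the homology in the degree spanned by pure $\bgamma$-monomials (no $\bbeta$ factors) is the quotient $R/\langle\Tilde{\partial_i W}_{n_1,n_2}\mid 1\le i\le N,\ n_1,n_2\ge0\rangle$, which is exactly the chiral Jacobi ring $\Jac^\text{hol}(W)$ by its definition. Hence the content of the lemma is that all remaining Koszul homology vanishes, i.e.\ that the family $\{\Tilde{\partial_i W}_{n_1,n_2}\}$ is a regular sequence in $R$; granting this, the Koszul complex is a resolution, its total cohomology is concentrated in a single degree, and $H^*(\Obs_0^\text{hol},s_{\rm int})\cong\Jac^\text{hol}(W)$ exactly as stated.

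The remaining, and main, step is to establish this regularity. Here I would first introduce the weight grading $w(\bgamma_{n_1,n_2;i})=n_1+n_2$ and $w(\bbeta^{i}_{n_1+1,n_2+1})=n_1+n_2$, for which $s_{\rm int}$ is homogeneous; refining by the polynomial degrees in $\bgamma$ and $\bbeta$ splits the complex into finite-dimensional pieces, so the Koszul complex is an increasing union of finite subcomplexes and the associated spectral sequences converge. I would then exploit the leading-term structure under a jet-order filtration. The weight-zero relations $\Tilde{\partial_i W}_{0,0}=\partial_i W(\bgamma_{0,0})$ reproduce the classical sequence on the zero modes, while for $|n|\ge1$ the relation $\Tilde{\partial_i W}_{n_1,n_2}$ equals, modulo products of strictly lower jets, the Hessian $\partial_i\partial_j W(\bgamma_{0,0})$ applied to the top jet $\bgamma_{n_1,n_2;j}$. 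When $W$ has nondegenerate critical points, so that $\partial_i\partial_j W$ is invertible over the relevant localized ring, each such relation lets one solve for and eliminate the top-jet mode $\bgamma_{n_1,n_2;j}$ against its partner $\bbeta^{i}_{n_1+1,n_2+1}$; performing these eliminations for all $|n|\ge1$ contracts the chiral Koszul complex onto the classical Koszul complex of $\{\partial_i W\}$ on the zero modes $\bgamma_{0,0}$. The latter is a resolution precisely because these partials form a regular sequence, giving vanishing higher cohomology and $\Jac(W)$ in degree zero, which then transports back to $\Jac^\text{hol}(W)$.

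I expect this last reduction to be the principal obstacle, because jet prolongation does \emph{not} preserve regularity for an arbitrary ideal: one must genuinely use the gradient structure of the relations and the invertibility of the Hessian to control the leading terms at each jet level. Concretely, the difficulty is to organize the infinitely many eliminations rigorously---the jet-order filtration must be shown to converge and the integrating-out must be realized as an honest contraction (for instance via the homological perturbation lemma) so that the lower-jet corrections concealed in each relation cannot create new Koszul cycles among the higher modes. This bookkeeping, resting on the nondegeneracy that makes each Hessian-linear elimination valid, is the crux on which the vanishing of the higher cohomology---and hence the lemma exactly as stated---depends.
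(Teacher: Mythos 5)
Your reduction of the lemma to a Koszul statement is a correct reading of what the lemma literally asserts, but note that it goes well beyond the paper's own proof, which is a two-line kernel/image identification: among the generators, those annihilated by $s_{\rm int}$ span the $\bgamma$-subalgebra $\sO(\CC[\![z_1,z_2]\!]\tensor V)$, the image of $s_{\rm int}$ is the ideal $\langle \Tilde{\partial_i W}_{n_1,n_2}\rangle$, and so the degree-zero cohomology is $\Jac^\text{hol}(W)$ essentially by definition. The paper never engages with the higher Koszul homology whose vanishing your argument hinges on, and its subsequent character computation (Proposition~\ref{prop: potential char}) is in effect an Euler characteristic, which is insensitive to that question.

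The genuine gap is your main step, and it is not merely unfinished---the regularity you want is \emph{false} in precisely the cases of interest. Your elimination of the higher jets requires the Hessian $\partial_i\partial_j W$ to be invertible along the critical locus, i.e.\ Morse~$W$; but the superpotentials used for the character are homogeneous of degree $N+1\geq 3$, whose unique critical point at the origin is degenerate. Concretely, take $V=\CC$ and $W=x^3$: then $\Tilde{\partial W}_{0,0}=3\bgamma_{0,0}^2$ and $\Tilde{\partial W}_{0,1}=6\,\bgamma_{0,0}\bgamma_{0,1}$, so the element
\[
\sigma \;=\; 2\,\bgamma_{0,1}\,\bbeta_{1,1} \;-\; \bgamma_{0,0}\,\bbeta_{1,2}
\]
satisfies $s_{\rm int}\sigma = 6\bgamma_{0,0}^2\bgamma_{0,1}-6\bgamma_{0,0}^2\bgamma_{0,1}=0$, yet it cannot be exact: every $s_{\rm int}$-boundary in $\bbeta$-degree one has coefficients lying in the ideal $\langle \Tilde{\partial W}\rangle$, which contains no elements of polynomial degree one since its generators are quadratic (and completion preserves the $\Sym$-grading, so this survives in $\Hat{\Sym}$). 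Hence the prolonged partials are \emph{not} a regular sequence, the Koszul complex is not a resolution, and no homotopy-perturbation bookkeeping can contract it onto the zero-mode Koszul complex---confirming your own worry that jet prolongation destroys regularity, here because the critical locus is non-reduced. Your strategy does prove the statement in the Morse case (where in fact $\Jac^\text{hol}(W)\cong\Jac(W)$), but for homogeneous $W$ the lemma can only be read, as the paper's proof implicitly does, as the identification of the degree-zero cohomology with $\Jac^\text{hol}(W)$; the higher cohomology classes such as $\sigma$ survive, consistently with the paper's interpretation of the theory as a $\sigma$-model into the \emph{derived} critical locus, whose arc space is genuinely derived when $\mathrm{Crit}(W)$ is singular.
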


\begin{proof}
Indeed, the subspace which is killed by $s_{\rm int}$ is $\sO(\CC[z_1,z_2] \tensor V) \subset \Obs^\text{hol}_0$. 
These are the local operators generated by $\bgamma_{n_1,n_2;i}$.
By definition, the image of $s_{\rm int}$ is the subspace $\<\Tilde{\partial W}\>$. 
\end{proof}

\begin{rmk}
In particular, if $W$ is a non-degenerate quadratic polynomial then we see that the cohomology vanishes so that there are no non-trivial local operators in this case.
This is familiar to the case of the $B$-model and the ordinary Jacobi ring. 
\end{rmk}

\begin{rmk}
One can make a similar definition in the complex dimension one case.
We define the one-dimensional chiral Jacobi ring to be
\[
\left. \sO(\CC[\![z]\!] \tensor V) \; \middle/ \; \left\<\Tilde{\partial_1 W}_{n}, \ldots, \Tilde{\partial_N W}_{n} \; | \; n \geq 0 \right\> \right. 
\]
This ring appears as the cohomology of the observables of the holomorphic twist of the $(2,2)$ supersymmetric $\sigma$-model into $V$. 
Note that the further quotient of this ring by the ideal $\sO(z \CC[\![z]\!] \tensor V)$ is precisely the ordinary Jacobi ring of $W$, which is the chiral ring of the Landau--Ginzburg model of $W$. The elliptic genus of Landau--Ginzburg models was first computed in~\cite{Witten-LG}, while \cite{Kolya} gave an analysis of the chiral algebra appearing in the holomorphic twist. 
\end{rmk}

\subsection{The homogenous character}

We now turn to compute the character of the holomorphic theory in the presence of the superpotential.
We rely on Lemma~\ref{lem: potential cohomology}. 

Note that since the theory is no longer $\ZZ$-graded, we no longer have the $U(1)_{u}$-symmetry coming from ghost number.
Thus, the character is only a function of the fugacities $(p,q;z)$, where we use the notation of \S\ref{sec: SU2}.

\begin{prop} \label{prop: potential char}
Let $V = \CC$ and suppose $W$ is homogenous of degree $N+1$. 
Then, the $SU(2) \times U(1) \times U(1)$ character of the holomorphic local operators is given by
\[
\chi_W (p,q;z) = \prod_{m \geq 0} \prod_{\ell = 0}^m \frac{1 - z^N q^m p^{2\ell-m}}{1-zq^m p^{2\ell-m}} .
\]
\end{prop}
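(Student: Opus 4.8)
The plan is to identify the complex $(\Obs^\text{hol}_0, s_{\rm int})$ with a Koszul complex and to compute its character as an Euler characteristic, which is insensitive to passing to cohomology. By Lemma~\ref{lem: potential cohomology} the cohomology is the chiral Jacobi ring $\Jac^\text{hol}(W)$, which (being a quotient of the ghost-degree-zero part) is concentrated in even parity; hence its honest character coincides with the supertrace over the complex, so it suffices to compute the latter. First I would recall from~\S\ref{sec: bg ops} that $\Obs^\text{hol}_0$ is, as a graded-commutative algebra, freely generated by the bosonic operators $\bgamma_{n_1,n_2}$ and the fermionic operators $\bbeta_{n_1+1,n_2+1}$, and that $s_{\rm int}$ sends each generator $\bbeta_{n_1+1,n_2+1}$ to $\Tilde{\partial W}_{n_1,n_2}(\bgamma)$. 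Since $W$ is homogeneous of degree $N+1$ and $V=\CC$, we have $\partial W \propto x^N$, so each image is a genuine degree-$N$ polynomial in the $\bgamma$, and $\bigl(\Sym(\bgamma)\tensor\Wedge(\bbeta),\, s_{\rm int}\bigr)$ is exactly the Koszul complex on the family $\{\Tilde{\partial W}_{n_1,n_2}\}$.

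The heart of the computation is the weight bookkeeping under $SU(2)_p \times U(1)_q \times U(1)_z$. From the proof of Proposition~\ref{prop: SU2}, for fixed $m$ the operators $\bgamma_{n_1,n_2}$ with $n_1+n_2=m$ span the irreducible $V_m^\vee$ and contribute $\prod_{\ell=0}^m (1 - z q^m p^{2\ell-m})^{-1}$. The key point is that the \emph{relations} $\Tilde{\partial W}_{n_1,n_2}$ do not carry the free weight of $\bbeta$: one computes directly that $\Tilde{\partial W}_{n_1,n_2}$, being a product of $N$ fields $\bgamma$ of total derivative order $(n_1,n_2)$, has flavor weight $z^N$ and $U(1)_q \times SU(2)_p$ weight $q^m p^{2\ell-m}$ for $n_1+n_2=m$ and $0 \le \ell \le m$. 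This reflects that turning on $W$ breaks the free $U(1)_z \times U(1)_u$ down to the combination recorded in Table~\ref{tab:gradings}. Equivalently, the interacting character is obtained from the free character of Proposition~\ref{prop: SU2} by the specialization $u = z^{N+1} q^{-2}$, under which the numerator factor $1 - u z^{-1} q^{m+2} p^{2\ell-m}$ becomes exactly $1 - z^N q^m p^{2\ell-m}$ while the denominator is unchanged.

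With the weights fixed, I would assemble the Euler characteristic multiplicatively, using that the character of a tensor product is the product of characters and that an odd generator of weight $w$ contributes a factor $(1-w)$. The symmetric algebra on the even generators $\bgamma$ gives the denominator $\prod_{m\geq 0}\prod_{\ell=0}^m (1 - z q^m p^{2\ell-m})^{-1}$, and the exterior algebra on the odd relations $\Tilde{\partial W}_{n_1,n_2}$ gives the numerator $\prod_{m\geq 0}\prod_{\ell=0}^m (1 - z^N q^m p^{2\ell-m})$. Their product is the claimed formula for $\chi_W(p,q;z)$.

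The main obstacle I anticipate is exactly the weight bookkeeping of the second paragraph: one must keep track of the fact that $U(1)_q$ and $U(1)_z$ are not separately preserved by $s_{\rm int}$ (the differential carries $q$-weight $-2$ and $z$-weight $N+1$), so that only a specific combination of the free gradings with ghost number descends to the cohomology. Handling this carefully is what justifies assigning the relations $\Tilde{\partial W}_{n_1,n_2}$ the weight $z^N q^m p^{2\ell-m}$ rather than the free $\bbeta$-weight, and hence what produces the shift from $q^{m+2}$ to $q^m$ and from $z^{-1}$ to $z^N$ in the numerator. Once this is settled, the remainder is the routine character of a free graded-commutative algebra.
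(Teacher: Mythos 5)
Your proof is correct, and it takes a route that differs from the paper's own proof, though it is essentially a rigorous version of the method the paper sketches \emph{after} its proof as ``another, efficient way'' of arriving at the formula. The paper's proof stays entirely inside the chiral Jacobi ring of Lemma~\ref{lem: potential cohomology}: it computes $\chi_W$ as the character of the ambient ring $\sO(\CC[\![z_1,z_2]\!]\tensor V)$ minus the character of the ideal $\<\Tilde{\partial W}\>$, using that the generators $\Tilde{\partial W}_{n_1,n_2}$ carry flavor weight $z^N$ and $U(1)_q\times SU(2)_p$ weight $q^m p^{2\ell-m}$; the ``difference'' of its two displayed products is to be understood factor-by-factor over the blocks labeled by $(m,\ell)$, which tacitly invokes the same regular-sequence/Koszul structure that you make explicit. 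You instead compute the supertrace of the full Koszul complex $\Sym(\bgamma)\tensor\Wedge(\bbeta)$ with the interacting weights assigned to the odd generators, and use Lemma~\ref{lem: potential cohomology} only to know that the cohomology sits in even parity, so that the supertrace equals the honest character. This organization buys two things: the Euler-characteristic step is manifestly insensitive to passing to cohomology (weight space by weight space, each such space being finite-dimensional once the interacting weights are assigned), and the complete-intersection shape of the answer is structurally transparent rather than absorbed into the computation of the ideal's character. Your weight bookkeeping for the relations ($z$-weight $N$, $q$-weight $m$, $SU(2)_p$ weights $p^{2\ell-m}$ by equivariance of $s_{\rm int}$) agrees with the paper's, and your closing identification of the computation with the specialization $u = z^{N+1}q^{-2}$ of the free character of Proposition~\ref{prop: SU2} is exactly the paper's post-proof remark; note that your substitution is the correct one and fixes a typo there, where the paper writes $u_{\rm free} = q_{\rm int}^{-2}p_{\rm int}^{N+1}$ instead of $q_{\rm int}^{-2}z_{\rm int}^{N+1}$ (compare the quasi-homogeneous remark that follows it, whose substitution $u = z_i^{(N+1)/w_i}q^{-2}$ is consistent with yours).
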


\begin{proof}
We compute this using the description of the cohomology in terms of the chiral Jacobi ring in the proposition above.
We have written the cohomology as a quotient $\sO(\CC[[z_1,z_2]] \tensor V) / \<\Tilde{\partial W}\>$.
Before taking the quotient, the character for $\sO(\CC[[z_1,z_2]] \tensor V)$ contributes
\beqn\label{W1}
\prod_{m \geq 0} \prod_{\ell = 0}^m \frac{1}{1-zq^m p^{2\ell-m}} ,
\eeqn
see the proof of Proposition \ref{prop: SU2}. 

It suffices to compute the character of the subspace $\<\Tilde{\partial W}\>$. 
Note that as a $\sO(\CC[[z_1,z_2]] \tensor V)$-module this subspace is generated by the image of $\sO(\CC[z_1,z_2] \tensor V^\vee [1])$ under the differential $s_{\rm int}$.
Since $s_{\rm int}$ has flavor $U(1)_{z}$-weight $N$, we see that the character of $\<\Tilde{\partial W}\>$ is
\beqn\label{W2}
\prod_{m \geq 0} \prod_{\ell = 0}^m \frac{1}{1-zq^m p^{2\ell-m}} \cdot z^N q^m p^{2\ell -m} .
\eeqn
Taking the difference of (\ref{W1}) and (\ref{W2}) yields the result. 
\end{proof}
The reader will recognize the method above as a close analogue of the computation of the Hilbert series of a complete intersection. Indeed, nondegenerate superpotentials are defined by the criterion that the critical points of~$W$ be isolated---in other words, that the critical locus is a complete intersection. Based on our earlier results, however, it should be clear that neither our method nor our identification with the holomorphic $\sigma$-model on the derived critical locus depends on the nondegeneracy of~$W$. This should lead to interesting relations with index computations for four-dimensional $\N=1$ $\sigma$-models, as well as possibly to theorems relating elliptic genera for $\sigma$-models and Landau--Ginzburg theories in two dimensions; we look forward to exploring this in future work.

There is another, efficient way of arriving at the formula for the character in the presence of a superpotential. 
In the free theory, we computed the character $\chi(q_1,q_2 ; z ;u)$ of the graded vector space of holomorphic local operators with respect to the action of the group
\[
U(2)_{q_1,q_2} \times U(1)_z \times U(1)_{u}
\]
where the first factor comes from the natural action on $\CC^2$ by rotations, the second factor is the flavor symmetry, and the last factor encodes the ghost number grading. 
Alternatively, we can replace $U(2)$ by its $2$-fold cover $SU(2)_p \times U(1)_q$ as in~\S\ref{sec: SU2}.

Notice that the differential $s_{\rm int}$ does not preserve the full symmetry group $SU(2)_p \times U(1)_q \times U(1)_u$. 
First off, $s_{\rm int}$ depends on the choice of a volume form, which we take to be the standard one. 
Though this preserves $SU(2)_p$, the operator $s_{\rm int}$ has $U(1)_q$ weight $-2$. 
Secondly, in the case that $W$ is a homogenous polynomial of degree $N+1$, we see that $s_{\rm int}$ has $U(1)_z$ weight $N+1$. 
Finally, $s_{\rm int}$ has $\ZZ$-ghost number $-1$, so it has weight $-1$ under $U(1)_u$.

Putting all this together, we see that only a total $SU(2)_p \times U(1) \times U(1)$ symmetry survives in the case that $W$ is homogenous of degree $N+1$, and in terms of the variables of the free theory we can obtain the interacting character by substituting
\[
\chi (p_{\rm free} ,q_{\rm free} ; z_{\rm free} ; u_{\rm free}) \to \chi_W(p_{\rm int} , q_{\rm int} ; z_{\rm int}) 
\]
\[
p_{\rm free} = p_{\rm int} \; \;  , \;\; q_{\rm free} = q_{\rm int} \; \;  , \;\;  z_{\rm free} = z_{\rm int} \;\;  , \;\; u_{\rm free} = q_{\rm int}^{-2} p_{\rm int}^{N+1} .
\]
One can check immediately that this is compatible with the calculation above. This is the analogue, in two complex dimensions, of the method used for elliptic genera of Landau--Ginzburg models in~\cite{Witten-LG} and (in closer notation) in~\cite{GukovGadde}; the corresponding spectral sequence was studied in~\cite{BPS-SS}.

\begin{rmk}
As in Remark~\ref{rmk: flavor}, in the case that $V = \CC^n$, there is an enhancement of the character where we weight each of the flavor directions with its own fugacity $z_i$, $i=1,\ldots,n$.
The free character is given as a product $\prod_{i=1}^n \chi(p,q ; z_i ; u)$. 
In the case of a potential which is non-degenerate and quasi-homogenous, one has the relation
\[
W(\lambda^{w_1} x_1, \ldots, \lambda^{w_n} x_n) = \lambda^{N} W(x_1,\ldots x_n);
\]
we then arrive at the character by substituting $u = z_i^{(N+1)/w_i} q^{-2}$ in the $i$-th term of the product, obtaining $\prod_{i=1}^N \chi_W(p,q ; z_i, u = z_i^{(N+1)/w_i}q^{-2})$. 
\end{rmk}
 
\section{Holomorphic flavor symmetry}
\label{sec:symmetry}

Consider the $\N=1$ chiral multiplet with matter fields valued in the vector space $V$. 
There is a natural flavor symmetry on the theory by the Lie algebra $\fg = \lie{gl}(V)$ that acts globally.
In fact, this symmetry becomes \emph{local} in the holomorphic twist: it is clear that the action is invariant under any local transformation that depends holomorphically on the spacetime. 
We therefore are interested in the infinite-dimensional symmetry algebra
\deq{
  \sO^\text{hol}_X \otimes_\C \fg ,
} 
where $X$ is the complex manifold on which the $\beta\gamma$ system has been placed.
For the most part, in this section we will take $X = \C^d \setminus 0$, and of course will mostly be interested in the case $d = 2$. 

In the case of a general field theory on $\RR^n$, we have explained how the operators restricted to spheres in $\RR^n \setminus 0$ are endowed with an algebra structure via the OPE in the radial direction. 
A similar result holds true for the OPE of the current algebra~$\sO^\text{hol}_X \otimes_\C \fg$, or really its derived replacement
\begin{equation}
  \Omega^{0,*}_X \tensor_\CC \fg,
\end{equation}
where the differential is the $\bar\partial$ operator on the Dolbeault complex---making it into a resolution of the sheaf of holomorphic functions---and the bracket comes from the bracket on~$\fg$ together with the product structure on the Dolbeault complex. 
One can think of this as a free resolution of the above sheaf of Lie algebras, where ``free'' refers to free modules over functions on spacetime.

In \cite{GwilliamWilliams}, a {\em factorization algebra} is associated to this current algebra, which we call $\Cur(\fg)$, on any complex manifold $X$.
In particular, it exists on the complex manifold $X = \CC^d \setminus 0$. 
The factorization product in the radial direction produces a dg associative algebra from $\Cur (\fg)$, which in the case $d = 1$ is isomorphic to the enveloping algebra of the ordinary Kac--Moody algebra $\sO(\CC^\times) \tensor \fg$. 
For $d > 1$, we thus obtain higher dimensional versions of the Kac--Moody algebra \cite{FHK, GwilliamWilliams}. 

Of course, when acting on a field theory, corrections to the current algebra may arise when the symmetry is quantized. 
For the radial operators, this manifests as a central extension of the classical current algebra. 
When $d=1$, this is the usual central extension of the Kac--Moody vertex algebra, but for general $d$ central extensions are labeled by elements of the space
\[
\Sym^{d+1} (\fgl(n))^{{\rm GL}(n)} .
\]
As described in \cite{FHK, GwilliamWilliams}, such an element defines an $L_\infty$ central extension 
\[
0 \to \CC \to \Tilde{\fg}^*_\theta \to A^*_d \tensor \fg \to 0
\]
where $A_d$ is a certain algebraic dg module for the punctured disk $D^d \setminus 0$.
If $\theta$ is such an element, we will denote the centrally extended current algebra by $\Cur_\theta(\fg)$ which is explicitly given by the $A_\infty$-enveloping algebra of $\Tilde{\fg}_\theta$; see \cite{GwilliamWilliams}.

Note that when $d=1$, the ordinary Kac--Moody extension simply arises as a central extension of a Lie algebra. 
In higher dimensions, the term $\theta$ deforms the classical Lie algebra of symmetries to an $L_\infty$ algebra with nontrivial higher operation in degree $(d+1)$. 
This central extension is controlled by a holomorphic analog of the Konishi anomaly.
We will see the explicit instance of this in the case $d=2$ below. 

Finally, for a general field theory we have described how the radial operators act on the local operators. 
This means that for a theory with a classical current symmetry by $\Cur(\fg)$, like the $\beta\gamma$ system, the quantized local observables will be a module for the deformed current algebra $\Cur_\theta(\fg)$.
This phenomena is familiar in two-dimensional conformal field theories: the naive action by a finite-dimensional Lie algebra on the local observables is promoted to an action by an infinite-dimensional current algebra. We wish to emphasize that this is a general feature of holomorphic theories in any complex dimension; essentially, this is because derivatives in the action are the obstruction to global symmetries being local, and a subset of the derivative operators become nullhomotopic and disappear from the action in the holomorphic twist.

\subsection{Free field realization}
\label{sec: ff}

Based on the work \cite{GwilliamWilliams}, we spell out how the current algebra outlined above witnesses a local enhancement of the flavor symmetry found in the holomorphic twist of the $4d$ $\N=1$ chiral multiplet. 
Following this, we specialize to the complex surface $\CC^2 \setminus 0$ and extract the algebra of $S^3$ operators and its action on the local holomorphic operators of the $\beta\gamma$ system on $\CC^2$.

On any complex manifold $X$, one has the sheaf of commutative dg algebras $\Omega^{0,*}(X)$ which is a fine resolution of the sheaf holomorphic functions on $X$. 
Further, we can tensor with $\fgl(n)$ to obtain a sheaf of dg Lie algebras $\Omega^{0,*}(X) \tensor \fgl(n)$. 
This sheaf comprises the linear generators of the current algebra described above. 

Let's specialize to the case $\dim_\CC X = 2$.
If $V$ is an complex $n$-dimensional vector space, we can explicitly describe the symmetry of the current algebra by coupling the fields $\Omega^{0,*}(X) \tensor \fgl(n)$ as background gauge fields of the free $\beta\gamma$ system with values in $V$. 
Indeed, if $\alpha \in \Omega^{0,*}(X) \tensor \fgl(n)$, then the action functional of the $\beta\gamma$ system is deformed by the term
\[
\int_X \<\beta, \alpha \cdot \gamma\>_V .
\]
Here, $\alpha \cdot \gamma$ extends the natural action of $\fgl(n)$ on $V$ together with the wedge product of Dolbeault forms.
Also, as usual, $\<-,-\>$ denotes the linear pairing between $V$ and its dual.

Writing out the full action, we see that this is nothing but the $\beta\gamma$ system where we have deformed the $\dbar$ operator to $\dbar + \alpha$. 
We can interpret this as the induced deformation on the associated bundle obtained from deforming the trivial principal holomorphic $G$-bundle on $X$. 
Without much more work, one can study deformations of non-trivial holomorphic bundles as well, but it will play no role for us. 

The machinery of \cite{CG2} associates a factorization algebra to any QFT. 
There is also a factorization algebra associated to the symmetries of a QFT, which in this example is the current factorization algebra $\Cur(\fg)$.
To an open set $U \subset X$ it assigns the cochain complex
\begin{equation}
  \label{eq:CurG}
\clieu_*\left(\Omega^{0,*}_c (U , \fg)\right) = \left(\Sym\left(\Omega^{0,*}_c(U, \fg)[1]\right) , \kappa \right)
\end{equation}
where $\kappa$ is the Chevalley-Eilenberg differential for the dg Lie algebra $\Omega^{0,*}_c(U, \fg)$. 
In \cite{CG2} a version of Noether's theorem for factorization algebras is formulated, which from the classical setup above produces a map of factorization algebras from $\Cur(\fg)$ to the factorization algebra of observables of the $\beta\gamma$ system. 
Below, we focus on the value of the factorization algebra on punctured affine space $\CC^d \setminus 0$, and what this map of factorization algebras tells us about the symmetries of the holomorphic local operators.

\subsection{A model for $\CC^2 \setminus 0$}

We now specialize to the case $X = \CC^2 \setminus 0$. 
It's possible to choose a model for the resolution of holomorphic functions on $\CC^d \setminus 0$ that turn out to be convenient formulating the above structures algebraically.
In the case $X = \C^d \setminus 0$, a good one was considered in~\cite{FHK}.
It is constructed as follows: First off, let 
\deq{
  R = \C[z_1,\bar{z}_1,\ldots,z_d,\bar{z}_d]\left[\frac{1}{z\bar{z}}\right]
}
be functions on the punctured affine space, and consider $\Tilde{R}^*$ the free graded-commutative $R$ algebra generated by~$d\bar{z}_i$ in degree $1$.
We take $A^*_d$ to be the graded subalgebra of $\Tilde{R}$ consisting of elements satisfying the following two conditions:
\begin{itemize}
  \item[(a)] \label{conditiona} The overall $\bar{z}$-degree of all elements is zero, and
  \item[(b)] The contraction with the Euler vector field
    \deq{
      \eta = \bar{z}_i \frac{\partial}{\partial \bar{z}_i}
    }
    vanishes.
\end{itemize}
Letting $\xi = z_i \bar{z}_i$ be the squared radius, the result after the first step consists of elements in degree $k$ of the form
\deq{
  f_K d\bar{z}^K,  \qquad f_K \in \frac{1}{\xi^k}\C[z_1,\ldots,z_d] \left[ \frac{\bar{z}_1}{\xi}, \ldots, \frac{\bar{z}_d}{\xi} \right] .
}
Here $K \subseteq \{1,\ldots, d\}$ is a multi-index, with $k = \# K$. 
The Euler vector field condition means that
\deq{
  \sum_K \sum_{i \in K}  \pm f_K \bar{z}_i d\bar{z}^{K\setminus i} = 0,
}
where $\pm$ is the parity of the number of elements of~$K$ preceding~$i$. In particular, this means that our algebra is only nonzero in degrees between zero and~$(d-1)$. 
For the sake of brevity, we will define the subalgebra
\deq{
  R_d = \C[z_1,\ldots,z_d] \left[  \frac{\bar{z}_1}{\xi}, \ldots, \frac{\bar{z}_d}{\xi} \right] \subset R 
}
of elements which satisfy condition (a) above.
Note that $R_d$ is generically not a polynomial algebra, since its generators satisfy the relation
\deq{
  z_1 \frac{\bar{z}_1}{\xi} + \cdots + z_d \frac{\bar{z}_d}{\xi} = 1.
  }
For instance, when $d=2$, it is isomorphic to the quotient algebra
  \deq{
    \C[a,b,c,d]/\langle ac + bd = 1 \rangle,
  }
which we can think of as a quadric in a weighted projective space.

\begin{eg}
  Let $d=1$. Then $\bar{z}/\xi = z^{-1}$, so that the algebra reduces to 
  \deq{
    R_1 = \C[z,z^{-1}]
  }
  concentrated in degree zero with zero differential. This recovers the usual story of Kac--Moody symmetry for theories on the punctured complex plane.
\end{eg}

Let's now consider the case $d=2$ in detail. 
The algebra $A_2^*$ is supported only in degrees zero and one; in degree one, it consists of elements of the form 
\deq{
    f_1\, d\bar{z}_1 + f_2\, d\bar{z}_2,
  }
subject to the condition that
\deq[eq:EVF2]{
    f_1 \bar{z}_1 + f_2 \bar{z}_2 = 0.
}
The differential then maps such an element to
  \deq[eq:dbar2]{
    \left( - \pdv{f_1}{\bar{z}_2} + \pdv{f_2}{\bar{z}_1} \right) d\bar{z}_1 \wedge d\bar{z}_2,
  }
which must be zero for consistency after the Euler vector field condition is imposed. But that condition~\eqref{eq:EVF2} just means that each monomial term in $f_1$ corresponds to another monomial term in~$f_2$, of the form
  \deq{
    f_1 \ni \frac{\bar{z}_1^a \bar{z}_2^{b+1}}{\xi^{a+b+2}} \quad \Longleftrightarrow \quad - f_2 \ni \frac{\bar{z}_1^{a+1} \bar{z}_2^b}{\xi^{a+b+2}}.
  }
  More briefly, we can write
  \deq{
    A^1 = \omega \cdot R_2, 
  }
  where
\deq{
  \omega = \frac{ \bar{z}_2 \, d\bar{z}_1 - \bar{z}_1 \, d\bar{z}_2 } {\xi^2} 
}
is the Bochner--Martinelli kernel in complex dimension two.
It is then easy to verify by direct computation that
  \deq{
    \pdv{f_2}{\bar{z}_1} = 
    \pdv{f_1}{\bar{z}_2} = \frac{\bar{z}_1^a \bar{z}_2^b}{\xi^{a+b+3}} \left[ (b+1) z_1 \bar{z}_1 - (a+1) z_2 \bar{z}_2 \right],
  }
so that~\eqref{eq:dbar2} vanishes. 
It remains to compute the image of the $\bar\partial$ differential inside of the degree-one piece of the algebra. 
Similar to above, we can consider the action of the differential on an allowed monomial in degree zero; this is mapped to
\deq{
    \frac{\bar{z}_1^a \bar{z}_2^b}{\xi^{a+b}}
    \mapsto
    d\bar{z}_1 \, \frac{\bar{z}_1^{a-1} \bar{z}_2^b}{\xi^{a+b+1}}\left[ a z_2 \bar{z}_2 - b z_1 \bar{z}_1 \right]
    +
    d\bar{z}_2\, \frac{\bar{z}_1^a \bar{z}_2^{b-1}}{\xi^{a+b+1}} \left[ b z_1 \bar{z}_1 - a z_2 \bar{z}_2 \right].
 }
  This just has the effect of setting the generators $z_1$ and~$z_2$ to zero in the cohomology of~$\bar\partial$, so that we can identify the cohomology $H^1(A_2^*)$ with the space of elements
  \deq{
h \cdot \omega, 
    \qquad
    h \in \C\left[ \frac{\bar{z}_1}{\xi}, \frac{\bar{z}_2}{\xi} \right].
  }
Note that 
\[
L_{\partial / \partial z_i} \omega = \frac{\zbar_i}{\xi} \omega
\]
where $L_{(-)}$ is the Lie derivative. 
Thus, we can equivalently write the first cohomology as the free $\CC[\partial_{z_1}, \partial_{z_2}]$-module generated by $\omega$, which we can further identify with the dual of power series in two variables 
\[
\CC[z_1,z_2]^\vee = \CC[\partial_{z_1}, \partial_{z_2}] \cdot \omega .
\]
  
This also makes the computation in degree zero easy. 
In that degree, the Euler vector field condition becomes vacuous, so that $H^0(A_2^*)$ just consists of elements 
\deq{
    f \in R_2. 
  }
The computation above further shows that the second set of generators fail to be closed, so that the kernel is precisely polynomials in~$z_1$ and~$z_2$.

\subsubsection{Higher central extensions}

With the dg algebra $A_2^*$ understood, we can now recall the definition of the $d=2$ higher Kac--Moody algebra as in \cite{FHK, GwilliamWilliams}.
We start with the dg Lie algebra obtained from tensoring the ordinary Lie algebra $\fg$ with $A_2^*$. 
We can think of this as an $L_\infty$ algebra with operations: $\ell_1 (a \tensor X) = (\dbar a) \tensor X $ and $\ell_2 (a \tensor X, b \tensor Y) = ab \tensor [X,Y]$. 

For any $\theta \in \Sym^3(\fg^\vee)^\fg$ the dg Lie algebra $A_2^* \tensor \fg$ has an $L_\infty$ algebra central extension
\beqn\label{ext}
0 \to \CC \cdot K \to \Tilde{\fg}^*_{\theta} \to A^* \tensor \fg \to 0
\eeqn
where the $1$-ary and $2$-ary operations are 
\[
\ell_1 (a \tensor X) = (\dbar a \tensor X)  \;\; , \;\; \ell_2 (a \tensor X, b \tensor Y) = ab \tensor [X,Y] \;\; , \;\; \ell_1(K) = \ell_1 (K, a \tensor X) = 0 
\]
and the $3$-ary operation is
\[
\ell_3 (a \tensor X, b \tensor Y, c \tensor Z) = \theta(X,Y,Z) \oint_{S^3} a \partial b \partial c
\]
where $a,b,c \in A_2^*$ and $X,Y, Z \in \fg$.
Here $\oint_{S^3}$ denotes the higher residue pairing, which agrees with the contour integration of a $(2,1)$ differential form along the $3$-sphere.

\subsection{The symmetry multiplet}

Just as the $\beta\gamma$ system on $\CC^2$ arises as the minimal twist of the $4d$ $\N=1$ chiral multiplet, we can also realize the aforementioned current algebra from a of a certain $\N=1$ multiplet.
In~\S\ref{ssec:gauge}, we argued that the holomorphic twist of the $\cN=1$ vector multiplet returns the Dolbeault complex on $\CC^2$ with values in the gauge Lie algebra.

At the level of the BV theory, we have seen that the holomorphic twist of the $\cN=1$ supersymmetric gauge theory is equivalent to holomorphic BF theory with fields
\[
A \in \Omega^{0,*}(\CC^2, \fg)[1] \;\; , \;\; B \in \Omega^{2,*}(\CC^2, \fg^\vee)  .
\]

Using the BV formalism and a bit of trickery, we can use this computation to arrive at an understanding of the holomorphically twisted current multiplet. 
In general, in the BV formalism, antifields generate equations of motion for fields under the action of the BV differential:
\begin{equation}
  s \phi^* = \{ S, \phi^* \} = \frac{\delta S}{\delta \phi}.
\end{equation}
Since one knows that the coupling between the gauge field and the corresponding symmetry current takes the form $A_\mu J^\mu$ in the untwisted theory, it is apparent that one should identify the twist of the current multiplet with the twist of the antifield multiplet, up to a shift by one originating with the homological degree of the bracket. 
That is, we should take 
\deq{
  j \in \Omega^{2,*}(\C^2, \lie{g}^\vee)[1]
}
as the definition of the twisted current multiplet. It is easy to see that a pairing of the form $\langle j,\sA \rangle$ is well-defined and can appear in the action. 

We then find that operators in the fields $j$, so functions on $\Omega^{2,*}(\C^2, \lie{g}^\vee)[1]$, is precisely the current algebra $\Cur(\fg)$ that we consider in this section, as defined in~\eqref{eq:CurG}.
In summary, the (non-centrally extended version of the) current algebra $\Cur(\fg)$ arises as the holomorphic twist of a shift of the anti-field piece of the $\N=1$ vector multiplet which describes flavor symmetries of the untwisted theory.

\begin{rmk}
The same analysis can be done to compute the holomorphic and topological twists of symmetry multiplets in $\N=2$ and~$\N=4$ supersymmetry. 
In general they are given by deformations of the holomorphic current algebra we have just introduced. 
For example, in the holomorphic twist, the flavor multiplet of $\N=2$ supersymmetry is of the form
\[
\Omega^{0,*}(\CC^2) \tensor \fg [\epsilon]
\]
where $\epsilon$ is a formal variable of cohomological degree $+1$.
\end{rmk}

\subsection{Local module}

Finally, we argue why the local holomorphic operators of the twist of $\cN=1$ form a representation for the current algebra we have just introduced. 
As above, we look at the $\beta\gamma$ system with values in the $\fg$-representation $V$. 
This is not a representation in the ordinary since; we have already seen that $\fg_{\theta}^*$ is most naturally exhibited as an $L_\infty$ algebra.
Correspondingly, the local operators form a $L_\infty$-module for this $L_\infty$ algebra. 
\footnote{By a $\fg$-$L_\infty$-module $V$, we mean a map of $L_\infty$ algebras $\rho_V : \fg \rightsquigarrow \End(V)$}.

The action of the higher Kac--Moody current is through the higher ``modes" algebra of the $\beta\gamma$ system. 
This algebra is obtained from placing the $\beta\gamma$ system on $\CC^2 \setminus 0$ and projecting out the radial direction. 
So, as a vector space, it consists of operators supported on the $3$-sphere $S^3$. 
As we've already discussed, the algebra arising from the OPE of sphere operators is a dg analog of the Weyl algebra defined as follows. 
Again, take the algebra $A_2^*$ and consider the abelian dg Lie algebra
\[
A^*_2 \tensor V^\vee [1] \oplus A_2^* \tensor V .
\]
The pairing between $V$ and $V^\vee$ together with the residue pairing between $A_2^*$ and itself defines a central extension of dg Lie algebras
\beqn\label{sesh}
0 \to \CC \cdot K \to \sH_V \to A^*_2 \tensor V^\vee [1] \oplus A_2^* \tensor V .
\eeqn
The differential on $\sH_V$ comes from the differential on $A_2^*$ and the bracket is
\[
[a \tensor v^\vee, b \tensor v] = \hbar \<v, v^\vee\> \oint_{S^3} a \wedge b \d^d z 
\]
for $a,b \in A_2^*$ and $v \in V, v^\vee \in V^\vee$.
The dg Weyl algebra is the enveloping algebra $U(\sH_V)$, which we refer to as the $S^3$-modes algebra.

Consider the dg algebra $A^*$ modeling derived algebraic functions on $\CC^2 \setminus 0$ that we introduced above.
Define the $A^*$-module of {\em negative modes}
\[
A_{2,-} = H^{1}(A^*) .
\] 
Since the cohomology of $A_2^*$ is concentrated in degrees $0,1$, there is a quotient map of dg $A^*$-modules $A_2^* \to A_{2,-}[-1]$. 
Define the dg ideal of {\em positive modes} $A^*_{+} = \ker\left(A_2^* \to A_{2,-}[-1]\right)$, so that there is a short exact sequence of dg vector spaces
\[
A_{2,+}^* \to A_2^* \to A_{2,-}[-1] .
\]

The holomorphic local operators arise as the vacuum Verma module associated to the short exact sequence above.
Indeed, if we replace $A_2^*$ by $A_{2,+}$ in (\ref{sesh}) we obtain an abelian dg Lie algebra $\sH_{V,+}$. 
It's abelian since the residue pairing vanishes when restricted to the positive modes.
There is an isomorphism of vector spaces
\[
\Obs^\text{hol}_{0} \cong U(\sH_V) \tensor_{U(\sH_{V,+})} \CC_{K=1}
\]
thus endowing the holomorphic local operators with the structure of a dg module for $\sH_V$. 

The action of the higher Kac--Moody algebra factors through the $S^3$-modes algebra.
At the quantum level, the current algebra is built from the $L_\infty$ algebra $\Tilde{\fg}^*_{\theta_V}$ as in $(\ref{ext})$ where $\theta_V$ is determined by a certain $1$-loop anomaly analogous to the Konishi anomaly for $\N=1$ SUSY.
The anomaly arises from trying to lift the free field realization of~\S\ref{sec: ff} to the quantum level. 
For the $\beta\gamma$ system with values in $V$, it is shown in~\cite[Corollary 3.13]{GwilliamWilliams} that $\theta_V$ is a multiple of the $\fg$-invariant cubic functional
\[
X, Y, Z \in \Sym^3(\fg) \to \Tr_V(XYZ) .
\]
For this $\theta_V$ one can construct an $L_\infty$-morphism from $\Tilde{\fg}^*_{\theta_V}$ to the algebra of $S^3$-modes of the $\beta\gamma$ system
\[
\Tilde{\fg}^*_{\theta_V} \rightsquigarrow U(\sH_V) 
\] 
which we interpret as a higher dimensional analog of the free field realization in CFT. 
For an explicit formula we refer to~\cite[Proposition 3.14]{GwilliamWilliams}.
Since $\sH_V$ acts on $\Obs^\text{hol}_0$ by definition, there is an induced representation of the $L_\infty$ algebra $\Tilde{\fg}_{\theta_V}^*$ on $\Obs_0^\text{hol}$ through the cited $L_\infty$ map. 

One way to understand this free field realization more explicitly is to construct a version of holomorphic local operators in the current algebra itself. 
Just as in the $\beta\gamma$ case, we can define the $L_\infty$ vacuum module
\[
{\rm Vac}_{\theta_V} (\fg) = U\left(\Tilde{\fg}^*_\theta\right) \tensor_{U(A_+ \tensor \fg \oplus \CC \cdot K)} \CC_{K = 1}
\]
where $\CC_{K = 1}$ is the module for which $K$ acts by $1$. 
We call this $\Tilde{\fg}_{\theta_V}$-module the vacuum module at level $1$.

There is an embedding of the higher Kac--Moody vacuum module inside the holomorphic operators of the $\beta\gamma$ system described as follows. 
First off, as a vector space, we can identify ${\rm Vac}_{\theta_V} (\fg)$ with
\[
\Sym(\CC[z_1,z_2]^\vee \tensor \fg [-1]) .
\]
If $X \in \fg$ we write $X_{n_1, n_2}$ for the linear element $(z_1^{n_1} z_2^{n_2})^\vee \tensor X \in {\rm Vac}_{\theta_V}(\fg)$. 
The embedding
\[
{\rm Vac}_{\theta_V} (\fg) \to \Obs_0^\text{hol}
\]
is defined on linear elements by
\[
X_{n_1,n_2} \mapsto \sum_{k_1, k_2 \geq 0} \sum_{i,j} \bbeta_{k_1, k_2}^j \rho(X)^i_j \bgamma_{n_1-k_1, n_2-k_2 ; i}
\]
where $\rho : \fg \to \End(V)$ denotes the representation. 
This map is compatible with the free field realization above in the sense that it is a map of $\Tilde{\fg}_{\theta_V}^*$-modules. 
In future work, we aim to use the description of the holomorphic local operators as a module for the higher current algebra to decompose the local character into characters of the current algebra.

\section{Dimensional reduction}
\label{sec:dimred}

In this section, we consider the dimensional reduction of the $\beta\gamma$ system to two dimensions. Upon reduction, four-dimensional minimal supersymmetry becomes $\N=(2,2)$ supersymmetry in two dimensions; one can therefore obtain the holomorphic twist of an $\N=(2,2)$ theory with $F$-term interactions (i.e., a Landau--Ginzburg theory) by dimensionally reducing the $\beta\gamma$ system on~$\C^2$ along~$\C$, or considering the $\beta\gamma$ system on a complex four-manifold which is a product of two Riemann surfaces.

More generally, we can consider the dimensional reduction of the $\beta\gamma$ system on flat space, along a plane that may not coincide with a complex subspace of~$\C^2$. Under such a reduction, there is an inclusion map of nilpotence varieties between that of the higher-dimensional and that  of the lower-dimensional theory. However, this map is \emph{not} a map of stratified spaces; the lower-dimensional nilpotence variety will, in general, be stratified more finely than the higher-dimensional one, so that (for example) holomorphic twists in four-dimensional minimal supersymmetry  may reduce to either holomorphic or $B$-type topological twists of two-dimensional $\N=(2,2)$ supersymmetry.

In terms of the complexification $V_\C \cong \C^4$, a complex structure on~$V$ corresponds to a maximal isotropic subspace with respect to the standard bilinear (complex) inner product on~$\C^4$, which can be thought of as the space of homotopically trivial translation operators in the holomorphic twist. A dimensional reduction, on the other hand, corresponds to the choice of a two-dimensional subspace of~$V$, corresponding to the translations that are set to zero. The inner product on the complexification of this space will always be nondegenerate. As such, the intersection of these spaces may have dimension either zero (generically) or one, so that their span (the space of all translations which act trivially in the  twisted, dimensionally reduced theory) has dimension either four (generically) or three. The  former case is a topological twist of $B$-type, whereas the  latter is the two-dimensional holomorphic twist.

In the following subsections, we address each of these constructions in turn, beginning with a spacetime that is the product of two complex curves. 

\subsection{Product of Riemann surfaces}
Let~$\Sigma_1$ and~$\Sigma_2$ be two Riemann surfaces, and consider the $\beta\gamma$ system in two complex dimensions on the product $\Sigma_1 \times \Sigma_2$. 
There is a slight generalization of the $\beta\gamma$ system, as we've introduced it, that will be relevant in this section:
we can replace functions on the complex surface by sections of an arbitrary holomorphic vector bundle, and require that the fields live in the Dolbeault resolution of holomorphic sections of that bundle. 
For the case at hand, we take this bundle to be the pullback of a holomorphic bundle~$\sV$ on~$\Sigma_2$ along the obvious projection $\pi_2 : \Sigma_1 \times \Sigma_2 \to \Sigma_2$. 
The BV fields of the $\beta\gamma$ system on $\Sigma_1 \times \Sigma_2$ with values in the bundle $\pi_2^* \sV$ are then of the form
\beqn\label{bgv}
(\gamma, \beta) \in \Omega^{0,*} (\Sigma_1 \times \Sigma_2, \pi_2^* \sV) \oplus \Omega^{2,*}(\Sigma_1 \times \Sigma_2 , \pi_2^* \sV^\vee)[1]
\eeqn
with free action functional $\int \<\beta, \dbar \gamma\>_{\sV}$.\footnote{The pairing $\<-,-\>_{\sV}$ is the fiberwise linear pairing between $\sV$ and $\sV^\vee$.}

\begin{prop}\label{prop: dimred}
The compactification along~$\Sigma_2$ of the two-dimensional $\beta\gamma$ system on the complex surface $\Sigma_1 \times \Sigma_2$ with values in $\pi_2^* \sV$ is equivalent to the one-dimensional $\beta\gamma$ system on~$\Sigma_1$, with values in the graded vector space
\deq{
H_{\dbar}^*(\Sigma_2,\sV).
}
\end{prop}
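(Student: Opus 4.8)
The plan is to compute the compactification as a fiberwise pushforward along the projection $\pi_1 : \Sigma_1 \times \Sigma_2 \to \Sigma_1$, reducing the two-dimensional $\beta\gamma$ system to a one-dimensional theory by integrating out the $\Sigma_2$ directions. The key observation is that the Dolbeault complex on a product factors as a (completed) tensor product,
\deq{
\Omega^{0,*}(\Sigma_1 \times \Sigma_2, \pi_2^* \sV) \simeq \Omega^{0,*}(\Sigma_1) \hattensor \Omega^{0,*}(\Sigma_2, \sV),
}
and the total Dolbeault differential splits as $\dbar_{\Sigma_1} + \dbar_{\Sigma_2}$. First I would treat $\dbar_{\Sigma_2}$ as the internal differential and pass to its cohomology, using that $\Omega^{0,*}(\Sigma_2,\sV)$ is a resolution computing $H^*_{\dbar}(\Sigma_2,\sV)$. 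Since $\Sigma_2$ is a compact Riemann surface, this cohomology is finite-dimensional and concentrated in degrees $0$ and $1$; by the homotopy-transfer/formality argument for the Dolbeault complex, the complex $\Omega^{0,*}(\Sigma_1 \times \Sigma_2, \pi_2^*\sV)$ is quasi-isomorphic to $\Omega^{0,*}(\Sigma_1) \tensor H^*_{\dbar}(\Sigma_2,\sV)$, with residual differential $\dbar_{\Sigma_1}$.

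Next I would carry out the same reduction on the $\beta$ field, whose fields live in $\Omega^{2,*}(\Sigma_1\times\Sigma_2, \pi_2^*\sV^\vee)[1]$. Here the crucial point is the behavior of the canonical bundle: for a product, $K_{\Sigma_1\times\Sigma_2} \cong \pi_1^* K_{\Sigma_1} \tensor \pi_2^* K_{\Sigma_2}$, so that the $\Sigma_2$-part of the antifield multiplet is valued in $K_{\Sigma_2} \tensor \sV^\vee$. Applying Serre duality on~$\Sigma_2$,
\deq{
H^*_{\dbar}(\Sigma_2, K_{\Sigma_2}\tensor \sV^\vee) \cong H^{1-*}_{\dbar}(\Sigma_2,\sV)^\vee,
}
which identifies the reduced $\beta$ field with the dual of the graded vector space $H^*_{\dbar}(\Sigma_2,\sV)$, exactly matching the antifield content of a one-dimensional $\beta\gamma$ system valued in $H^*_{\dbar}(\Sigma_2,\sV)$. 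I would then check that the BV pairing $\int_{\Sigma_1\times\Sigma_2}\<\beta,\gamma\>$ descends, under this identification, to the one-dimensional BV pairing—this is precisely the statement that fiber integration along~$\Sigma_2$ intertwines the two symplectic structures, with Serre duality providing the nondegenerate fiberwise pairing.

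The main obstacle I expect is making the homotopy transfer rigorous and compatible with the $(-1)$-shifted symplectic structure, rather than merely on the level of underlying complexes. One must verify that the quasi-isomorphism can be chosen symplectically—i.e.\ that passing to $\dbar_{\Sigma_2}$-cohomology yields an equivalence of \emph{BV theories}, not just of cochain complexes of fields. For the free theory this follows from the finite-dimensionality and formality of $\Omega^{0,*}(\Sigma_2,\sV)$ together with the harmonic representative splitting (a choice of Hermitian metric and Hodge decomposition on~$\Sigma_2$), which produces a deformation retract whose pieces respect the Serre-duality pairing; one then invokes the standard fact that BV theories are preserved under symplectic homotopy equivalence. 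A secondary point worth spelling out is the analytic subtlety in the tensor-product decomposition and in interchanging fiber integration with the residual $\dbar_{\Sigma_1}$, but this is routine once one works with the completed bornological tensor product used throughout.
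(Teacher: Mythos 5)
Your proposal is correct and follows essentially the same route as the paper's proof: factor the Dolbeault complex on the product, invoke Dolbeault formality of the compact Riemann surface $\Sigma_2$ to replace $\Omega^{0,*}(\Sigma_2,\sV)$ by its cohomology, apply Serre duality on~$\Sigma_2$ to identify the reduced $\beta$ field with the dual of $H^*_{\dbar}(\Sigma_2,\sV)$, and check that the BV pairing descends to integration along $\Sigma_1$ together with the linear pairing. The only difference is one of emphasis: you spell out the homotopy-transfer and symplectic-compatibility details that the paper dismisses as ``completely direct,'' which is a reasonable precaution but not a different argument.
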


\begin{rmk}
As a word of caution, by the ``$\beta\gamma$ system" with values in a graded vector space we allow for the possibility for fields of both even and odd parity.
In one complex dimension, the sectors of odd parity are commonly referred to as ``$bc$ systems" in the literature.
\end{rmk} 

We recognize the $\beta\gamma$ system in the proposition as the holomorphic twist of the $(0,2)$-supersymmetric sigma model with values in the graded vector space $H_{\dbar}^*(\Sigma_2,\sV)$.
It can be checked directly that the the twist of the $(0,2)$ sigma model is given by such a $\beta\gamma$ system; see for instance \cite{WittenCDO}.

\begin{proof}[Proof of Proposition \ref{prop: dimred}]
This is completely direct. 
By Dolbeault formality of Riemann surfaces, we can replace the Dolbeault cochain complex $\Omega^{0,*}(\Sigma_2)$ with its cohomology. 
Thus, the complex fields of the $\beta\gamma$ system~\eqref{bgv} are quasi-isomorphic to
\begin{equation}
  \Omega^{0,*}(\Sigma_1) \tensor H_{\dbar}^*(\Sigma_2 , \sV) \oplus \Omega^{1,*}(\Sigma^2) \tensor H_{\dbar}^*(\Sigma_2, K_{\Sigma_2} \tensor \sV^\vee)[1],
\end{equation}
where we have used the fact that $\Omega^{2,*}(\Sigma_1 \times \Sigma_2) \cong \Omega^{1,*}(\Sigma_1) \hattensor \Omega^{1,*}(\Sigma_2)$.\footnote{$\hattensor$ denotes the completed tensor product, which agrees with the ordinary tensor product for finite dimensional vector spaces.}

By Serre duality, $H_{\dbar}^*(\Sigma_2, K_{\Sigma_2} \tensor \sV^\vee) \cong \left(H^*_{\dbar}(\Sigma_2, \sV)\right)^\vee[-1]$, hence the fields can be written as
\[
\Omega^{0,*}(\Sigma_1) \tensor H_{\dbar}^*(\Sigma_2 , \sV) \oplus \Omega^{1,*}(\Sigma_1) \tensor \left(H^*_{\dbar}(\Sigma_2, \sV)\right)^\vee .
\]
These are the fields of the (ordinary) $\beta\gamma$ system on $\Sigma_1$ with values in $H_{\dbar}^*(\Sigma_2 , \sV)$. 
To check that the action functional is the correct one amounts to observing that the induced BV pairing on this space of fields comes from integration along $\Sigma_1$ together with the linear pairing on $H_{\dbar}^*(\Sigma_2 , \sV)$, which is obvious. 
\end{proof}

\subsubsection{The case $\Sigma_2 = T^2$ and dimensional reduction}

Consider the specific case that $\Sigma_2 = T^2$, and $\sV$ is the trivial bundle with fiber $V$.
The compactified theory is equivalent to the $\beta\gamma b c$ system on~$\Sigma_1$, whose fields are
\[
(\gamma_{1}, \beta_{1}) \in \Gamma(\Sigma_1, \sO_{\Sigma_1} \tensor V \oplus K_\Sigma \tensor V^\vee)
\]
and
\[
(c, b) \in \Gamma(\Sigma, \sO_{\Sigma_1} \tensor V [1] \oplus K_{\Sigma_1} \tensor V^\vee[-1])
\]
with action functional
\deq{
S(\gamma_{1}, \beta_{1}, c, b) = \int_{\Sigma_1} \<\beta_{1}, \bar\partial \gamma_{1}\>_{V} + \int_{\Sigma_1}\<b, \bar\partial c\>_V .
}
We can also obtain this by naive dimensional reduction of the $\beta\gamma$ system on~$\C^2$: we simply take all Dolbeault forms to be independent of the $z_2$ coordinate, obtaining
\begin{equation}
  \Omega^{0,*}(\C, V)[d \bar{z}_2] \oplus dz_2\cdot \Omega^{1,*}(\C, V^\vee) [d\bar{z}_2].
\end{equation}
Identifying the antifield of~$\gamma_1 \in \Omega^{0,*}(\C)$ with $\beta_1 \in dz_2 \, d\bar{z}_2 \cdot \Omega^{1,*}(\C)$, and similarly the antifield of $c \in d\bar{z}_2\cdot \Omega^{0,*}(\C)$ with~$b \in dz_2 \cdot \Omega^{1,*}(,\C)$, we recover precisely the $\beta \gamma b c$ system above, with $\C[d\bar{z}_2]$ playing the role of~$H^*_{\bar{\partial}}(T^2)$.

This $\beta\gamma bc$ system is the holomorphic twist of the $(2,2)$-supersymmetric $\sigma$-model in two dimensions.
In other words, compactification of the holomorphic theory along $T^2$ in $\Sigma_1 \times T^2$ is equivalent to dimensional reduction. Furthermore, dimensional reduction commutes with the holomorphic twist. 

\subsubsection{The case $\Sigma_2 = \PP^1$}

Let $\sR$ be a fixed line bundle on $\PP^1$ and consider the following version of the higher dimensional $\beta\gamma$ system on $\CC_z \times \PP^1$ whose fields are
\[
(\gamma, \beta) \in \Omega^{0,*}(\CC_z \times \PP^1 , \pi^* \sR \tensor \ul{V}) \oplus \Omega^{2, *}(\CC_z \times \PP^1 , \pi^* \sR^\vee \tensor \ul{V}^\vee)[1] 
\]
where $\pi : \CC_z \times \PP^1 \to \PP^1$ is the projection, and $\ul{V}$ denotes the trivial bundle with fiber $V$. 
Just as above, it is easy to read off the compactification of this theory along $\PP^1$ using Dolbeault formality (or by specializing Proposition~\ref{prop: dimred}).
The fields are
\begin{equation}
  \begin{aligned}[c]
    \gamma_{1} &\in \Omega^{0,*}(\CC_z) \tensor H^{*}(\PP^1, \sR) \tensor V , \\
    \beta_{1} &\in \Omega^{1,*}(\CC_z) \tensor H^*(\PP^1, K_{\PP^1} \tensor \sR^\vee) \tensor V^\vee [1].
  \end{aligned}
\end{equation}
By Serre duality, this is precisely the one-dimensional $\beta\gamma$ system on $\CC_z$ with values in the vector space $H_{\bar\partial}^*(\PP^1 , \sR) \tensor V$. 
This confirms the results of~\cite{Closset1} at the level of the twist, which is a special case of Proposition~\ref{prop: dimred}.

\subsection{General reduction}
In this section, we return to considering dimensional reduction of the flat $\beta\gamma$ system along a plane that does not necessarily define a complex subspace of~$\C^2$. 
\begin{prop}
The dimensional reduction of the $\beta\gamma$ system on~$\R^4$, with respect to a fixed real subspace $\R^2 \subseteq \R^4$, is a family over one component of the nilpotence variety (which is equivalently the space of complex structures on~$\R^4$). Its fields are
\deq{
  \left[  \Omega^{*,*}(\C), \epar_+ \bar\partial + \epar_- \partial \right] ,
}
so that the spectral sequence from the two-dimensional holomorphic twist to the B-model is nothing other than the Hodge-to-de-Rham spectral sequence. 
\end{prop}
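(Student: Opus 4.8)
The plan is to deduce the statement directly from the family description of the holomorphic twist already obtained, by performing the reduction at the level of the deformed Dolbeault differential. Recall that over the $S_+$ component of the nilpotence variety---a copy of~$\PP^1$ with homogeneous coordinates $(\epar_+ : \epar_-)$, identified with the space of complex structures on~$\R^4$---the twisting differential on the $\beta\gamma$ field $\gamma \in \Omega^{0,*}(\C^2)$ is the deformed operator $\bar\partial_\epar = d\bar z_i(\epar_+ \partial_{\bar z_i} + \epar_- \epsilon_{ij}\partial_{z_j})$. As explained at the start of this section, a dimensional reduction corresponds to a choice of real two-plane $\R^2 \subseteq \R^4$; the quotient $\R^4/\R^2 \cong \C$ carries the surviving complex line, and the whole point is that $\R^2$ need not be a complex subspace with respect to the chosen maximal isotropic~$L$.

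First I would carry out the reduction by imposing translation invariance along~$\R^2$, i.e.\ setting to zero the derivatives in the reduced directions. Taking (without loss of generality) the reduced plane to be the $z_2$-plane, so that $\partial_{z_2} = \partial_{\bar z_2} = 0$, the two surviving terms of $\bar\partial_\epar$ are $\epar_+ d\bar z_1 \partial_{\bar z_1}$ and $-\epar_- d\bar z_2 \partial_{z_1}$. As in the reduction along~$T^2$ treated above (Proposition~\ref{prop: dimred}, via Dolbeault formality), the reduced fields retain the constant cohomology modes of the transverse directions: concretely $\gamma$ becomes $\Omega^{0,*}(\C)\otimes \C[d\bar z_2]$. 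The key step is then to reinterpret the leftover one-form~$d\bar z_2$, inherited from the reduced plane, as the \emph{holomorphic} one-form $dz_1$ on the surviving copy of~$\C$; under this identification $\Omega^{0,*}(\C)\otimes\C[d\bar z_2]$ is precisely $\Omega^{*,*}(\C)$, with the $\C[d\bar z_2]$ factor supplying the holomorphic form degree. Under the same identification the reduced differential becomes $\epar_+ d\bar z_1 \partial_{\bar z_1} + \epar_- dz_1 \partial_{z_1} = \epar_+ \bar\partial + \epar_- \partial$ (the sign of~$\epsilon_{21}$ being absorbed into the identification of~$d\bar z_2$ with~$dz_1$), establishing the stated field content as a family over~$(\epar_+ : \epar_-) \in \PP^1$. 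The antifield multiplet~$\beta \in \Omega^{2,*}(\C^2)$ reduces to the cotangent completion in the same manner and may be suppressed for the spectral-sequence statement.

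It remains to identify the resulting spectral sequence. Filtering $\Omega^{*,*}(\C)$ by holomorphic form degree~$p$ and regarding the $\epar_-$-term as a deformation of the $\epar_- = 0$ theory, the associated spectral sequence has $E_0$-differential $\bar\partial$---so that $E_1$ is the Dolbeault cohomology computing the two-dimensional holomorphic twist---and $E_1$-differential induced by~$\partial$, abutting to the cohomology of $\epar_+\bar\partial + \epar_-\partial \simeq \partial + \bar\partial = d$, i.e.\ de Rham cohomology, which is the B-type topological twist. This is by construction the Hodge-to-de-Rham spectral sequence, confirming the identification announced in~\cite{BPS-SS}.

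The main obstacle is making the central identification precise and invariant rather than coordinate-dependent: one must verify that, because the reduction plane is transverse to but not contained in the maximal isotropic subspace~$L$ determining the complex structure, the transverse one-form from the reduced directions genuinely plays the role of a holomorphic differential on the quotient~$\C$, with the mixing of~$\partial$ and~$\bar\partial$ arising exactly from the failure of $(\R^2)_\C$ to meet~$L$ in a complex line. Tracking this over the whole family $(\epar_+ : \epar_-)$---and checking that no extra terms survive the reduction on the special stratum where $L \cap (\R^2)_\C$ jumps to dimension one and the reduced twist degenerates to the purely holomorphic two-dimensional one---is the delicate part; the component bookkeeping for~$\beta$ and the signs are routine by comparison.
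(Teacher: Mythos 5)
Your proposal is correct and follows essentially the same route as the paper: set the $z_2,\bar z_2$ derivatives to zero in the deformed differential $\bar\partial_\epar$, reinterpret the surviving odd generator $d\bar z_2$ as the holomorphic form $dz_1$ so that $\Omega^{0,*}(\C)[d\bar z_2] \cong \Omega^{*,*}(\C)$ with differential $\epar_+\bar\partial + \epar_-\partial$ (signs absorbed into that identification, exactly as in the paper), and then recognize the filtration by holomorphic form degree as giving the Hodge-to-de-Rham spectral sequence. Your added remarks---spelling out the $E_0$/$E_1$ pages explicitly and flagging the degenerate stratum where $L \cap (\R^2)_\C$ is one-dimensional---are elaborations rather than deviations, and nothing in them is needed to close a gap.
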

\begin{proof}
We consider the fields of the $\beta\gamma$ system, at the point on the nilpotence variety with coordinates $\epar_+,\epar_-$. As we have computed above, the differential of the free theory is then determined by a BV action of the form 
\deq{
  L = \langle \beta, \bar\partial_t \gamma\rangle + L_\text{int},
  \qquad
  \bar\partial_\epar = d\bar{z}_i \left( \epar_+ \pdv{}{\bar{z}_i} + \epar_- \epsilon_{ij} \pdv{}{z_j} \right),
}
so that the complex before dimensional reduction is just the sum of Dolbeault complexes with respect to a deformed complex structure:
\deq[eq:t-deformed]{
\Omega^{0,*}(\C^2, \bar\partial_\epar) \oplus \Omega^{2,*}(\C^2, \bar\partial_\epar )[1].
}
Upon dimensional reduction, we simply take the fields that appear to be independent of~$z_2$ and~$\bar z_2$, and replace corresponding derivatives by zero, so that the differential reduces to
\deq{
  \bar\partial_\epar \rightarrow \epar_+\, d\bar{z}_1 \pdv{}{\bar{z}_1} - \epar_- \, d\bar{z}_2 \pdv{}{z_1}.
}
This means that we can rewrite the fields~\eqref{eq:t-deformed} after dimensional reduction as a sum of  \emph{total} de~Rham complexes of~$\C$, after reinterpreting the odd generator  $d\bar{z}_2$ as~$dz_1$:
\begin{equation}
  \begin{aligned}[c]
    \Omega^{0,*}(\C^2, \bar\partial_\epar)  &\rightarrow \left[ \Omega^{0,*}(\C)[d\bar{z}_2], \bar\partial_\epar \right] \\
    & \cong \left[ \Omega^{*,*}(\C), \epar_+ \bar\partial  - \epar_-  \partial \right].
  \end{aligned}
\end{equation}
In the free theory, there is therefore a spectral sequence from the local  operators in  the holomorphic twist (contributing to the elliptic genus) to those contributing to the $B$-model chiral ring, which naturally  appears from  the family of complexes computing the local operators over the four-dimensional nilpotence variety. It is nothing other than the Hodge-to-de-Rham spectral sequence on~$\C$. A similar spectral sequence passes from the holomorphic twist to the $A$-model chiral ring, but cannot be seen by dimensional reduction from four dimensions; for theories of chiral superfields, this is simply a cancelling differential.
\end{proof}

\printbibliography

\end{document}